\definecolor{darkblue}{rgb}{0.0,0.0,0.6}
\definecolor{red}{rgb}{0.75,0.0,0.0}
\newcommand*\circled[1]{\tikz[baseline=(char.base)]{
            \node[shape=circle,draw,inner sep=0.5pt] (char) {#1};}}
\date{}
\def\Gauss{{\mathrm{N}}}
\def\h{\mathbf{h}}
\def\v{\mathbf{v}}
\def\ind{\mathbbm{1}}
\def\T{\mathrm{\scriptscriptstyle{T}}}
\def\tr{\mathrm{tr}}
\newtheorem{theorem}{Theorem}[section]
\newtheorem{example}[theorem]{Example}
\newtheorem{assumption}[theorem]{Assumption}
\newtheorem{proposition}[theorem]{Proposition}
\newtheorem{rem}{Remark}[section]
\newcommand{\norm}[1]{\left\lVert #1 \right\rVert}
\DeclareMathOperator*{\argmin}{arg\,min}
\DeclareMathOperator*{\argmax}{arg\,max}
\newcommand\ci{\perp\!\!\!\perp}
\newcommand{\blind}{1}
\begin{document}

\def\spacingset#1{\renewcommand{\baselinestretch}%
{#1}\small\normalsize} \spacingset{1}


\if1\blind
{
  \title{\bf Likelihood Based Inference in Fully and Partially Observed Exponential Family Graphical Models with Intractable Normalizing Constants}
  \author{Yujie Chen, Anindya Bhadra and Antik Chakraborty\thanks{Correspondence: 150 N. University St., West Lafayette, IN 47907, USA. Email: antik015@purdue.edu.}\\ Department of Statistics, Purdue University}
  \maketitle
} \fi

\if0\blind
{
  \bigskip
  \bigskip
  \bigskip
  \begin{center}
    {\LARGE\bf \bf Likelihood Based Inference in Fully and Partially Observed Exponential Family Graphical Models with Intractable Normalizing Constants}
\end{center}
  \medskip
} \fi

\bigskip
\begin{abstract}

Probabilistic graphical models that encode an underlying Markov random field are fundamental building blocks of \emph{generative modeling} to learn \emph{latent representations} in modern multivariate data sets with complex dependency structures. Among these, the exponential family graphical models are especially popular, given their fairly well-understood statistical properties and computational scalability to high-dimensional data based on pseudo-likelihood methods. These models have been successfully applied in many fields, such as the Ising model in statistical physics and count graphical models in genomics. Another strand of models allows some nodes to be \emph{latent}, so as to allow the marginal distribution of the observable nodes to depart from exponential family to capture more complex dependence. These approaches form the basis of generative models in artificial intelligence, such as the Boltzmann machines and their restricted versions. A fundamental barrier to likelihood-based (i.e., both maximum likelihood and fully Bayesian) inference in both fully and partially observed cases is the intractability of the likelihood. The usual workaround is via adopting pseudo-likelihood based approaches, following the pioneering work of \citet{besag1974spatial}. The goal of this paper is to demonstrate that full likelihood based analysis of these models is feasible in a computationally efficient manner. The chief innovation lies in 
 utilizing the \emph{tractable independence model} underlying an \emph{intractable graphical model}, to estimate the normalizing constant, as well as its gradient. Extensive numerical results, supporting theory and comparisons with pseudo-likelihood based approaches demonstrate the applicability of the proposed method.
\end{abstract}

\noindent%
{{\it Keywords}: Boltzmann Machines, Contrastive Divergence, Generative models, Ising Model, Poisson Graphical Model.}  
\vspace{-10mm}

\doparttoc 
\faketableofcontents 

\part{} 

\spacingset{1.5} 
\clearpage
\section{Introduction}
Exponential family graphical models provide a coherent framework for describing dependence in multivariate data. However, likelihood-based statistical inference is typically hindered by the presence of an intractable normalizing constant, and due to the lack of scalability of doubly intractable procedures \citep[e.g.,][]{murray2012mcmc,moller2006efficient,liang2010double}, existing methods rely on pseudo-likelihood or approximate Bayesian methods to facilitate estimation \citep{ravikumar2010high,besag1974spatial}. Although these methods could be consistent under certain assumptions, it is a classical result that full likelihood methods are {\it statistically efficient}, at least in an asymptotic sense and for correctly specified models \citep{fisher1922mathematical,rao1945information}. This apparent bottleneck also applies to the \emph{partially observed exponential family graphical models}, where latent variables are introduced in such a way that the joint distribution of latent and observed variables belongs to an exponential family, although the marginal distribution of the observed variables may not. This leads to the so called \emph{product of experts} family of models for the observed variables \citep{hinton1998hierarchical,hinton2002training}, which allows complicated dependence structure not belonging in the exponential family to be modeled, and is considered a foundational building block for \emph{generative artificial intelligence}, through an architecture known as the \emph{Hopfield Network} or \emph{Boltzmann Machine} \citep{hopfiled82}, whose purpose is to model Hebbian learning in the brain neural circuitry \citep{hebb1949}. The intractable normalizing constant in this case has precluded a practically successful deployment of a full Boltzmann Machine so far, and has led to \emph{reduced form architectures} where training is possible. For example, in Restricted Boltzman Machines (RBM) or Deep Boltzmann Machines (DBM) \citep{hinton2007boltzmann,fischer2012introduction,salakhutdinov2009deep}, where observed multivariate data is modeled as the marginal distribution over unobserved hidden variables; a key assumption is the independence of the hidden variables given the visible variables, and vice versa, which leads to training by \emph{approximate likelihood} based methods, such as the contrastive divergence (CD) algorithm of \citet{hinton2002training}. Yet, it is known that with high probability the CD solution differs from the maximum likelihood solution \citep{sutskever2010convergence}, and the justification for the artificial bipartite grouping of latent and observed variables in an RBM, or its deep extensions such as the DBM \citep{salakhutdinov2009deep}, is primarily computational, rather than substantive. 

To address these issues, this article develops an array of computationally feasible methods that enable full-likelihood based analysis (both maximum likelihood and fully Bayesian) for both fully and partially observed exponential family graphical models. This is achieved without resorting to computationally expensive \emph{doubly intractable}  Metropolis-Hastings approaches \citep{murray2012mcmc,moller2006efficient}, which allows our method to scale to hundreds of variables in both fully and partially observed cases. We also propose a technique, which we believe to be the first viable method, for training a \emph{full} Boltzmann machine, without any restrictions on its architecture, such as a bipartite graph. The key to our scalability is sampling under the \emph{independence model}, underlying any exponential family graphical model, which accomplishes two things. First, it allows us to use vanilla Monte Carlo instead of running a prohibitive MCMC scheme after each parameter update as in CD. Second, we show that the variance of the importance sampling estimate is well controlled with our choice. We also show that under logarithmic sparsity assumptions, the necessary and sufficient number of importance samples scales linearly in $p$ to get accurate results. Furthermore, develop theoretical results for likelihood based inference in both fully and partially observed cases. These have received little attention so far, due to the computational intractability of likelihood-based inference, in the first place.

The remainder of this article is organized as follows. In Section~\ref{sec:prelim} we provide the necessary background material on fully and partially observed exponential family graphical models, where for the latter we focus in particular on Boltzmann machines and their variants.  The main methodological innovation is discussed in Section~\ref{sec:geyer}, where we outline the use of the \emph{tractable independence model} underlying an \emph{intractable graphical model} in the importance sampling method of \citet{geyer1991} for computing the intractable likelihood and its gradient. Applications to fully and partially observed models are discussed in Section~\ref{sec:PEGM_inference} and ~\ref{sec:partial}, followed by theoretical properties of likelihood-based inference in Section~\ref{sec:theory}. Numerical experiments and data analysis results are in~\ref{sec:sims} and~\ref{sec:real_data}, followed by concluding remarks in Section~\ref{sec:conc}.

\section{Preliminaries}\label{sec:prelim}

\subsection{Pairwise \emph{fully observed} exponential family graphical models} Consider a graph $G = (V, E)$ with $|V| = p$ nodes. Associate each node with a random variable $X_j$ with sample space $\mathcal{X}_j$, for $j = 1, \ldots, p$. The edge set consists of pairs $(j,k)$ such that the nodes $j$ and $k$ are connected. We assume the connection is undirected, i.e. if $(j,k) \in E$ then $(k,j) \in E$. A clique $C \subset V$ of a graph is a fully-connected subset of vertices. Define $\mathcal{C}$ to be the set of all cliques of the graph. A probability distribution on the graph can be defined using a product of clique-wise compatibility functions $\psi_C$, i.e., 
\begin{equation}\label{eq:joint_graph_model}
   p(x_1, \ldots, x_p) = \frac{1}{z} \prod_{C \in \mathcal{C}} \psi_C(x_C),
\end{equation}
where $z$ is such that $p(x_1, \ldots, x_p)$ is a valid density or mass function. Valid distributions such as \eqref{eq:joint_graph_model} under certain conditions can be arrived at from another perspective. Indeed, consider the node-conditional distribution $X_j \mid X_{-j}$, i.e. the conditional distribution of $X_j$ given the other variables. Suppose this distribution belongs to the univariate exponential family of models. Then, by the Hammersley--Clifford theorem \citep{lauritzen1996graphical}, the joint distribution produced by these node-conditional distribution is of the form \eqref{eq:joint_graph_model}. This particular class of models is known as exponential family graphical models \citep{yang2015graphical}, for which the compatibility functions for each clique are the dot product of the respective sufficient statistics and the parameter of the distribution. When the clique set is the set of all nodes and the set of all edges, one obtains what is known as the pairwise exponential family graphical models (PEGM), with pdf given by:
\begin{equation}\label{eq:pairwise_graphical_model}
    p_\theta(x_1, \ldots, x_p) = \frac{1}{z(\theta)} \exp\left\lbrace\sum_{j\in V} \theta_j T_j(x_j) + \sum_{(j,k)\in E} \theta_{jk} T_{jk}(x_j, x_k) + \sum_{j \in V} C(x_j)\right\rbrace,
\end{equation}
where $T_j(x_j)$ and $T_{jk}(x_j, x_k)$ are the sufficient statistics, and $\theta \in \Theta \subset \mathbb{R}^{p \times p}$ is the parameter matrix with $\theta_{jk} = \theta_j$ if $j = k$ and $\theta_{jk} = \theta_{kj}$ if $j \neq k$.
Further generalizations of these models when the cliques are not restricted to be the set of vertices and edges are discussed in \citet{yang2015graphical}. Conditional independence structures in \eqref{eq:pairwise_graphical_model} can also be defined in terms of the pairwise parameters $\theta_{jk}$, since $\theta_{jk} = 0$ if and only if $(j,k) \notin E$. In other words, $X_j \ci X_k \mid X_{-(j,k)}$ if and only if $\theta_{jk} = 0$. It is straightforward to observe that if $X$ has density given by \eqref{eq:pairwise_graphical_model}, then the node-conditional distribution of $X_j \mid X_{-j}$ is:
\begin{align*}
    p_{\theta}(x_j \mid x_{-j}) = \frac{1}{z(\theta;x_{-j})} \exp \left\lbrace \theta_jT_j(x_j) + C(x_j) + 2 \underset{k \in N(j)}{\sum}\theta_{jk}T_{jk}(x_j, x_k) \right\rbrace,
\end{align*}
where 
$z(\theta;x_{-j})$ is the normalizing constant belonging to a univariate exponential family distribution for $(X_j\mid X_{-j})$, and $N(j)$ denotes all neighbors of $j$, as encoded by $G$. The parameter space $\Theta$ for which \eqref{eq:pairwise_graphical_model} is a valid probability model is convex \citep{barndorff2014information, wainwright2008graphical} and the sample space is $\mathcal{X} = \mathcal{X}_1 \times \ldots \times \mathcal{X}_p$. When $\Theta$ is open, the model is said to be regular.
Some specific PEGM examples are as follows.
 \begin{example} (Ising). The Ising model \citep{ising1924beitrag} is a PEGM with $C(x_j) = 0$, $T_j(x_j) = x_j$ and $T_{jk}(x_j, x_k) = x_jx_k$. Moreover, $X_j \mid X_{-j}$ has a Bernoulli distribution with natural parameter $\theta_j + 2\sum_{k \in N(j)} \theta_{jk}x_k$. The sample space for this model is $\{0,1\}^p$ and $\Theta = \mathbb{R}^{p \times p}$.
\end{example}
\begin{example} (PGM). The Poisson graphical model (PGM), proposed by \citet{besag1974spatial}, and further developed by \citet{yang2013poisson} is a PEGM with $C(x_j) = \log x_j!$, $T_j(x_j) = x_j$ and $T_{jk}(x_j, x_k) = x_j x_k$. Here $X_j \mid X_{-j} \sim \text{Poi}(\exp\{\theta_j + 2\sum_{k\in N(j)} \theta_{jk} x_k\})$. The sample space is $\{\mathbb{N} \cup \{0\}\}^p$, i.e. the set of all $p$-dimensional count vectors, and $\Theta$ is the set of all $p \times p$ real matrices with non-positive off-diagonal elements.
\end{example}
Other examples include the Potts model \citep{potts1952some}, or the truncated Poisson graphical model \citep{yang2013poisson}. 
In all of the examples above, likelihood-based inference is generally infeasible due to the intractability of $z(\theta)$ and available methods approximate the likelihood $\ell(\theta)$ by $\prod_{j}\ell(X_j \mid X_{-j})$, the \emph{pseudo-likelihood}. There are two key reasons for this: (1) $X_j \mid X_{-j}$ is a member of a univariate exponential family and (2) due to the product form of the pseudo-likelihood, inference can be carried out in parallel for the $p$ nodes.

\subsection{Pairwise \emph{partially observed} exponential family graphical models: product of experts and \emph{approximate} maximum likelihood via contrastive divergence}\label{sec:BM_background}
 Consider a binary vector $\v \in \{0,1\}^p$ which might encode the on and off pixels of a black-and-white image. In generative artificial intelligence (AI), the goal is to learn the underlying features of the distribution of $\v$ so that high-quality synthetic data from the trained AI can be produced for new content. One potential modeling choice could be an exponential family model, e.g., the Ising. For modern complex data such as images, this restriction to exponential family could severely limit  the model's ability to capture more complex dependence not belonging to the exponential family. As an alternative, Boltzmann Machines have emerged as one of the fundamental tools for generative AI. These can be thought of as stochastic neural networks represented by a graph that encodes symmetric connections between observed variables and hidden or latent variables \citep{ackley1985learning}. The joint distribution of the $p$ observed and $m$ hidden variables are assumed to belong to a PEGM (Ising in this case): $p_{\theta}(\v, \h) = z(\theta)^{-1} \exp({-E_{\theta}(\v, \h)})$ where $E_{\theta}(\v, \h)$ is known as the energy function in the literature and $z(\theta) = \sum_{\v, \h} \exp({-E_{\theta}(\v, \h)})$. Specifically,
 \begin{align*}
    \log p_{\theta}(\v, \h)  = \sum_{j} \theta_{jj}v_j + \sum_{k} \theta_{kk} h_k + \sum_{j \neq j'} \theta_{jj'} v_j v_{j'} + \sum_{k \neq k'} \theta_{kk'} h_k h_{k'} + \sum_{j,k} \theta_{jk} v_j h_k - \log z(\theta), 
\end{align*}
for $j,j'=1,\ldots,p;\; k,k'=p+1,\ldots, p+m$ and $\theta \in \mathbb{R}^{(p+m)\times (p+m)}$. The marginal distribution of the visible variables, $p_{\theta}(\v) = \sum_{\h}p_{\theta}(\v,\h)$, does not necessarily belong to the exponential family. Introduction of the hidden variables allows one to describe complex dependency structures in the visible variables using simple conditional distributions. Moreover, the hidden variables act as non-linear feature detectors \citep{hinton2007boltzmann}. It can be seen that for this model, the gradient of the log-likelihood for a visible sample $\v$  is: 
\begin{align}\label{eq:BM_grad}
    \dfrac{\partial \ell (\theta )}{\partial \theta} =  \dfrac{\partial \log \sum_{\h} p_\theta(\v, \h)}{\partial \theta} = \sum_{v, h} \dfrac{\partial E_{\theta}(\v, \h)}{\partial \theta} p_{\theta}(\v, \h) -  \sum_{h} \dfrac{\partial E_{\theta}(\v, \h)}{\partial \theta} p_{\theta}(\h \mid \v),
\end{align}
where \eqref{eq:BM_grad} can be interpreted as the difference of two expectations: expected value of the gradient of the energy function under the joint distribution of $(\v,\h)$ and under the conditional distribution of $\h \mid \v$. For most general versions of this model, existing learning algorithms approximate either both or at least one of these two terms using MCMC sampling, which is too slow for practical deployment, and lacks scalability to larger dimensions. 

Equation \eqref{eq:BM_grad} also illustrates that the most general version of the Boltzmann machine is hard to train and computationally demanding, which has paved the way for simplified versions of Boltzmann machines, specifically the restricted Boltzmann machine (RBM) where the underlying graph is bipartite. RBM \citep{smolensky1986information} is inspired by the biological connectivity network among neurons. RBM's multi-layer extension, or deep Boltzmann machine (DBM) \citep{salakhutdinov2007restricted} has proven to be a popular machine learning tool for unsupervised learning \citep{zhang2019deep} and is a \emph{universal approximator} for any distribution over $\{0,1\}^p$ \citep{le2008representational, montufar2011refinements}. Adjacent layers of a DBM have the same architecture as an RBM. Figure~\ref{fig:BM} shows the graphical model for a BM, RBM and a DBM with three hidden layers.

\begin{figure}

\begin{minipage}[b]{0.5\textwidth}
\centering
\begin{tikzpicture}[remember picture,overlay]

\tikzset{cir/.style={circle,draw=black!80,thick,minimum size=0.6cm},y=0.6cm,font=\sffamily}

\begin{scope}[rotate = 90]
\node[cir] (b1) at (0.5,2) {$v_1$};
\node[cir] (b2) at (0.5,0) {$v_2$};
\node[cir] (b3) at (0.5,-2) {$v_3$};
\node[cir, fill=black!15] (c1) at (2.5,2) {$h_1$};
\node[cir,fill=black!15] (c2) at (2.5,0) {$h_2$};
\node[cir,fill=black!15] (c3) at (2.5,-2) {$h_3$};


\foreach \cnto in {1,2, 3} {
    \foreach \cntt in {1,2, 3} {
        \draw [-] (b\cnto.north)--(c\cntt.south);
    }
}

\draw [-] (b1)--(b2);
\draw [bend right] (b1) to (b3);
\draw [-] (b2)--(b3);
\draw [-] (c1)--(c2);
\draw [bend left] (c1) to (c3);
\draw [-] (c2)--(c3);
\end{scope}
\end{tikzpicture}
\caption*{Fig. 1(a): BM}
\end{minipage}
\hfill
\begin{minipage}[b]{0.5\textwidth}
\centering
\begin{tikzpicture}

\tikzset{cir/.style={circle,draw=black!80,thick,minimum size=0.6cm},y=0.6cm,font=\sffamily}

\begin{scope}[rotate = 90]
\node[cir] (b1) at (0,2) {$v_1$};
\node[cir] (b2) at (0,0) {$v_2$};
\node[cir] (b3) at (0,-2) {$v_3$};
\node[cir, fill=black!15] (c1) at (2,2) {$h_1$};
\node[cir,fill=black!15] (c2) at (2,0) {$h_2$};
\node[cir,fill=black!15] (c3) at (2,-2) {$h_3$};


\foreach \cnto in {1,2, 3} {
    \foreach \cntt in {1,2, 3} {
        \draw [-] (b\cnto.north)--(c\cntt.south);
    }
}

\end{scope}
\end{tikzpicture}
\caption*{Fig. 1.(b): RBM}
\end{minipage}

\begin{minipage}[b]{0.5\textwidth}
\centering
\begin{tikzpicture}

\tikzset{cir/.style={circle,draw=black!80,thick,minimum size=0.6cm},y=0.6cm,font=\sffamily}

\begin{scope}
\node[cir] (b1) at (0,0) {$v_1$};
\node[cir] (b2) at (0,-1*2) {$v_2$};
\node[cir] (b3) at (0,-2*2) {$v_3$};
\node[cir, fill=black!15] (c1) at (2,0) {\small $h_{1}^{(1)}$};
\node[cir,fill=black!15] (c2) at (2,-1*2.1) {\small $h_{2}^{(1)}$};
\node[cir,fill=black!15] (c3) at (2,-2*2.1) {\small $h_{3}^{(1)}$};
\node[cir, fill=black!15] (c4) at (4,0) {\small $h_{1}^{(2)}$};
\node[cir,fill=black!15] (c5) at (4,-1*2.1) {\small $h_{2}^{(2)}$};
\node[cir,fill=black!15] (c6) at (4,-2*2.1) {\small $h_{3}^{(2)}$};
\node[cir, fill=black!15] (c7) at (6,0) {\small $h_{1}^{(3)}$};
\node[cir,fill=black!15] (c8) at (6,-1*2.1) {\small $h_{2}^{(3)}$};
\node[cir,fill=black!15] (c9) at (6,-2*2.1) {\small $h_{3}^{(3)}$};


\foreach \cnto in {1,2, 3} {
    \foreach \cntt in {1,2, 3} {
        \draw [-] (b\cnto.east)--(c\cntt.west);
    }
}
\foreach \cnto in {1,2, 3} {
    \foreach \cntt in {4,5,6} {
        \draw [-] (c\cnto.east)--(c\cntt.west);
    }
}
\foreach \cnto in {4,5,6} {
    \foreach \cntt in {7,8,9} {
        \draw [-] (c\cnto.east)--(c\cntt.west);
    }
}

\end{scope}
\end{tikzpicture}
\caption*{Fig. 1.(c): DBM}
\end{minipage}
\begin{minipage}[b]{0.5\textwidth}
    \centering
    \begin{tikzpicture}

\tikzset{cir/.style={circle,draw=black!80,thick,minimum size=0.6cm},y=0.6cm,font=\sffamily}

\begin{scope}[rotate = 90]
\node[cir] (b1) at (0,0) {$v_1$};
\node[cir] (b2) at (0,-1*2) {$v_2$};
\node[cir] (b3) at (0,-2*2) {$v_3$};
\node[cir, fill=black!15] (c4) at (0,2) {\small $h_{1}^{(2)}$};
\node[cir,fill=black!15] (c5) at (0, 4) {\small $h_{2}^{(2)}$};
\node[cir,fill=black!15] (c6) at (0, 6) {\small $h_{3}^{(2)}$};
\node[cir, fill=black!15] (c1) at (2,0) {\small $h_{1}^{(1)}$};
\node[cir,fill=black!15] (c2) at (2,-1*2) {\small $h_{2}^{(1)}$};
\node[cir,fill=black!15] (c3) at (2,-2*2) {\small $h_{3}^{(1)}$};

\node[cir, fill=black!15] (c7) at (2,2) {\small $h_{1}^{(3)}$};
\node[cir,fill=black!15] (c8) at (2,4) {\small $h_{2}^{(3)}$};
\node[cir,fill=black!15] (c9) at (2,6) {\small $h_{3}^{(3)}$};


\foreach \cnto in {1,2, 3} {
    \foreach \cntt in {1,2, 3} {
        \draw [-] (b\cnto.north)--(c\cntt.south);
    }
}
\foreach \cnto in {4,5,6} {
    \foreach \cntt in {7,8,9} {
        \draw [-] (c\cnto.north)--(c\cntt.south);
    }
}

\foreach \cnto in {1,2,3} {
    \foreach \cntt in {4,5,6} {
        \draw [-] (c\cnto.south)--(c\cntt.north);
    }
}
\end{scope}
\end{tikzpicture}
\caption*{Fig. 1.(d): DBM rearranged as an RBM}
\end{minipage}

\caption{From left to right: BM, RBM, DBM with three visible nodes. DBM has three layers of hidden variables. The notation is: hidden nodes ($h_j \in \{0,1\}$, shaded in gray), visible nodes ($v_k \in \{0,1\}$, transparent). Deep hidden nodes in layer $l$ are denoted by $h^{(l)}$. }
\label{fig:BM}
\end{figure}

Learning in the RBM is facilitated by the conditional independence of the hidden variables given the visible variables and vice versa. Specifically, for RBM: $- E_{\theta}(\v, \h) = \sum_{k} \theta_{kk} h_k  + \sum_{j} \theta_{jj} v_j + \sum_{j}\sum_{k} \theta_{jk}v_j h_k $, which implies $\mathbb{P}(h_k=1\mid \v) = \sigma(\theta_{kk} + \sum_{j} \theta_{jk} v_j)$ and $\mathbb{P}(v_j=1\mid \h) = \sigma(\theta_{jj} + \sum_{k} \theta_{jk} h_k)$ where $\sigma(x) = \{1+ \exp(-x)\}^{-1},\; x\in \mathbb{R}$ is the logistic sigmoid function. The bipartite dependence structure gives a \emph{product of experts} marginal model for the visible variables \citep{fischer2012introduction}:
\begin{equation}\label{eq:RBM_marginal}
    p_{\theta}(\v) = z(\theta)^{-1} \prod_{j=1}^p \exp({\theta_{jj}v_j})\prod_{k=p+1}^{p+m} \left(1 + \exp({{\theta_{kk} + \sum_{j=1}^p \theta_{jk}v_j}})\right).
\end{equation}
The contrastive divergence (CD) algorithm \citep{hinton2002training} is an approximate procedure for finding the MLE under this model via gradient ascent. Suppose $\theta = \theta^{(t)}$ at iteration $t$. Using \eqref{eq:BM_grad} for an RBM one obtains at $\theta=\theta^{(t)}$:
\begin{align*}
 \dfrac{\partial \log  p_{\theta}(\v)}{\partial \theta_{jk}}= \langle \v_j \h_k \rangle_{\mathrm{data}} - \langle \v_j \h_k \rangle_{\mathrm{model}}, \quad j\neq k,
 \end{align*}
\noindent where the angle brackets with subscript ``data'' denote an expectation with respect to $p_{\theta^{(t)}} (\h \mid \v)$ at the observed $\v$, which is analytic; and those with subscript ``model'' denote an {expectation} with respect to $p_{\theta^{(t)}} (\h, \v)$; which is typically not available in closed form and is evaluated using MCMC.  Similarly, for the bias terms, we have at $\theta=\theta^{(t)}$ that: ${\partial \log  p_{\theta}(\v)}/{\partial \theta_{jj}} = \v_j - \sum_{\v} \v_j p_{\theta^{(t)}}(\v) ,$ and $ {\partial \log  p_{\theta}(\v)}/{\partial \theta_{kk}} = \mathbb{P}(\h_k = 1 \mid \v) - \sum_{\v} p_{\theta^{(t)}}(\v)\mathbb{P}(\h_k = 1 \mid \v)$.
However, the difficulty here is that direct and exact simulation from $p_{\theta^{(t)}} (\h, \v)$ is infeasible. Thus, the CD algorithm uses a Gibbs sampler to iteratively sample $(\h \mid \v , \theta^{(t)})$ and $(\v \mid \h, \theta^{(t)})$. The key issues are:
\begin{enumerate}
\item The procedure requires running a new Gibbs sampler every time $\theta^{(t)}$ is updated, until convergence to the target distribution $p_{\theta^{(t)}} (\h, \v)$ is achieved. This is reminiscent of the double MH procedure \citep{liang2010double}, which is computationally prohibitive and is one of the primary motivations behind the bipartite graph of RBM that allows \emph{batch sampling} of $\h \mid \v , \theta^{(t)}$ and $\v \mid \h, \theta^{(t)}$ by drawing independent Bernoulli vectors. The conditional independence structure is lost if one allows within layer connections, precluding the use of more general Boltzmann machines. 

\item Even for an RBM with batch sampling, convergence to the stationary target $p_{\theta^{(t)}} (\h,\v)$ is only achieved if the chain is run for $K\to\infty$ MCMC iterations at every iteration $t$, under standard MCMC theory. In practice, the CD algorithm is only run for a small $K$ (CD-$K$) or even $K=1$ Gibbs iterations for every $t$. 
\end{enumerate}

\vspace{-20pt}
\section{A Monte Carlo estimate of $z(\theta)$ and its Gradient for Exponential Family Graphical Models}\label{sec:geyer}
Let $p_{\theta}(x) = {q_{\theta}(x)}/{z(\theta)}, \, x \in \mathbb{R}^p$ be a probability density function such that $z(\theta) = \int q_{\theta}(x) dx = \exp(A(\theta))$ and $\theta \in \Theta$.  Consider the situation where the explicit form of $q_{\theta}(x)$ is known for any $\theta$ in the parameter space and is computationally cheap to evaluate but the analytical form of $z(\theta)$ is intractable for every $\theta \in \Theta$.
Given $n$ independent observations from such a model, form the matrix $\mathbf{X}\in\mathbb{R}^{n\times p}$. For notational convenience, we later use $\mathbf{X}=(X_1,\ldots, X_p)$, where each $X_j$ is understood to be a vector of the $j$th variable over $n$ samples for $j=\{1,\ldots, p\}$. Otherwise, when the indexing is over the $i$th row of $\mathbf{X}$, it is made explicit by writing $X_{i\bullet}$ for $i=\{1,\ldots,n\}$. Let the log-likelihood at two parameter values $\theta, \phi \in \Theta$ be $\ell(\theta)$ and $\ell(\phi)$, respectively. Then,
\begin{align*}
    \ell(\theta) - \ell(\phi) = \log\frac {p_{\theta}(\mathbf{X})}{p_{\phi}(\mathbf{X})} = \sum_{i=1}^{n}\log \frac{q_{\theta}(X_{i\bullet})}{q_{\phi}(X_{i\bullet})} - n\log \frac{z(\theta)}{z(\phi)}, 
\end{align*}
where the ratio ${z(\theta)}/{z(\phi)}$ cannot be explicitly computed. However, \citet{geyer1991} noted:
\begin{align}
    \dfrac{z(\theta)}{z(\phi)} & = \frac{1}{z(\phi)} \int q_{\theta} (x) dx = \dfrac{1}{z(\phi)} \int \dfrac{q_{\theta}(x)}{q_{\phi}(x)} q_{\phi}(x)dx
     = \int \dfrac{q_{\theta}(x)}{q_{\phi}(x)} p_{\phi}(x)dx = \mathbb{E}_{Y \sim p_{\phi}}\left[\dfrac{q_{\theta}(Y)}{q_{\phi}(Y)}\right],\label{eq:geyer}
\end{align}
motivating the Monte Carlo estimate for $z(\theta)/z(\phi)$ as $T_N^{(z)} = \frac{1}{N}\sum_{i = 1}^N \frac{q_{\theta}(Y_{i\bullet})}{q_{\phi}(Y_{i\bullet})}$,
where $Y_{i\bullet}\stackrel{i.i.d.}\sim p_\phi$. For any generic $\phi \in \Theta$, the node-conditional distributions can be used to devise a Markov chain Monte Carlo (MCMC) scheme to sample from $p_\phi$. However, if specific choices of $\phi$ are available such that independent samples are possible to simulate cheaply, then these samples could also be utilized to approximate the expectation. Moreover, if for such a $\phi$, the corresponding normalizing constant $z(\phi)$ is available in closed form, one has an estimate of $z(\theta)$, for all $\theta$. The key to the current work is our choice:  $\boxed{\phi = \text{diag}(\theta)}$,  i.e., $\phi$ is a matrix of the same dimension as $\theta$, with diagonal elements equal to the diagonal elements of $\theta$, and off-diagonal elements set to 0. 
With this choice: (a) a sample $Y \sim p_{\phi}$ can be obtained by sampling $Y_j \sim p_{\theta_{jj}}$ independently and setting $Y = (Y_1, \ldots, Y_p)$, and (b) $z(\phi)$ is analytically tractable.  This is in general true for any PEGM since the independence model is simply a distribution over $p$-univariate exponential families, and a few examples are provided next. For the Ising model, $z(\phi) = \prod_{j=1}^p z(\theta_{jj}) = \prod_{j=1}^p (1 + \exp({\theta_{jj}}))$.
For the Poisson Graphical model, $z(\phi) = \prod_{j=1}^p z(\theta_{jj}) = \prod_{j=1}^p \exp(\exp({\theta_{jj}}))$.
For observed samples $\mathbf{X}$, the log-likelihood is $\ell(\theta) = \sum_{i=1}^n \log q_\theta(X_{i\bullet}) - n \log z(\theta)$. To obtain the maximum likelihood estimator of $\theta$, one can use the gradient ascent method leading to updates of the form $\theta^{(t+1)} = \theta^{(t)} + \gamma \nabla_{\theta} \ell(\theta) = \theta^{(t)} + \gamma [ \sum_{i=1}^n \nabla_{\theta} \log q_{\theta}(X_{i\bullet}) - n \nabla_{\theta}\log z(\theta)]$, where $\gamma$ is a suitable step-size. Clearly, for such an algorithm to succeed, one needs access to the gradient $\nabla_{\theta}\log z(\theta) = \nabla_{\theta} z(\theta)/z(\theta)$. In the next proposition, an unbiased estimate for $\nabla_{\theta}z(\theta)$ is proposed, similar in spirit to the estimate of $z(\theta)$.
\begin{proposition}\label{prop:geyer_grad}
Define $T_N^{(\nabla_\theta z)} = \frac{1}{N} \sum_{i=1}^N \frac{\nabla_\theta q_\theta(Y_{i\bullet})}{q_\phi(Y_{i\bullet})}$ where $Y_{i\bullet} \overset{iid}{\sim} p_\phi$ for $i = 1, \ldots, N$. Then: $$\dfrac{\nabla_{\theta} z(\theta)}{z(\phi)} = \mathbb{E}_{ p_{\phi}} \left[ T_N^{(\nabla_\theta z)}\right].$$
\end{proposition}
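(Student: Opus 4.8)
The plan is to mirror the two-line derivation in \eqref{eq:geyer}, with $q_\theta$ replaced by its gradient $\nabla_\theta q_\theta$, the only extra ingredient being one interchange of differentiation and integration. First I would start from $z(\theta) = \int q_\theta(x)\, dx$ and argue that, for $\theta$ in the interior of $\Theta$ (where the model is regular), one may differentiate under the integral sign, so that $\nabla_\theta z(\theta) = \int \nabla_\theta q_\theta(x)\, dx$. Here $q_\theta(x) = \exp\{\sum_j \theta_{jj} T_j(x_j) + \sum_{(j,k)\in E}\theta_{jk}T_{jk}(x_j,x_k) + \sum_j C(x_j)\}$ is smooth (indeed real-analytic) in $\theta$, and the standard regularity of regular exponential families licenses the interchange; I would invoke the dominated-convergence form of the Leibniz rule, bounding the difference quotients of $q_\theta$ near $\theta$ by an integrable envelope.

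Given that step, the algebra is identical to \eqref{eq:geyer}: divide by $z(\phi)$, multiply and divide the integrand by $q_\phi(x)$, and recognize $q_\phi(x)/z(\phi) = p_\phi(x)$, giving
\[
\frac{\nabla_\theta z(\theta)}{z(\phi)} = \frac{1}{z(\phi)}\int \nabla_\theta q_\theta(x)\, dx = \int \frac{\nabla_\theta q_\theta(x)}{q_\phi(x)}\, p_\phi(x)\, dx = \mathbb{E}_{Y \sim p_\phi}\!\left[\frac{\nabla_\theta q_\theta(Y)}{q_\phi(Y)}\right],
\]
which identifies the per-sample summand of $T_N^{(\nabla_\theta z)}$ as an unbiased estimator of $\nabla_\theta z(\theta)/z(\phi)$. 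Finally, since $Y_{1\bullet},\ldots,Y_{N\bullet}$ are i.i.d.\ from $p_\phi$, linearity of expectation gives $\mathbb{E}_{p_\phi}[T_N^{(\nabla_\theta z)}] = \frac{1}{N}\sum_{i=1}^N \mathbb{E}_{p_\phi}[\nabla_\theta q_\theta(Y_{i\bullet})/q_\phi(Y_{i\bullet})] = \nabla_\theta z(\theta)/z(\phi)$, which is the claim. (One should note in passing that the ratio is well defined $p_\phi$-a.s., since $\phi = \mathrm{diag}(\theta)$ is a valid parameter and $q_\phi > 0$ on the common support.)

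The only real obstacle is the first step, i.e.\ rigorously justifying differentiation under the integral. For the Ising model this is trivial, since the "integral" is a finite sum over $\{0,1\}^p$. For unbounded sample spaces such as the count models, where it is a sum over $\{\mathbb{N}\cup\{0\}\}^p$, I would handle it by exhibiting, for each compact $K \subset \mathrm{int}\,\Theta$, a function $g_K$ with $\sup_{\theta \in K}\|\nabla_\theta q_\theta(x)\| \le g_K(x)$ and $\sum_x g_K(x) < \infty$; this holds because the relevant moment generating functions of the sufficient statistics are finite on $\mathrm{int}\,\Theta$ for regular exponential families. Making the hypothesis "$\theta$ interior to $\Theta$" explicit in the statement, or adding a short remark to that effect, would be the cleanest way to close this gap.
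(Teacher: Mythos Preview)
Your proposal is correct and follows essentially the same approach as the paper: the paper's proof is a single line stating ``In~\eqref{eq:geyer} replace $z(\theta)$ with $\nabla_\theta z(\theta)$ and differentiate under the integral sign,'' which is precisely the argument you spell out. Your additional care in justifying the interchange of differentiation and integration (trivial for finite support, dominated convergence for the count models via finiteness of moment generating functions on $\mathrm{int}\,\Theta$) goes beyond what the paper records, but is entirely in the right spirit.
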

\begin{proof}
In~\eqref{eq:geyer} replace $z(\theta)$ with $\nabla_\theta z(\theta)$ and differentiate under the integral sign.
\end{proof}
Note that $T_N^{(\nabla_\theta z)}$ has the same dimension as $\theta$. While the Monte Carlo version of estimates of $z(\theta)$ and $\nabla_{\theta} z(\theta)$ are unbiased by construction, their practical utility is determined by their variance, which need not be bounded for an arbitrary $\phi$ \citep{geyer1992constrained}. However, for our specific choice of $\phi$, the following key proposition establishes that the variances of  Monte Carlo estimates of $z(\theta)$ and $\nabla_\theta z(\theta)$ are finite under mild conditions.   
    \begin{proposition}\label{prop:exp}
Let $u =\theta - \mathrm{diag}(\theta)$, and $\phi=\mathrm{diag}(\theta)$. Then, (a) Monte Carlo estimate for $z(\theta)$ has bounded variance if $(2u+\phi)\in\Theta$ and, moreover, (b) the Monte Carlo estimate of $\nabla_\theta z(\theta)$ also has bounded variance if $(2(1+\delta) u+\phi)\in\Theta$ for some $\delta>0$.
\end{proposition}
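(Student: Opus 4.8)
The plan is to reduce, in both parts, the finiteness of the variance to the finiteness of a single second moment under the independence model $p_\phi$, and then to exploit the log-linearity of $q_\theta$ in $\theta$ so that these moments become ratios of normalizing constants evaluated at ``stretched'' parameters. Throughout, write $W(x)=q_\theta(x)/q_\phi(x)$ for the importance weight, and let $S(x)$ be the matrix collecting the sufficient statistics, so that $\log q_\theta(x)=\langle\theta,S(x)\rangle+\sum_j C(x_j)$ is affine in $\theta$ and $\nabla_\theta q_\theta(x)=q_\theta(x)\,S(x)$. I will also use that $\Theta$ (equivalently, the set on which $z(\cdot)<\infty$) is convex, and, for the moment bound in (b), that regularity makes $\Theta$ open.

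For part (a): $\mathrm{Var}_{p_\phi}(W(Y))$ is finite iff $\mathbb{E}_{p_\phi}[W(Y)^2]<\infty$ (the mean $\mathbb{E}_{p_\phi}[W]=z(\theta)/z(\phi)$ being automatically finite, since $\theta=\tfrac12\phi+\tfrac12(2u+\phi)\in\Theta$ by convexity once $2u+\phi\in\Theta$). By affineness of $\log q_\theta$ one has the pointwise identity
\[
W(x)^2=\frac{q_\theta(x)^2}{q_\phi(x)^2}=\frac{q_{2\theta-\phi}(x)}{q_\phi(x)}=\frac{q_{2u+\phi}(x)}{q_\phi(x)},
\]
using $2\theta-\phi=2u+\phi$ (the $C(x_j)$ terms, with coefficients summing to $1$, reassemble correctly). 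Integrating against $p_\phi$ gives $\mathbb{E}_{p_\phi}[W^2]=z(2u+\phi)/z(\phi)$, which is finite exactly when $2u+\phi\in\Theta$.

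For part (b): the entrywise variance of $T_N^{(\nabla_\theta z)}$ is finite as soon as $\mathbb{E}_{p_\phi}\!\big[W(Y)^2\,\norm{S(Y)}^2\big]<\infty$. Apply H\"older's inequality with conjugate exponents $1+\delta$ and $(1+\delta)/\delta$:
\[
\mathbb{E}_{p_\phi}\!\left[W^2\norm{S}^2\right]\le\left(\mathbb{E}_{p_\phi}\!\left[W^{2(1+\delta)}\right]\right)^{\frac{1}{1+\delta}}\left(\mathbb{E}_{p_\phi}\!\left[\norm{S}^{\frac{2(1+\delta)}{\delta}}\right]\right)^{\frac{\delta}{1+\delta}}.
\]
The same log-linearity computation gives $W^{2(1+\delta)}(x)=q_{2(1+\delta)u+\phi}(x)/q_\phi(x)$, so the first factor equals $\big(z(2(1+\delta)u+\phi)/z(\phi)\big)^{1/(1+\delta)}$, finite by the hypothesis $2(1+\delta)u+\phi\in\Theta$. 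For the second factor, under $p_\phi$ the coordinates $Y_1,\dots,Y_p$ are independent, each $Y_j$ from the univariate exponential family with natural parameter $\theta_{jj}$; in the regular case $\phi\in\Theta$ is interior, so the moment generating function of each $T_j(Y_j)$ exists near the origin, whence all polynomial moments of the entries of $S(Y)$ (the $T_j(Y_j)$ and the products in $T_{jk}(Y_j,Y_k)$) are finite and in particular $\mathbb{E}_{p_\phi}[\norm{S}^{2(1+\delta)/\delta}]<\infty$. Multiplying the two finite factors yields (b).

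The step requiring care is (b): one must decouple the exponentially growing weight $W^2$ from the factor $\norm{S}^2$, which is unbounded for models like the Poisson graphical model. A naive domination $\norm{S(x)}^2\le C\,q_{2\delta u}(x)$ fails because $u$ controls only the off-diagonal (interaction) block of $S$, not the diagonal entries $T_j(x_j)$; H\"older's inequality is precisely the device that lets the extra slack $2\delta u$ in the hypothesis absorb all of $\norm{S}^2$ through a genuine moment bound under the tractable independence model. (As sanity checks, for the Ising model $S$ is bounded so (b) follows from the (a)-type argument with any $\delta$, while for the Poisson model $W\le 1$ pointwise since the off-diagonal $\theta_{jk}\le 0$, so the first H\"older factor is at most $1$ and only the Poisson moment bound is used.)
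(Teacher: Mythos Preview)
Your proof is correct and follows essentially the same route as the paper's: for (a) you rewrite $W^2=q_{2u+\phi}/q_\phi$ via the affine structure of $\log q_\theta$ and integrate to get $z(2u+\phi)/z(\phi)$, which is exactly the paper's computation $\mathbb{E}_\phi[W^2]=M_{T,\phi}(2u)=\exp(A(2u+\phi)-A(\phi))$ in different notation; for (b) both you and the paper apply H\"older with exponents $1+\delta$ and its conjugate, bound the first factor via the same stretching identity at $2(1+\delta)u+\phi$, and dispose of the sufficient-statistic factor using that an exponential family with parameter in the interior has all moments finite. The only cosmetic difference is that the paper works entrywise on $T_{jk}$ while you work with $\norm{S}$, and the paper invokes \citet[Theorem~8.1]{barndorff2014information} directly for the moments of $T_{jk}$ rather than your independence-plus-univariate-MGF argument; the latter is fine for the canonical examples where $T_{jk}(x_j,x_k)=x_jx_k$, but the former covers a general $T_{jk}$ without assuming a product form.
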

Refer to Supplementary Section~\ref{prf:exp} for a proof of Proposition~\ref{prop:exp}.  The importance of this proposition lies not in deriving an exact expression for the variance. Rather, Proposition 3.2 establishes the conditions for bounded population variance, so that the central limit theorem applies to our importance sampling estimates, and ensures a $\sqrt{N}$ rate of convergence, where $N$ is the number of i.i.d. samples from $p_\phi(\cdot)$, which can be drawn cheaply in batch.  
We proceed to make several remarks on the implications of this proposition. 

\begin{rem}
Note that, $\frac{1}{1+\delta}(2(1+\delta)u+\phi) + \frac{\delta}{1+\delta} \phi = 2u+\phi.$
 Thus, by the convexity of exponential family parameter space \citep[see, e.g., Theorem 9.1 of ][]{barndorff2014information}, the result of Proposition~\ref{prop:exp}, Part (b) implies that of Part (a).
\end{rem}

\begin{rem}
Note that if $\phi=\phi_0$ is an arbitrary reference point not depending on $\theta$, the Monte Carlo variance of the importance ratio $q_\theta(X)/q_{\phi_0}(X)$ can easily become unbounded, even if $\phi_0\in\Theta$. This is observed, for example, in a Poisson graphical model, by choosing a $\phi_0$, such that $\mathrm{offdiag}(\theta-\phi_0)>0$. This shows the importance of setting $\phi=\mathrm{diag}(\theta)$, and the technique of \citet{geyer1991} is not very useful under an arbitrary $\phi$, possibly explaining why the technique has not been widely attempted in this model. Setting $\phi=\mathrm{diag}(\theta)$ removes the first order $T_j(X_j)$ terms from the ratio $q_\theta(X)/q_{\phi}(X)$, which is helpful for bounding the variance of what remains.
\end{rem}
\begin{rem}\label{rem:diag}
Another benefit of setting $\phi=\mathrm{diag}(\theta)$, is that one is (exactly) sampling from the independence model $p_\phi(\cdot)$ for the purpose of Monte Carlo, which allows batch sampling, and is computationally far more efficient than (approximately) sampling from $p_\theta(\cdot)$ at every iteration, for which an iterative technique such as MCMC must be deployed. Variations of this latter approach, i.e., sampling auxiliary data $Y_1,\ldots, Y_N$ under $p_\theta(\cdot)$ given the current $\theta$ to approximate $z(\theta)$ and $\nabla_\theta \log z(\theta)$, are used in the double Metropolis--Hastings of \citet{liang2010double}, the contrastive divergence approach of \citet{hinton2002training} that is popular for fitting a restricted Boltzmann machine \citep{salakhutdinov2007restricted}, and the Hamiltonian Monte Carlo approach under intractable likelihood as in \citet{stoehr2019noisy}. All of these are based on the identity 
$\nabla_\theta \log z(\theta) = \mathbb{E}_{Y\sim p_\theta} \{\nabla_\theta (\log q_\theta(Y))\}$, for which the RHS can be estimated by $T_N^{(\nabla_\theta \log z)}=N^{-1}\sum_{i=1}^{N} \nabla_\theta (\log q_\theta(Y_{i\bullet}))$,
where $Y_{1\bullet},\ldots,Y_{N\bullet}\stackrel{i.i.d.}\sim p_\theta$. The need to sample iteratively from $p_\theta(\cdot)$ lies at the heart of the lack of scalability for MCMC based methods, in both fully Bayesian as well as approximate maximum likelihood calculations for doubly intractable models. The choice of $\phi=\mathrm{diag}(\theta)$, and subsequent sampling from $p_\phi$, eliminates this difficulty, resulting in far greater scalability. From a computational point of view, the choice $\phi = c\mathrm{diag}(\theta),$ for some $c>0$ to be determined, is also appealing. For this choice, sampling is equally cheap, and $z(\phi)$ is also available in closed form. One may be tempted to optimize over $c$ to minimize the variance of the importance estimate, although, due to the intractability of the variance of $z(\theta)/z(\phi)$, this may not be straightforward. We choose to avoid this complication and find $c=1$ works well in practice.
\end{rem}
\begin{rem}
    Generally, the variance of the estimator is a function of $\theta$ and $p$. But for PGMs, the variance of the proposed estimator admits a universal bound. Indeed, 
    $$\mathbb{E}_{\phi}[q_\theta^2(X)/q_\phi^2(X)] \leq 1 = \sum_{\mathcal{X}} \exp\left\lbrace \sum_j \sum_k \theta_{jk} X_jX_k \right\rbrace p_\phi(X) \leq \sum_{\mathcal{X}} p_\phi(X) = 1$$ 
    since for PGMs, $\theta_{jk} \leq 0$ for $j \neq 1$, and $X_j, X_k \geq 0$.
\end{rem}
\begin{rem}
Although not our main focus, the same construction may potentially be used to design a simple MCMC-free sampler for  $p_\theta(\cdot)$. See Supplementary Section~\ref{app:sampler} for details.
\end{rem}

The following result provides further support on the well-behaved nature of our Monte Carlo estimates, providing exponential concentration bounds.
\begin{proposition}\label{lemma:sub-gaussian}
   Consider the class of PEGMs in \eqref{eq:pairwise_graphical_model} with $T_j(x_j) = x_j$ and $T_{jk}(x_j, x_k) = x_j x_k$ and suppose the $j$th sample space is bounded in the interval $[a_j, b_j]$ for $j=1,\ldots,p$. Set $U = q_{\theta}(Y)/q_{\phi}(Y)$ and $V = \nabla_{\theta_{jk}}q_{\theta}(Y)/q_{\phi}(Y)$ with $ Y \sim p_{\phi}$. Then for every $\theta$, both $U$ and $V$ as a function of $Y$ have bounded differences in the sense of \citet[Corollary 1]{boucheron2003concentration}. Moreover, there exists $C>0$ such that for every $t>0$, $\mathbb{P}[|U - \mathbb{E}(U)| > t] \leq 2 \exp\{ - t^2/4C\}$, and $\mathbb{P}[|V - \mathbb{E}(V)| > t] \leq 2 \exp\{ - t^2/4C\}$.
\end{proposition}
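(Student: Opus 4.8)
The plan is to reduce $U$ and $V$ to explicit bounded functions of the independent coordinates of $Y$, read off their bounded-differences constants by hand, and then invoke the bounded differences (McDiarmid) inequality in the form of \citet[Corollary 1]{boucheron2003concentration}. The starting point is that the choice $\phi=\mathrm{diag}(\theta)$ makes the ratio $q_\theta/q_\phi$ collapse: the first-order terms $\theta_{jj}Y_j$ and the base-measure terms $C(Y_j)$ cancel, leaving
\[
U=\frac{q_\theta(Y)}{q_\phi(Y)}=\exp\Big\{\sum_{(j,k)\in E}\theta_{jk}Y_jY_k\Big\}=:\exp\{g(Y)\},
\]
and, differentiating $\log q_\theta$ in a coordinate of $\theta$, $V=\nabla_{\theta_{jk}}q_\theta(Y)/q_\phi(Y)=2Y_jY_k\,U$ for an off-diagonal index (the factor $2$ from $\theta_{jk}=\theta_{kj}$), or $V=Y_jU$ for a diagonal index; in either case $V$ equals a bounded polynomial in the coordinates of $Y$ times $U$.

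Since $Y\sim p_\phi$ is the independence model, its coordinates $Y_1,\dots,Y_p$ are mutually independent with $Y_\ell\in[a_\ell,b_\ell]$, which is exactly the product-space setting the bounded differences inequality requires. Replacing $Y_\ell$ by any $Y_\ell'\in[a_\ell,b_\ell]$ changes $g$ by at most $c_\ell^{(g)}:=2(b_\ell-a_\ell)\sum_{k\in N(\ell)}|\theta_{\ell k}|\max(|a_k|,|b_k|)$, so $g$ has bounded differences. On the compact box $\prod_\ell[a_\ell,b_\ell]$ the function $g$ is bounded, say $|g|\le M$; hence $s\mapsto e^{s}$ is $e^{M}$-Lipschitz on the range of $g$, and $U$ has bounded differences with constants $c_\ell^{(U)}:=e^{M}c_\ell^{(g)}$.

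For $V$ I would use that a product of bounded functions with bounded differences again has bounded differences: writing $V=2h(Y)U(Y)$ with $h(y)=y_jy_k$ (or $V=h(Y)U(Y)$ with $h(y)=y_j$), both factors are bounded --- by $B_h:=\max(|a_j|,|b_j|)\max(|a_k|,|b_k|)$ and by $e^{M}$ respectively --- and the elementary bound $|f_1f_2-f_1'f_2'|\le\|f_1\|_\infty|f_2-f_2'|+\|f_2'\|_\infty|f_1-f_1'|$ gives $V$ bounded differences with constants $c_\ell^{(V)}:=2\big(e^{M}c_\ell^{(h)}+B_h c_\ell^{(U)}\big)$, where $c_\ell^{(h)}$ is nonzero only for $\ell\in\{j,k\}$. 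This verifies the first claim for both $U$ and $V$. Applying \citet[Corollary 1]{boucheron2003concentration} to $U$ and to $V$, and absorbing the finite quantities $\sum_\ell(c_\ell^{(U)})^2$ and $\sum_\ell(c_\ell^{(V)})^2$ into a single constant $C$, yields $\mathbb{P}[|U-\mathbb{E}(U)|>t]\le 2\exp\{-t^2/4C\}$ and the analogous bound for $V$.

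The step I expect to be the main obstacle is passing bounded differences through the exponential: $U=e^{g(Y)}$ is not globally Lipschitz in $g$, so the argument genuinely needs compactness of the sample space to produce a finite Lipschitz constant $e^{M}$, and the same point resurfaces (through $U$) in the product bound for $V$. One must also be careful with the derivative of the interaction term in the symmetric parameter $\theta_{jk}$ (the factor of $2$) and with the degenerate case $j=k$. Everything else --- tracking the constants $c_\ell^{(g)}$, $c_\ell^{(U)}$, $c_\ell^{(V)}$ and the exact value of $C$ --- is routine bookkeeping.
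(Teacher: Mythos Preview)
Your proposal is correct and follows essentially the same route as the paper: write $U=\exp\{\sum_{(j,k)\in E}\theta_{jk}Y_jY_k\}$, use compactness of $\prod_\ell[a_\ell,b_\ell]$ to verify the bounded differences property, note $V=Y_jY_k\,U$ is again a bounded function of the same type, and invoke McDiarmid via \citet{boucheron2003concentration}. The only difference is granularity: the paper simply bounds $U$ and $U'$ uniformly and uses $|U-U'|\le |U|+|U'|\le 2\|U\|_\infty$ to obtain a single (crude) bounded-differences constant, whereas you track the Lipschitz constant of $s\mapsto e^s$ on the range of $g$ and get coordinate-wise constants $c_\ell^{(U)}$ depending on the neighborhood $N(\ell)$. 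Your version is tidier and yields a sharper $C$, but it is the same argument.
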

The proof is given in Supplementary Section~\ref{sec:lemma_proof}. Proposition \ref{lemma:sub-gaussian} establishes that for PEGMs with finite support, the specific choice of $\phi = \text{diag}(\theta)$ not only leads to finite variance Monte Carlo estimates of $z(\theta)$ and $\nabla_{\theta}z(\theta)$, but for this specific class, the estimates are sub-Gaussian, which, in turn, implies exponential concentration of the sample mean of $U, V$. The class of PEGMs with finite support is quite rich. Indeed, the Ising model, Potts model, and the truncated Poisson graphical model are all members of this family.

The number of samples to draw in an importance sampling procedure typically depends on the problem and the dimension. In the next result, we make the connection between $N$ and $p$ explicit. The result is based on the work by \cite{chatterjee2018sample} who characterized $N$ in terms of the Kullback-Leibler divergence between $p_\theta$ and $p_\phi$, which we write as $D(p_\theta, p_\phi)$. For PEGMs, $D(p_\theta, p_\phi) = \log z(\theta) - \log z(\theta_0) + (\text{vech}(\theta) - \text{vech}(\phi))^\T \mu$ where $\mu = \nabla \log z(\theta)$ and $\text{vech}(A)$ denotes the half-vectorization of a symmetric matrix $A$. 
\begin{proposition}\label{prop:importance_sample_size}
    Suppose $S_\theta = \{\theta: \mathrm{vech}(\theta) \neq 0\}$ and let $|S_\theta| = \log p $. Set $\phi = \mathrm{diag}(\theta)$. If $N = Cp$ for some large positive constant $C$, then,
    $$\mathbb{E}\left| \frac{z(\phi) T_N^{(z)}}{z(\theta)} - 1 \right| \leq e^{-C \sigma/4} + 4/C,$$
    where $\sigma = \{(\mathrm{vech}(\theta - \phi))^\T \nabla^2 \log z(\theta)(\mathrm{vech}(\theta - \phi))\}^{1/2}$.
\end{proposition}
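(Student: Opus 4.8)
The plan is to recognize the left-hand side as the mean absolute error of an importance-sampling estimator with proposal $p_\phi$ and target $p_\theta$, and then run the truncation argument underlying the sample-size bound of \citet{chatterjee2018sample}, specialized via exponential-family identities.

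First I would reparametrize. Writing $W_i = \mathrm{d}p_\theta/\mathrm{d}p_\phi(Y_{i\bullet}) = \{z(\phi)/z(\theta)\}\,q_\theta(Y_{i\bullet})/q_\phi(Y_{i\bullet})$ with $Y_{i\bullet}\stackrel{iid}{\sim} p_\phi$, one has $z(\phi)T_N^{(z)}/z(\theta) = \bar W_N := N^{-1}\sum_{i=1}^N W_i$ and $\mathbb{E}_{p_\phi}[W_i]=1$, so the quantity of interest is $\mathbb{E}_{p_\phi}|\bar W_N - 1|$. For any level $K>0$, split $W_i = W_i\ind\{W_i\le K\} + W_i\ind\{W_i> K\}$. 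The truncated part has per-sample second moment $\mathbb{E}_{p_\phi}[W^2\ind\{W\le K\}]\le K\,\mathbb{E}_{p_\phi}[W]=K$, so the centered truncated average has $L^1$ norm at most $\sqrt{K/N}$; the excess part is nonnegative with mean $\mathbb{E}_{p_\phi}[W\ind\{W>K\}] = \mathbb{P}_{p_\theta}(W>K)$, so its centered version has $L^1$ norm at most $2\,\mathbb{P}_{p_\theta}(W>K)$ (a nonnegative variable $Z$ with mean $m$ obeys $\mathbb{E}|Z-m|\le 2m$). Since the two truncation levels $\mathbb{E}_{p_\phi}[W\ind\{W\le K\}]$ and $\mathbb{P}_{p_\theta}(W>K)$ sum to $1$, the triangle inequality yields
\begin{equation*}
\mathbb{E}_{p_\phi}\bigl|\bar W_N - 1\bigr|\;\le\;\sqrt{K/N}\;+\;2\,\mathbb{P}_{p_\theta}(W>K).
\end{equation*}

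Next I would exploit the exponential-family structure to evaluate the tail term. Because $\phi=\mathrm{diag}(\theta)$ cancels all node/bias and base-measure contributions, $\log W = \langle \mathrm{vech}(\theta)-\mathrm{vech}(\phi), T(Y)\rangle - \{\log z(\theta)-\log z(\phi)\}$, where $T(\cdot)$ collects the edge sufficient statistics (with the standard vectorization conventions). Hence, under $p_\theta$, $\mathbb{E}[\log W] = D(p_\theta,p_\phi)=:D$ — this is the identity recorded just before the proposition — and, since the covariance of the sufficient statistics equals $\nabla^2\log z(\theta)$, $\mathrm{Var}_{p_\theta}(\log W) = (\mathrm{vech}(\theta-\phi))^\T\nabla^2\log z(\theta)(\mathrm{vech}(\theta-\phi)) = \sigma^2$. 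I would then pick $K$ so that $\sqrt{K/N}$ is exactly the first claimed term, i.e. $K = N e^{-C\sigma/2} = Cp\,e^{-C\sigma/2}$, giving $\sqrt{K/N} = e^{-C\sigma/4}$, and control the tail by centering at the mean: $\mathbb{P}_{p_\theta}(W>K) = \mathbb{P}_{p_\theta}\bigl(\log W - D > \log N - \tfrac{C\sigma}{2} - D\bigr)$, then apply Markov's inequality to $|\log W - D|$ together with $\mathbb{E}_{p_\theta}|\log W - D|\le \sigma$ (Jensen). To convert this into $2/C$, so that $2\,\mathbb{P}_{p_\theta}(W>K)\le 4/C$, one needs the deviation radius $\log N - C\sigma/2 - D$ to be at least $C\sigma/2$; this is exactly where the hypotheses $N=Cp$ and $|S_\theta|=\log p$ enter, through a bound of the form $D(p_\theta,p_\phi)\lesssim |S_\theta| = \log N - \log C$ obtained by expressing $D$ as a sum over the (few) active edges — equivalently, through the second-order expansion $D(p_\theta,p_\phi)\approx \tfrac12\sigma^2$ with remainder controlled by convexity/comparability of $\log z$ on the segment joining $\phi=\mathrm{diag}(\theta)$ and $2\theta-\phi$.

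The main obstacle is precisely this last bookkeeping: verifying that under logarithmic sparsity $N=Cp$ clears the Chatterjee--Diaconis threshold $e^{D+O(\sigma)}$ with enough room that \emph{both} the truncation error $\sqrt{K/N}$ and the tail probability $\mathbb{P}_{p_\theta}(W>K)$ are simultaneously small. This needs a quantitative bound on $D(p_\theta,p_\phi)$, hence uniform control of $\nabla^2\log z$ between $\phi$ and $2\theta-\phi$ — the same flavor of condition as in Proposition~\ref{prop:exp}(a). The remaining ingredients (the truncation inequality above, the exponential-family mean/variance identities for $\log W$, and the Markov--Jensen tail estimate) are routine.
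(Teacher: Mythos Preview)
Your argument is correct and essentially reconstructs, from scratch, the result of \citet{chatterjee2018sample} that the paper simply invokes as a black box. The paper's entire proof is one line: apply their Theorem~1.1 (cited in the paper as Theorem~3.1) with $t = C\sigma$, so that $N \ge e^{D+t}$ yields the bound $e^{-t/4} + 2\sqrt{\mathbb{P}_{p_\theta}(\log W - D > t/2)}$, and then Chebyshev on the tail (using the exponential-family identity $\mathrm{Var}_{p_\theta}(\log W)=\sigma^2$ you also record) gives $2\sqrt{4/C^2}=4/C$; the only model-specific input is the observation that $D(p_\theta,p_\phi)=O(|S_\theta|)=O(\log p)$ under the sparsity hypothesis, so $N=Cp$ clears the threshold. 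Your truncation inequality $\mathbb{E}|\bar W_N - 1|\le \sqrt{K/N}+2\,\mathbb{P}_{p_\theta}(W>K)$ is a valid and slightly different variant of the same mechanism---no square root on the tail probability, compensated by using Markov/Jensen in place of Chebyshev---and lands on the same $4/C$. What you flag as the ``main obstacle'' (the $D\lesssim \log p$ bookkeeping) is precisely the single observation the paper makes; everything else in your proposal is a self-contained unpacking of the cited theorem, which the paper does not attempt. The trade-off is clear: the paper's route is shorter, yours is more transparent about where each term in the bound comes from.
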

See Supplementary Section~\ref{sec:importance_sample_size_proof} for a proof. The above sample size under these settings is also necessary to ensure good results \citep[][Section 3]{chatterjee2018sample}.  As a result, for sparse graphical models, the proposed importance sampling estimate is extremely beneficial in that the computational complexity of $T_N^{(z)}$ for $N = O(p)$ is linear in $p$. We also note here that the sparsity assumption arises naturally in high-dimensional graphical models, and existing methods often assume logarithmic sparsity \citep{ravikumar2010high}. Coincidentally, the same degree of sparsity allows for a practically useful importance sampler. In Supplementary Section \ref{sup:mc_imp}, we consider a Gaussian graphical model as a canonical example of a PEGM with analytically tractable $z(\theta)$. Our results indicate that the proposed estimator works really well up to $p = 300$ when the underlying $\theta$ matrix is sparse.

\section{Applications to Fully Observed Intractable Pairwise Exponential Family Graphical Models}\label{sec:PEGM_inference}
\subsection{MLE and Bayesian inference in low dimensions}
Suppose we observe i.i.d. data $\mathbf{X} = (X_{1\bullet}, \ldots, X_{n\bullet})$ where $X_{i\bullet}$'s are observations from a PEGM with density $p_\theta(x) = q_{\theta}(x)/z(\theta)$. The maximum likelihood estimate satisfies $\hat{\theta} = \argmax_{\theta \in \Theta} \ell(\theta) $ or $\nabla_\theta \ell(\hat{\theta}) = 0$ where $\ell(\theta)= \sum_{i=1}^n \log q_\theta(X_{i\bullet}) - \log z(\theta)$. Denote $\log q_{\theta}(\mathbf{X}) = \sum_{i=1}^n \log q_\theta(X_{i\bullet})$.  The estimate can be obtained by implementing a gradient ascent algorithm subject to the constraint $\theta \in \Theta$. We achieve this by the method of projected gradient ascent, which is a special case of proximal methods under box type constraints such as ours \citep[p.~149,][]{parikh2014proximal}. The projection operation is defined as $\mathcal{P}_{\Theta}(\theta^*) = \argmin_{\theta \in \Theta} \norm{\theta - \theta^*}_2^2$. Thus, the updates of the gradient ascent algorithm are:
\begin{align}\label{eq:projected_gradient_descent}
    \theta^{(t+1)}  = \mathcal{P}_{\Theta}(\tilde \theta^{(t+1)})&= \mathcal{P}_{\Theta}\left(\theta^{(t)} + \gamma \nabla_{\theta} \ell(\theta^{(t)})\right)
=\mathcal{P}_{\Theta}\left( \theta^{(t)} + \gamma \dfrac{\nabla_{\theta}\, q_{\theta^{(t)}}(\mathbf{X})}{q_{\theta^{(t)}}(\mathbf{X})} - \gamma \dfrac{\nabla_{\theta} z(\theta^{(t)})}{z(\theta^{(t)})}\right),
\end{align}
where $\gamma>0$ is the step size.
For the Ising model, $\Theta = \mathbb{R}^{p \times p}$, hence $\mathcal{P}_{\Theta}$ is simply the identity map. However, for the Poisson graphical model, $\Theta = \{\theta: \theta_{jj} \in \mathbb{R}, \theta_{jk}\leq 0, \, j\neq k = 1, \ldots, p\},$ for which $\mathcal{P}_{\Theta}$ simply amounts to thresholding any positive $\{\tilde\theta^{(t+1)}\}_{jk}, j\ne k$, computed via the gradient ascent step, to zero.

For Bayesian inference, consider proper priors $\pi(\theta)$ such that $\int_\Theta \pi(d\theta) = 1$ and assume $\nabla_\theta \pi(\theta)$ exists. Then, $U(\theta) = -\log \pi(\theta \mid \mathbf{X}) = -\ell(\theta) - \log \pi(\theta) + \log m(\mathbf{X})$ where $m(\mathbf{X}) = \int_\Theta \ell(\theta) \pi(\theta) d\theta$. Posterior sampling is usually achieved by constructing a random walk Markov kernel that transitions from the current state $\theta$ to a new candidate state $\theta'$.
Existing algorithms for doing this \citep{liang2010double, moller2006efficient, murray2012mcmc} are based on generating auxiliary samples from the candidate $\theta'$. However, it is well known that the Hamiltonian Monte Carlo sampler could be more efficient in exploring a complicated likelihood surface compared to random walk Metropolis kernels, since the former uses gradient information for the MCMC updates \citep{neal2011mcmc}. This often leads to better exploration of the target density, and the advantages could be substantial in high dimensions. One potential drawback is the gradient may not always be available. Since we have an estimate of $\nabla _\theta \ell(\theta)$ following Proposition \ref{prop:geyer_grad}, and hence of $\nabla_\theta U(\theta)$, we consider a candidate $\theta'$ constructed using reflected Hamiltonian dynamics \citep{betancourt2011nested}, so that $\theta' \in \Theta$. This proposal is then finally accepted with a Metropolis-Hastings correction \citep[Equation 3.6]{neal2011mcmc}. To complete the algorithmic specifications, in low-dimensional situations, we use a flat Gaussian prior (mean 0 and variance 100) on the parameters for Ising model. For PGM, we use Gaussian priors on the diagonal elements of $\theta$ (same parameters as Ising) and a flat exponential prior on the off-diagonal elements (rate 0.01). 
\subsection{Penalized MLE and Bayesian inference in high dimensions}\label{sec:high_dimensional_inference}
The more practically useful and statistically challenging situation arises when $p$ is moderate to high-dimensional. We consider the problem of structure learning under the assumption of a sparse underlying graph $G$. This problem was considered by \citet{meinshausen2006variable} within the context of Gaussian graphical models, \citet{ravikumar2010high} addressed Ising models, and \citet{yang2012graphical} for general PEGMs. All these approaches are based on the idea of neighborhood selection, wherein neighbors of each node are estimated separately. Essentially, these procedures rely on the pseudo-likelihood $\prod_{j=1}^{p}p_{\theta}(X_j \mid X_{-j})$ with a suitable penalty. To our knowledge, joint structure learning for general PEGMs has not been considered before. Let $\norm{M}_1 = \sum_j \sum_k |M_{jk}| $ denote the $\ell_1$-norm of the matrix $M$. We propose the following $\ell_1$-penalized estimator for $\theta$:
\begin{equation}\label{eq:l1_penalized_estimator}
    \hat{\theta}_\lambda = \argmin_{\theta \in \Theta} -\ell(\theta) + \lambda\norm{\theta}_1 = \argmin_{\theta \in \Theta} Q(\theta).
\end{equation}
For computing the estimator, a proximal optimization  algorithm \citep{parikh2014proximal} can be used with minor modifications from the updates in \eqref{eq:projected_gradient_descent}. Indeed, the objective function is equivalent to $Q(\theta) + \ind_{\theta \in \Theta}$. Since both the $\ell_1$-penalty and the parameter space $\Theta$ are convex, proximal maps of the sum of the two functions amount to the composition of the proximal maps of the functions \citep{yu2013decomposing}. This leads to batch updates of the form:
\begin{equation}\label{eq:high_dim_projected_gradient_descent}
    \theta^{(t+1)} = \mathcal{P}_\Theta \, [\text{prox}_{\ell_1}(\theta^{(t)} + \gamma \nabla_{\theta} \ell(\theta^{(t)}))],
\end{equation}
where $\text{prox}_{\ell_1}(x)$ is the well-known soft-thresholding operator applied to each element of $x$ \citep{parikh2014proximal}. The tuning parameter $\lambda$ is selected via cross-validation wherein the out-of-sample log-likelihood is maximized. This usually results in better Frobenius norms but poor structure learning. When structure learning is the main focus, we estimate the underlying graph by averaging estimates $\hat{G}_\lambda$ obtained over a grid of values for $\lambda$.  
A more detailed discussion is deferred to Supplementary Section \ref{app:computational_details}.

Fully Bayesian inference is also possible in the high-dimensional case. Prior choices for sparse parameters is a well-developed field. One popular choice is the family of scale mixtures of normal prior which is elicited as $\theta_{jk} \mid \rho^2_{jk}, \tau^2 \overset{iid}{\sim}\Gauss(0,\rho^2_{jk} \tau^2), \, \rho_{jk}\overset{iid}{\sim} f, \, \tau \sim g$ for some choice of mixing distribution $f$ and $g$. These are computationally beneficial \citep{bhattacharya2016fast} compared to the discrete mixture priors \citep{mitchell1988bayesian, george1993variable}, without compromising statistical accuracy \citep{bhattacharya2015dirichlet, chakraborty2020bayesian, bhadra2015horseshoe+}. Furthermore, posterior modes under these priors can be thought of as a penalized estimator of $\theta$ under the penalty $-\log \pi(\theta)$. In particular, when $\tau = 1$ and $\rho_{jk}^2\mid \lambda \sim \text{Exp}(1/2\lambda^2) $ \citep{park_bayesian_2008}, then $-\log \pi(\theta) = \lambda \norm{\theta}_1$, the posterior mode of which is $\hat{\theta}_\lambda$ in \eqref{eq:l1_penalized_estimator}. In our work, we choose this prior augmented by the hyperprior $\lambda \sim \text{Gamma}(a_\lambda, b_\lambda)$, i.e. if $\pi_L(\theta)$ denotes the joint prior over $\theta$, it can be specified hierarchically as $\theta_{jk}\mid \rho^2_{jk}, \lambda \overset{iid}{\sim} \Gauss(0, \rho_{jk}^2 \lambda^{-1}), \, \rho_{jk}^2 \overset{iid}{\sim}\text{Exp}(1/2)$ and $\lambda \sim \text{Gamma}(a_\lambda, b_\lambda)$. Additionally, to respect the restriction that $\theta \in \Theta$, we truncate the prior $\pi(\theta)$ to $\{\theta: \theta \in \Theta\}$ so that our working prior is $\pi(\theta) \propto  \pi_L(\theta) \mathbb{I}_{\theta \in \Theta}$. An HMC sampler can be designed aided by the scale mixture representation of $\pi(\theta)$. Indeed, the conditional log-posterior $U(\theta \mid \{\rho_{jk}\})$ is differentiable, and the HMC sampler can be implemented conditional on the latent $\{\rho_{jk}\}$. The latent $\{\rho_{jk}\}$'s can be updated given $\theta_{jk}$ using an inverse-Gaussian distribution. Details are given in Supplementary Section~\ref{app:computational_details}.

\section{Applications to Partially Observed Graphical Models: The Restricted and Full Boltzmann Machines}\label{sec:partial}
First consider the training of RBMs. The complete data model is Ising, which is a PEGM, and thus the proposed methodology can be used without much effort. To see this, consider the Expectation Maximization (EM) algorithm for computing the maximum likelihood estimator. Standard calculations yield the following updates for a gradient ascent algorithm for the weights between the hidden and visible variables:
\begin{align}\label{eq:RBM_grad}
\theta^{(t+1)} & = \theta^{(t)} + \gamma \left\{\mathbb{E}(\h=\mathbf{1} \mid \v, \theta = \theta^{(t)}) \v^{T} - \nabla_{\theta} \log z(\theta^{(t)})\right\} \\
           \nonumber = \theta^{(t)} + &\gamma \left\{\mathbb{E}(\h=\mathbf{1} \mid \v, \theta = \theta^{(t)}) \v^{T} - \mathbb{E}_{(\h, \v) \sim p_{\phi}} \left[ \dfrac{\nabla_{\theta}  e^{-E_{\theta}(\v, \h)}}{e^{-E_{\phi}(\v, \h)}}\right]\middle/\mathbb{E}_{(\h, \v) \sim p_{\phi}} \left[ \dfrac{e^{-E_{\theta}(\v, \h)}}{e^{-E_{\phi}(\v, \h)}} \right]\right\},
\end{align}
where the second equality follows from Proposition \ref{prop:geyer_grad}. For RBMs, $\mathbb{E}(\h=\mathbf{1} \mid \v, \theta = \theta^{(t)})$ is a vector with $k$-th element $\mathbb{P}(h_k = 1\mid \v,\theta = \theta^{(t)})$, which is a product of sigmoid functions. Updates for the biases are similar. Similar to the CD algorithm described in Section \ref{sec:BM_background}, the proposed algorithm here also makes use of Monte Carlo sampling to estimate $\nabla_{\theta} \log z(\theta)$. But, unlike CD, in this case samples $(\h, \v) \sim p_{\phi}$ are independent, and estimates the correct gradient. A Gibbs sampler is not needed.

Remarkably, the proposed framework allows a feasible training method for a full BM as well; a problem long considered intractable. Define the parameter matrix $\theta \in \mathbb{R}^{(p+m)\times (p+m)}$ as 
$\theta = \begin{pmatrix}
    \theta_{\mathbf{pp}} & \theta_{\mathbf{pm}} \\
    \theta_{\mathbf{pm}}^\T & \theta_{\mathbf{mm}} 
\end{pmatrix}$, where $\theta_{\mathbf{pp}}$ contains bias and interaction terms between the visible variables, $\theta_{\mathbf{pm}}$ contains the interaction terms between the hidden and visible variables, and $\theta_{\mathbf{mm}}$ contains bias and interaction terms between the hidden variables. At $\theta = \theta^{(t)}$, the expected value of the complete data log-likelihood given the observed $\v$ is:
\begin{align*}
    \mathbb{E}_{\theta^{(t)}, \v}\{\log p_{\theta}(\v, \h)\} =&  \sum_{j} \theta_{jj}v_j + \sum_{k} \theta_{kk}\mathbb{E}_{\theta^{(t)}, \v} (h_k) + \sum_{j \neq j'} \theta_{jj'} v_j v_{j'}  \\
    &+\sum_{k \neq k'} \theta_{kk'} \mathbb{E}_{\theta^{(t)}, \v}(h_k h_{k'})
     + \sum_{j,k} \theta_{jk} v_j \mathbb{E}_{\theta^{(t)}, \v}(h_k) - \log z(\theta),
\end{align*}
for $j,j'=1,\ldots,p;\; k,k'=p+1,\ldots, p+m$, where $\mathbb{E}_{\theta^{(t)}, \v}$ stands for the expectation taken with respect to the density $p_{\theta_t}(\cdot \mid \v)$.
Thus, to train this model using EM, one needs  $\{\mathbb{E}_{\theta_t, \v}(h_k)$,\,  $\mathbb{E}_{\theta^{(t)}, \v}(h_k h_{k'})\}$, or more generally, $\mathbb{E}_{\theta^{(t)}, \v}\{g(\h)\}$, where $g(\h) = g(h_1, \ldots, h_m)$ is a function of the hidden variables  with $g(\h) = h_k \text{ or } h_k h_{k'}$.
Fortunately, the conditional distribution of $\h \mid \v$, i.e. $p_{\theta}(\cdot \mid \v)$ here is also Ising. Indeed, 
$$p_{\theta_{\h \mid \v}}(\h \mid \v) \propto \exp\left\lbrace \sum_{k} 
(\theta_{kk} + \sum_j \theta_{jk} v_j) h_k  + \sum \sum_{k \neq k'} \theta_{kk'} h_k h_{k'}\right\rbrace = \exp({-E_{\theta_{\h\mid \v}}(\v, \h)}),$$
where the notation $\theta_{\h \mid \v}$ stands for the parameter of the conditional Ising model for $\h \mid \v$. A self-normalized importance {color{red} sampling method to estimate} $\mathbb{E}_{\theta, \v}\{g(\h)\}$ is now presented.
\begin{proposition}  Define $w(\v, \h) = \exp\{-E_{\theta_{\h\mid \v}}(\v, \h) + E_{\phi_{\h \mid \v}}(\v, \h)\}$. Then,  
    \begin{align*} 
    \mathbb{E}_{\theta, \v}\{g(h)\} = \mathbb{E}_{(\h \mid \v)\sim \phi_{\h \mid \v}} \left[ g(\h) \dfrac{p_{\theta_{\h \mid \v}}(\h \mid \v)}{p_{\phi_{\h \mid \v}}(\h \mid \v)}\right] = \dfrac{\mathbb{E}_{(\h \mid \v)\sim \phi_{\h \mid \v}}\left[ g(\h) w(\v, \h)\right]}{\mathbb{E}_{(\h \mid \v)\sim \phi_{\h \mid \v}}\left[ w(\v, \h)\right]}, 
    \end{align*}
where a natural choice for $\phi_{\h \mid \v}$ is $\text{diag}(\theta_{\h \mid \v})$. 
\end{proposition}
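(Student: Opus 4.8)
The statement is a standard self-normalized importance sampling identity applied to the conditional Ising law of $\h\mid\v$; the plan is to verify both equalities by a change of measure together with Geyer's ratio identity \eqref{eq:geyer}.

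First I would establish the left-hand equality. Because $\h$ ranges over the finite set $\{0,1\}^m$ and an Ising conditional puts strictly positive mass on every point of $\{0,1\}^m$ for any parameter (in particular for $\phi_{\h\mid\v}=\mathrm{diag}(\theta_{\h\mid\v})$), the ratio $p_{\theta_{\h\mid\v}}(\h\mid\v)/p_{\phi_{\h\mid\v}}(\h\mid\v)$ is well defined pointwise, and
\[
\mathbb{E}_{\theta,\v}\{g(\h)\} = \sum_{\h\in\{0,1\}^m} g(\h)\,p_{\theta_{\h\mid\v}}(\h\mid\v) = \sum_{\h} g(\h)\,\frac{p_{\theta_{\h\mid\v}}(\h\mid\v)}{p_{\phi_{\h\mid\v}}(\h\mid\v)}\,p_{\phi_{\h\mid\v}}(\h\mid\v) = \mathbb{E}_{(\h\mid\v)\sim\phi_{\h\mid\v}}\!\left[g(\h)\,\frac{p_{\theta_{\h\mid\v}}(\h\mid\v)}{p_{\phi_{\h\mid\v}}(\h\mid\v)}\right],
\]
the interchange being trivial since the sum is finite.

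For the right-hand equality I would write each conditional law as unnormalized density over normalizing constant, $p_{\theta_{\h\mid\v}}(\h\mid\v)=e^{-E_{\theta_{\h\mid\v}}(\v,\h)}/z(\theta_{\h\mid\v};\v)$ with $z(\theta_{\h\mid\v};\v)=\sum_{\h}e^{-E_{\theta_{\h\mid\v}}(\v,\h)}$, and likewise for $\phi_{\h\mid\v}$. Taking the ratio gives $p_{\theta_{\h\mid\v}}(\h\mid\v)/p_{\phi_{\h\mid\v}}(\h\mid\v) = \{z(\phi_{\h\mid\v};\v)/z(\theta_{\h\mid\v};\v)\}\,w(\v,\h)$, which pulls the deterministic factor $z(\phi_{\h\mid\v};\v)/z(\theta_{\h\mid\v};\v)$ out of the expectation from the previous display. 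Finally I would identify this factor through Geyer's identity \eqref{eq:geyer} applied to the conditional model: $z(\theta_{\h\mid\v};\v)/z(\phi_{\h\mid\v};\v) = \mathbb{E}_{(\h\mid\v)\sim\phi_{\h\mid\v}}[e^{-E_{\theta_{\h\mid\v}}(\v,\h)}/e^{-E_{\phi_{\h\mid\v}}(\v,\h)}] = \mathbb{E}_{(\h\mid\v)\sim\phi_{\h\mid\v}}[w(\v,\h)]$, so the deterministic factor equals $1/\mathbb{E}_{\phi_{\h\mid\v}}[w(\v,\h)]$, a strictly positive quantity since $w>0$. Substituting collapses the expression to $\mathbb{E}_{\phi_{\h\mid\v}}[g(\h)w(\v,\h)]/\mathbb{E}_{\phi_{\h\mid\v}}[w(\v,\h)]$, the claimed self-normalized form.

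There is no substantial obstacle: every object is a finite sum over $\{0,1\}^m$, so measurability, integrability, and the interchange of summation with the ratio are automatic. The only points worth stating explicitly are (i) positivity of the Ising conditional pmf on $\{0,1\}^m$, which guarantees the importance ratio is defined and the denominator nonzero, and (ii) that the choice $\phi_{\h\mid\v}=\mathrm{diag}(\theta_{\h\mid\v})$ renders $p_{\phi_{\h\mid\v}}(\cdot\mid\v)$ a product of independent Bernoullis, hence exactly and cheaply sampleable — the practical payoff, though not needed for the identity itself.
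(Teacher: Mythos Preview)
Your proof is correct and follows essentially the same approach as the paper: write the density ratio as $w(\v,\h)$ times the constant $z(\phi_{\h\mid\v})/z(\theta_{\h\mid\v})$, then identify that constant via Geyer's identity \eqref{eq:geyer} as $1/\mathbb{E}_{\phi_{\h\mid\v}}[w(\v,\h)]$. Your additional remarks on positivity and finiteness are sound but the paper omits them as routine.
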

\begin{proof}
    Clearly, 
    \begin{align*}
    \mathbb{E}_{(\h \mid \v)\sim \phi_{\h \mid \v}} \left[ g(\h) \dfrac{p_{\theta_{\h \mid \v}}(\h \mid \v)}{p_{\phi_{\h \mid \v}}(\h \mid \v)}\right] = \mathbb{E}_{(\h \mid \v)\sim \phi_{\h \mid \v}} \left[ g(\h) w(\v, \h)\right] \left[\dfrac{z(\theta_{\h \mid \v})}{z(\phi_{\h \mid \v})}\right]^{-1}.
    \end{align*}
    The proof follows by applying \eqref{eq:geyer} to the second term on the right-hand side of the above. 
       \end{proof}
       Thus, the corresponding estimator is: $T_N^{(g(\h))} = \frac{\sum_{i=1}^N g(\h_i)w(\v, \h_i)}{\sum_{i=1}^N w(\v, \h_i)}$, where $\h_i \overset{iid}{\sim} p_{\phi_{\h\mid \v}}$, and the following remarks are in order.

\begin{rem}
    Variance of this self-normalized importance sampling estimate is finite when $\phi_{\h \mid \v} = \text{diag}(\theta_{\h \mid \v})$. Indeed, the estimator $Z$ is of the form $Z = X/Y $. Thus, the Cauchy-Schwarz inequality gives $\text{Var}_{\phi_{\h \mid \v}}(Z) \leq \mathbb{E}_{\phi_{\h \mid v}} (X^2/Y^2) \leq \mathbb{E}_{\phi_{\h \mid v}}(X^2) \mathbb{E}_{\phi_{\h \mid v}}(1/Y^2)$. Since $g(\h)$ is bounded for $g(\h) = h_k$ and $g(\h) = h_k h_{k'};\, k, k'=p+1,\ldots,p+m$; another application of Cauchy--Schwarz inequality along with Proposition  \ref{prop:exp} proves that $\mathbb{E}_{\phi_{\h \mid v}}(X^2) < \infty$. The random denominator $1/Y$ is of the form $\exp\{E_{\theta_{\h\mid \v}}(\v, \h) - E_{\phi_{\h \mid \v}}(\v, \h)\}$. That the variance of this quantity is finite follows from Proposition \ref{prop:exp}.
\end{rem}

\begin{rem}
    The key difference between training a BM and an RBM is that when using \eqref{eq:RBM_grad} for RBMs, one Monte Carlo estimate of $z(\theta)$ is sufficient. However, for BMs, Monte Carlo estimates of $\mathbb{E}_{\theta, \v}\{g(\h)\}$ are needed. Moreover, the parameters of $\h \mid \v$ are dependent on $\v$. A CD-type algorithm can in principle be implemented, but would be immensely expensive since one needs to run an $m$-dimensional Gibbs sampler for each $\v$. On the other hand, for us, batch sampling is possible using the specific choice $\phi_{\h \mid \v} = \mathrm{diag}(\theta_{\h  \mid \v})$.
\end{rem}
\begin{rem}\label{rm:BM_marginal}
    The training of BM also allows one to compute an estimate of the marginal likelihood of the visible variables. Indeed, for all $\h \in \{0,1\}^m$, the following relation holds: $\log p_{\hat{\theta}}(\v) = \log p_{\hat{\theta}}(\v, \h) - \log p_{\hat{\theta}}(\h \mid \v)$. Since both $(\v, \h)$ and $(\h \mid \v)$ are Ising, this quantity is computable using techniques developed here. The marginal likelihood  may then be used for model selection or hyperparameter tuning purposes.
\end{rem}

The proposed framework also encompasses the DBM where there are multiple layers of hidden variables but the underlying graph can be rearranged in a two-layer graph \citep{Aggarwal2018}, with all odd layer variables constituting one half of the graph and even layers constituting the other half, with the restriction that connections between variables in the same layer are not allowed; see Figure \ref{fig:BM}. Thus, the DBM can be thought of as a relaxation of the BM and a similar training algorithm can be implemented in principle to train the model. We omit the details. In the context of DBM,  \citet{salakhutdinov2008quantitative} proposed a related annealed importance sampling scheme \citep{neal2001annealed} where a similar choice of $\phi = \text{diag}(\theta)$ is proposed for computing marginal probabilities. But they do not discuss full likelihood inference, and still train their model via CD. 


\section{Theoretical Properties}\label{sec:theory}
In this section we investigate the properties of the likelihood-based estimates, including the Bayesian posterior, specifically in high-dimensional settings, since the properties of these procedures in low-dimensional problems are fairly well-understood. We first focus on the $\ell_1$-penalized estimator in \eqref{eq:l1_penalized_estimator}. Since its introduction in the regression context \citep{tibshirani1996regression}, $\ell_1$-penalized estimators have been used in many other settings; for instance in sparse covariance estimation problems \citep{rothman2010sparse}, reduced-rank problems \citep{yuan2007dimension}, generalized linear models \citep{ravikumar2010high, yang2012graphical}, among others. These estimators have been shown to be consistent with the optimum rate of convergence for sparse problems. We establish these properties of the estimator in \eqref{eq:l1_penalized_estimator} for PEGMs under fairly general assumptions, drawing parallels with the existing literature. In the sequel, we write $\theta_0 \in \Theta$ as the true data-generating parameter, and  $\mathbf{S} = \{(j,k): \theta_{0,jk} \neq 0, j,k = 1, \ldots, p\}$,\; $|\mathbf{S}| = s,$ where $|\cdot|$ denotes the cardinality. For matrices $A$, $B$, we write $A \otimes B$ as their Kronecker product. For two sequences $a_n , b_n$, if $a_n = C b_n$ for some $C\in(0,\infty)$, then we write $a_n \asymp b_n$.
We start by stating the assumptions.
\begin{assumption}[Dimension]\label{ass:dimension}
    We assume that $\log p = o(n)$, $s = O(\log p)$ and $s\log p = o(n)$ where $a_n = o(b_n)$ denotes $a_n/b_n \to 0$ as $n \to \infty$.
\end{assumption}

\begin{assumption}[Regularity]\label{ass:regularity}
The space $\Theta$ of feasible parameters is open and $\theta_0$ is in the interior of $\Theta$. Write $\frac{\partial^2 A(\theta)}{\partial \theta \partial \theta^\T} = G(\xi) = (G_{jk})_{j,k=1}^{p^2}$ where $\xi = (\theta \otimes \theta)$. Set $\xi_0 = (\theta_0 \otimes \theta_0)$. We also assume the PEGM is sufficiently regular, i.e., for some $\beta, \bar{\beta} > 0,$ and for any $\tilde{\Delta}^{p^2 \times p^2} $ with $\sigma_{\max}(\tilde{\Delta}) = \alpha_n$ such that $\alpha_n \to 0$, we have
    $$\lambda_{\min}\left[ G(\xi_0)\right]\geq \beta, \quad \sigma_{\max}(\theta_0) \leq \bar{\beta}, \quad 
    \sigma_{\max}\left[ D_{\xi_0^\T}\left\lbrace \mathrm{tr}(\tilde{\Delta} G(\xi))\right\rbrace\right] \asymp \sigma_{\max}(\tilde{\Delta}) ,
    $$
    where $\lambda_{\min}$ denotes the minimum eigenvalue, $\sigma_{\max}$ denotes the maximum singular value, and the matrix $D_{X_0}[f(X)]$ has elements $\frac{\partial f(X)}{\partial X_{jk}}$, evaluated at $X_0$.
\end{assumption}

\begin{assumption}[Bounded moments]\label{ass:moments}
     For all $j, k = 1, \ldots, p,$ assume, $\mathbb{E}_{\theta_0}[|T(X_j)|] \leq \kappa_1$, $\mathbb{E}_{\theta_0}[T^2(X_j)] \leq \kappa_2$, and  $\mathbb{E}_{\theta_0}[|T(X_j,X_k)|] \leq \kappa_3$. Also, assume that,
    $\max_{u: |u|<1} \dfrac{\partial^2}{\partial \theta_j^2} A(\theta_0 + ue_j) \leq \kappa_4,$
    where $e_j \in \mathbb{R}^{p^2}$ is the unit vector corresponding to the index $j$. Assume, furthermore,
    $\max_{\eta: |\eta|<1} \dfrac{\partial^2}{\partial \eta^2} \bar{A}_{jk}(\theta_0;\eta) \leq \kappa_5,$
    where, 
    $$\bar{A}_{jk}(\theta;\eta) \coloneqq \log \int_{\mathcal{X}} \exp\left\{ \eta T(X_j, X_k) + \sum_{j=1}^p  \theta_j T(X_j) + \sum_{\substack{j \neq k = 1}}^{p} \theta_{jk}T(X_j, X_k) + \sum_{j=1}^p C(X_j)\right\} \mathrm{d}x.$$
\end{assumption}
Assumption \ref{ass:regularity} imposes the condition that the covariance matrix of the model is not singular.  The condition $\sigma_{\max}(\theta_0) \leq \bar{\beta}$ together with $ \sigma_{\max}[ D_{\xi_0^\T}\{\text{tr}(\tilde{\Delta} G(\xi))\}] \leq C ||\tilde{\Delta}||_2$ ensures that the covariance matrix at $\theta_0$ is close to covariance matrices at parameters $(\theta_0 + \Delta)$, when $\norm{\Delta}_F$ is sufficiently small. For example, consider a zero mean Gaussian graphical model, which is a PEGM with tractable normalizing constant. The parameter of interest is the precision matrix. Covariance of the model at, say $\Omega_0$, is $\Omega_0^{-1} \otimes \Omega_0^{-1}$, and at $\Omega_0 + \Delta$ is $(\Omega_0 + \Delta)^{-1} \otimes (\Omega_0 + \Delta)^{-1}$. This difference is controlled by assuming $\lambda_{\max}(\Omega_0^{-1}) = \sigma_{\max}^2(\Omega_0^{-1}) < \rho$, for some $\rho>0$. The advantage here is that the canonical parameter and the mean parameter are the same \citep[Chapter 3]{wainwright2008graphical}. Hence, assuming $\lambda_{\max}(\Omega_0^{-1})< \rho$ is sufficient to ensure $|| D_{\xi_0^\T}\{\text{tr}(\tilde{\Delta} G(\xi))\}||_2$ is controlled.
Assumption \ref{ass:moments} controls the size of the deviations of the sample moments from the population moments. We have the following result, with a proof in Supplementary Section~\ref{sec:proof_l1_consistency}.
\begin{theorem}\label{thm:l1_consistency}
  Suppose we observe $n$ i.i.d copies $\mathbf{X} = (X_{1\bullet}, \ldots, X_{n\bullet})\sim PEGM(\theta)$. Let $r_n = \sqrt{\frac{s\log p}{n}}$ and $\hat{\theta}_\lambda$ be a solution to \eqref{eq:l1_penalized_estimator}. Under Assumptions \ref{ass:dimension}--\ref{ass:moments}, for $\lambda \asymp \sqrt{\frac{\log p}{n}}$, there exists sufficiently large $M>0$ such that   as $n \to \infty$, one has:
  $$\mathbb{P}_{\theta_0}\left[ \norm{\hat{\theta_\lambda} - \theta_0}_F^2 > Mr_n\right] \to 0,$$
  where $\norm{\cdot}_F$ denotes the Frobenius norm.
\end{theorem}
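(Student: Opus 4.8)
The plan is to follow the standard analysis of $\ell_1$-penalized M-estimators (in the spirit of \citet{rothman2010sparse,ravikumar2010high}), exploiting the exponential-family structure so that, up to an additive constant, $\ell(\theta) = \langle \mathrm{vech}(\theta), \bar T\rangle - nA(\theta)$, with $\nabla\ell(\theta_0) = \bar T - n\nabla A(\theta_0)$ a centered sum of i.i.d. terms and $-\nabla^2\ell(\theta) = nG(\theta\otimes\theta)$ the (scaled) Fisher information. The first ingredient is a \textbf{deviation bound on the score at the truth}: on an event $\mathcal{E}_1$ with $\mathbb{P}_{\theta_0}(\mathcal{E}_1)\to 1$, I would show $\|\nabla\ell(\theta_0)\|_\infty \le n\lambda/2$ for $\lambda \asymp \sqrt{\log p/n}$. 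Coordinate-wise this is the concentration of $n^{-1}\sum_i\{T(X_{ij}) - \mathbb{E}_{\theta_0}T(X_j)\}$ and of $n^{-1}\sum_i\{T(X_{ij},X_{ik}) - \mathbb{E}_{\theta_0}T(X_j,X_k)\}$ around zero; the bounds in Assumption~\ref{ass:moments} on the moment generating functions (through $\bar A_{jk}$ and $\partial^2_{\theta_j}A$) supply the sub-exponential tail control needed for a Bernstein-type inequality, and Assumption~\ref{ass:dimension} ($\log p = o(n)$) renders the union bound over the $O(p^2)$ entries negligible.

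The second ingredient is the \textbf{basic inequality and cone condition}. From optimality, $Q(\hat\theta_\lambda) \le Q(\theta_0)$ gives $\ell(\theta_0) - \ell(\hat\theta_\lambda) \le \lambda(\|\theta_0\|_1 - \|\hat\theta_\lambda\|_1)$; writing $\Delta = \hat\theta_\lambda - \theta_0$, expanding $\ell$ to first order, invoking the score bound on $\mathcal{E}_1$, and splitting $\|\cdot\|_1$ over $\mathbf S$ and $\mathbf S^c$ in the usual way yields $\|\Delta_{\mathbf S^c}\|_1 \le 3\|\Delta_{\mathbf S}\|_1$, hence $\|\Delta\|_1 \le 4\sqrt s\,\|\Delta\|_F$, i.e., $\Delta$ lies in the standard restricted cone $\mathcal{C}(\mathbf S)$.

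The third and most delicate ingredient is \textbf{restricted strong convexity with localization}. A second-order expansion along the segment from $\theta_0$ to $\hat\theta_\lambda$ (feasible since $\Theta$ is convex) gives $\ell(\theta_0) - \ell(\hat\theta_\lambda) + \langle\nabla\ell(\theta_0),\Delta\rangle = \tfrac n2 \langle\Delta, G(\bar\xi)\Delta\rangle$ for some $\bar\theta$ on that segment with $\bar\xi = \bar\theta\otimes\bar\theta$. Assumption~\ref{ass:regularity} supplies $\lambda_{\min}(G(\xi_0)) \ge \beta$ and, via the smoothness bound $\sigma_{\max}[D_{\xi_0^\T}\{\mathrm{tr}(\tilde\Delta G(\xi))\}] \asymp \sigma_{\max}(\tilde\Delta)$ together with $\sigma_{\max}(\theta_0) \le \bar\beta$, a Lipschitz-type estimate $\langle\Delta, (G(\bar\xi) - G(\xi_0))\Delta\rangle \lesssim \bar\beta\,\|\Delta\|_F^2\,\|\bar\xi - \xi_0\|_2 \lesssim \bar\beta^2\,\|\Delta\|_F^3$ (using $\|\bar\theta\otimes\bar\theta - \theta_0\otimes\theta_0\|_2 \lesssim (\bar\beta + \|\Delta\|_F)\|\Delta\|_F$), so that $\langle\Delta, G(\bar\xi)\Delta\rangle \ge \tfrac\beta2\|\Delta\|_F^2$ whenever $\|\Delta\|_F$ is below a threshold $\alpha_n$ that may be taken to vanish under Assumption~\ref{ass:dimension}. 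To close the argument self-consistently I would use the convex-trap device: the map $F(\Delta) = Q(\theta_0 + \Delta) - Q(\theta_0)$ is convex with $F(0) = 0$, and combining the three ingredients shows that on $\mathcal{E}_1$, $F(\Delta) \ge \tfrac{n\beta}{4}\|\Delta\|_F^2 - \tfrac{3n\lambda}{2}\sqrt s\,\|\Delta\|_F > 0$ for every $\Delta \in \mathcal{C}(\mathbf S)$ with $\|\Delta\|_F = R$, once $R$ is a large-constant multiple of $\lambda\sqrt s \asymp r_n$; since $r_n \to 0$ this radius is admissible for the localization, and convexity of $F$ forces $\|\hat\theta_\lambda - \theta_0\|_F \le R$. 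Unwinding constants gives $\|\hat\theta_\lambda - \theta_0\|_F \lesssim r_n$ on $\mathcal{E}_1$, hence $\|\hat\theta_\lambda - \theta_0\|_F^2 \le M r_n$ for $M$ large; since $\mathbb{P}_{\theta_0}(\mathcal{E}_1)\to 1$, the stated conclusion follows.

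The \textbf{main obstacle} is the restricted-strong-convexity step. Unlike Gaussian or bounded-support models, a general PEGM (e.g., Poisson) has unbounded sufficient statistics with only sub-exponential tails and a Hessian $G(\theta\otimes\theta)$ that may degenerate as $\theta$ moves away from $\theta_0$, so the curvature lower bound is genuinely local; one must translate the third-derivative/smoothness content of Assumption~\ref{ass:regularity} into a \emph{uniform-over-the-cone} lower bound on $G(\bar\xi)$ while simultaneously certifying that the intermediate point $\bar\theta$ remains inside the region where that bound is valid, and the interplay between the localization radius and the achieved rate is where the care is needed. A secondary technical point is establishing the sub-exponential score concentration from the moment control on $\bar A_{jk}$ and $A$, which may require a truncation step because the sufficient statistics need not be bounded.
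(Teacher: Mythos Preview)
Your proposal is correct and follows essentially the same route as the paper: both arguments hinge on the convex-trap (shell) device applied to $F(\Delta)=Q(\theta_0+\Delta)-Q(\theta_0)$, with concentration of the empirical sufficient statistics supplying the score bound (the paper packages this as Proposition~\ref{prop:tail_bound}) and Assumption~\ref{ass:regularity} supplying the local Hessian lower bound via the same Taylor/Weyl perturbation of $G(\xi)$ around $\xi_0$. The one stylistic difference is that you first establish the cone constraint $\|\Delta_{\mathbf S^c}\|_1\le 3\|\Delta_{\mathbf S}\|_1$ and then run the shell argument on the cone, whereas the paper works on the full sphere $\|\Delta\|_F=Mr_n$ and absorbs the $\|\Delta_{\mathbf S^c}\|_1$ term directly into the penalty by taking $\lambda=(C_1/\epsilon)\sqrt{\log p/n}$ with $\epsilon<1$; this avoids the cone step entirely but is otherwise the same bookkeeping.
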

 \noindent The rate of convergence of the $\ell_1$-penalized estimator matches with existing rates in sparse parameter estimation theory; see for example \citet{rothman2010sparse}.

Frequentist evaluation of Bayesian posteriors is typically studied through the lens of posterior consistency and contraction rates \citep{ghosal2017fundamentals}. In models with increasing number of parameters, a standard condition for posterior consistency is positive prior support on shrinking Kullback--Liebler balls around the true distribution. Suppose $p_{\theta_0}$ is the true density of the data and $p_{\theta}$ is the fitted density. Let $\pi(\theta)$ be the prior density on $\theta$ and $\text{KL}(p_{\theta_0}, p_{\theta})$ denote the Kullback--Liebler divergence between $p_{\theta_0}$ and $p_{\theta}$. In Proposition \ref{prop:KL_set}, we show when $p_{\theta_0}$ and $p_{\theta}$ are PEGMs, then the set $\{\theta: \text{KL}(p_{\theta_0}, p_{\theta})< \epsilon \}$ for any $\epsilon>0$ can be characterized in terms of $\norm{\text{vech}(\theta_{0})-\text{vech}(\theta)}_2^2$, where $\text{vech}(\theta)$ stands for the half-vectorization of a symmetric matrix $\theta$. As a result, any prior distribution that has positive mass on the set $\{\theta: \norm{\text{vech}(\theta_{0})-\text{vech}(\theta)}_2^2< \epsilon\}$ for $\epsilon >0$ yields a consistent posterior. 
Properties of global-local shrinkage priors around small Euclidean neighborhoods of a sparse parameter are well studied; see, e.g., \citet{bhattacharya2016suboptimality}. We next establish posterior concentration for PEGMs under prior $\pi(\theta) \propto \pi_L(\theta)\mathbb{I}_{\theta \in \Theta}$ where $\pi_L(\theta)$ is defined at the end of Section \ref{sec:high_dimensional_inference}. Let $H^2(f_1, f_2) = (1/2)\int (\sqrt{f_1(x)} - \sqrt{f_2(x)})^2 dx$ be the squared Hellinger distance between densities $f_1, f_2$ with respect to the Lebesgue measure. We have the following result on the rate of posterior contraction.
\begin{theorem}\label{thm:posterior_consistency}
    Suppose we observe $n$ i.i.d copies $\mathbf{X} = (X_{1\bullet}, \ldots, X_{n\bullet})\sim PEGM(\theta)$ and $\theta$ is endowed with prior $\pi(\theta) \propto \pi_L(\theta)\mathbb{I}_{\theta \in \Theta}$ where $\pi_L(\theta)$ is the marginal prior obtained under hierarchy $\theta_j \mid \lambda \overset{iid}{\sim} \text{Laplace}(\lambda^{-1})$ and $\lambda \sim \text{Gamma}(a_\lambda, b_\lambda)$. Suppose, $a_\lambda = O(s\log p)$ and $b_\lambda = O\left(s\sqrt{\frac{s\log p}{np}}\right)$. Let Assumptions \ref{ass:dimension}--\ref{ass:moments} hold, and assume that $\lambda_{\max}[G(\xi_0)] \leq \tilde{\beta}$ for some $\tilde{\beta}>0$. Set $\bar{\epsilon}_n = \{(s \log^2 p)/n\}^{1/2}$. Then, if $\pi_L(\Theta) \geq e^{-K\log p}$ for some $K>0$, we have in $\mathbb{P}_{\theta_0}-$probability: 
    $$\pi\left[ \theta : H^2(p_{\theta_0}, p_\theta) > M_n\bar{\epsilon}_n^2 \mid \mathbf{X}\right] \to 0.$$
\end{theorem}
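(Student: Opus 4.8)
The plan is to derive the statement from the general theory of posterior contraction for i.i.d.\ observations \citep{ghosal2017fundamentals}, which reduces the claim to three ingredients for the rate $\bar\epsilon_n$: (i) a prior-mass (Kullback--Leibler) bound, $\pi(B_n)\gtrsim e^{-cn\bar\epsilon_n^2}$, where $B_n=\{\theta:\mathrm{KL}(p_{\theta_0},p_\theta)\le\bar\epsilon_n^2,\ \mathrm{Var}_{\theta_0}(\log p_{\theta_0}/p_\theta)\le\bar\epsilon_n^2\}$; (ii) sieves $\Theta_n\subset\Theta$ with Hellinger metric entropy $\log N(\bar\epsilon_n,\Theta_n,H)\le n\bar\epsilon_n^2$, which in turn supplies the tests needed against Hellinger balls; and (iii) negligible prior mass off the sieve, $\pi(\Theta\setminus\Theta_n)\le e^{-c'n\bar\epsilon_n^2}$ for a sufficiently large $c'$. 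Here $n\bar\epsilon_n^2=s\log^2 p$, and by Assumption~\ref{ass:dimension} $\log p=o(s\log^2 p)$, so the truncation of $\pi_L$ to $\Theta$ contributes only a harmless factor, controlled by the hypothesis $\pi_L(\Theta)\ge e^{-K\log p}$; since $\theta_0$ lies in the interior of $\Theta$ (Assumption~\ref{ass:regularity}), a small Euclidean ball about $\theta_0$ is contained in $\Theta$, and on such a ball $\pi(\cdot)=\pi_L(\cdot)/\pi_L(\Theta)\ge\pi_L(\cdot)$.

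For (i), I would first combine Proposition~\ref{prop:KL_set} with the regularity bounds $\beta\le\lambda_{\min}[G(\xi_0)]$, $\lambda_{\max}[G(\xi_0)]\le\tilde\beta$ and the bounded-moment assumption to show that, in a neighborhood of $\theta_0$, both the KL divergence and its variation are comparable to $\norm{\mathrm{vech}(\theta-\theta_0)}_2^2$, so that $B_n\supset\{\theta:\norm{\theta-\theta_0}_F\le c_0\bar\epsilon_n\}$ for some $c_0>0$. It then remains to lower bound the $\pi_L$-mass of this Frobenius ball. Conditioning on $\lambda$, the entries $\theta_{jk}$ are i.i.d.\ Laplace, so the ball probability factorizes into a product over the $s$ active coordinates --- where $\sigma_{\max}(\theta_0)\le\bar\beta$ bounds each $|\theta_{0,jk}|$ and yields a factor no smaller than the Laplace density evaluated at distance $\bar\beta$ --- and over the $\binom{p}{2}+p-s$ null coordinates, where one bounds the probability of a small ball around $0$. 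Averaging against the $\mathrm{Gamma}(a_\lambda,b_\lambda)$ hyperprior, the prescribed orders $a_\lambda=O(s\log p)$ and $b_\lambda=O\{s(s\log p/(np))^{1/2}\}$ are calibrated precisely so that this mixture places mass at least $e^{-c\,s\log^2 p}$ on the ball; this is the step in which the extra $\log p$ factor in $\bar\epsilon_n^2$ (relative to the minimax $s\log p/n$) is incurred, reflecting the known price of a single global shrinkage scale \citep{bhattacharya2016suboptimality}.

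For (ii) and (iii), I would take $\Theta_n=\{\theta\in\Theta:\norm{\theta}_1\le R_n\}$. Since each $p_\theta$ is an exponential family with the moments of Assumption~\ref{ass:moments} under control, on $\Theta_n$ the squared Hellinger distance is dominated by a (dimension-dependent) multiple of the squared Euclidean distance between parameters, via the smoothness and bounded curvature of $\log z(\theta)$ on $\Theta_n$ (Assumptions~\ref{ass:regularity}--\ref{ass:moments}); hence $\log N(\bar\epsilon_n,\Theta_n,H)$ is bounded by the metric entropy of an $\ell_1$-ball of radius $R_n$ in $\mathbb{R}^{\binom{p}{2}+p}$ at a rescaled resolution, of order $(R_n/\bar\epsilon_n)^2\log p$, which is $\le s\log^2 p$ once $R_n$ is taken sufficiently small. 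On the other hand, under $\pi_L(\cdot\mid\lambda)$ the quantity $\norm{\theta}_1$ is a sum of $\binom p2+p$ i.i.d.\ exponentials, hence a Gamma variate with an exponential tail; integrating over $\lambda\sim\mathrm{Gamma}(a_\lambda,b_\lambda)$ and taking $R_n$ sufficiently large gives $\pi_L(\norm{\theta}_1>R_n)\le e^{-c'\,s\log^2 p}$. Reconciling the two requirements on $R_n$ (small enough for the entropy bound, large enough for the excess-mass bound) fixes $R_n$, and feeding (i)--(iii) into the contraction theorem gives $\pi[\theta:H^2(p_{\theta_0},p_\theta)>M_n\bar\epsilon_n^2\mid\mathbf{X}]\to0$ in $\mathbb{P}_{\theta_0}$-probability.

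I expect the main obstacle to be the prior-mass condition (i): a Laplace prior concentrated at a single scale cannot simultaneously place adequate mass near the $s$ nonzero entries of $\theta_0$ and near $0$ on the remaining roughly $p^2/2$ coordinates, so the argument must be threaded through the Gamma hyperprior, whose scale is forced to adapt to the (unknown) sparsity $s$ --- this is exactly why $a_\lambda,b_\lambda$ must be tied to $s,p,n$ as stated, and why the attainable rate carries the extra $\log p$. A secondary technical point is making the global Hellinger-to-Euclidean comparison on the $\ell_1$-sieve explicit and uniform in $p$, and choosing $R_n$ so that the entropy bound and the tail bound for $\norm{\theta}_1$ hold at once.
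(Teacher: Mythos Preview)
Your overall strategy---verifying the three Ghosal--van der Vaart conditions---and your treatment of the prior-mass condition (i) align with the paper. The paper makes (i) concrete by restricting the hyperparameter to an explicit window $\lambda\in I_\lambda=[\lambda_1,2\lambda_1]$ with $\lambda_1=\sqrt{np\log p/s}$, bounding the conditional prior mass on the Euclidean ball for $\lambda\in I_\lambda$, and then showing the Gamma prior assigns $I_\lambda$ probability at least $e^{-Cs\log p}$; your sketch is compatible with this but leaves the choice of window implicit, which is precisely where the calibration $a_\lambda=O(s\log p)$, $b_\lambda=O\bigl(s\sqrt{s\log p/(np)}\bigr)$ is actually used.

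The genuine gap is in your sieve. With $\Theta_n=\{\|\theta\|_1\le R_n\}$, your two constraints on $R_n$ are incompatible under this prior. The entropy bound $(R_n/\bar\epsilon_n)^2\log p\le n\bar\epsilon_n^2$ forces $R_n\lesssim s\log^{3/2}p/\sqrt{n}$. But conditional on $\lambda$, $\|\theta\|_1$ is $\mathrm{Gamma}(p^\star,\lambda)$ with $p^\star=p(p+1)/2$ and mean $p^\star/\lambda$; at the hyperprior's natural scale $\lambda\asymp\sqrt{np\log p/s}$ this mean is of order $p^{3/2}\sqrt{s/(n\log p)}$. Since $p^{3/2}\sqrt{s/(n\log p)}\big/\{s\log^{3/2}p/\sqrt{n}\}=p^{3/2}/(\sqrt{s}\,\log^2 p)\to\infty$ under Assumption~\ref{ass:dimension}, the prior mean of $\|\theta\|_1$ already exceeds the largest admissible $R_n$, so no tail bound can give $\pi_L(\|\theta\|_1>R_n)\le e^{-c'n\bar\epsilon_n^2}$; the ``reconciliation'' you anticipate does not exist. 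The paper instead takes an \emph{approximate-sparsity} sieve, $\mathcal{P}_n=\{\theta:\sum_{j}\ind\{|\theta_j|>\delta_n\}\le Cn\bar\epsilon_n^2,\ \|\theta\|_2\le B\}$ with $\delta_n=s\bar\epsilon_n/\sqrt{p\log p}$. Its entropy is driven by the count constraint (handled via the argument of \citet{bhattacharya2015dirichlet}), while the complement splits into a binomial tail for the number of coordinates exceeding $\delta_n$---controlled by showing the marginal exceedance probability satisfies $\mathbb{P}(|\theta_j|>\delta_n)\le p^{-\gamma}$ under the Pareto-type marginal of the Laplace--Gamma mixture---plus a Markov bound for $\{\|\theta\|_2>B\}$ with $B$ allowed to be exponentially large in $n\bar\epsilon_n^2$. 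An $\ell_1$-ball is too isotropic to exploit the shrinkage structure of this prior; the sieve must encode approximate sparsity directly.
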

See Supplementary Section~\ref{app:consistency_proof} for a proof. The posterior contraction rate is minimax up to a logarithmic term in the dimension $p$. The condition $\pi_L(\Theta) \geq e^{-K\log p}$ implies that the set of permissible parameters $\Theta$ is not too small relative to $\pi_L$. This is trivially satisfied for the Ising model as $\Theta = \mathbb{R}^{p(p+1)/2}$. For PGM, the corresponding probability is $e^{-2\log \{p(p+1)/2\}}$, which satisfies the condition.

We next turn our attention to partially observed models, specifically the RBM. Suppose we observe $n$ i.i.d variables $\mathbf{V} = (\v_{1\bullet}, \ldots, \v_{n\bullet})^\T$ with support on $\{0,1\}^p$.
 When an RBM is fitted to the observed data, maximum likelihood inference involves optimizing the marginal likelihood \eqref{eq:RBM_marginal}. The iterative scheme \eqref{eq:RBM_grad} essentially performs Monte Carlo EM algorithm on the complete-data log-likelihood. Let $Q(\theta; \psi) = \mathbb{E}_{\psi} [\log p_\theta (\v, \h) \mid \h]$, and $Q^\star(\theta; \psi)$ be its Monte Carlo estimate where we use $T_N(\theta) = T_N^{(\nabla_\theta z)(\theta)}/T_N^{(z)}(\theta)$ to estimate $\nabla_\theta \log z(\theta)$. Our next result shows that the sequence $\theta^{(t)}$ approaches a stationary point of the observed log-likelihood with probability 1. In spirit, our result parallels the convergence result of CD learning developed in \cite{jiang2018convergence}.
The proof is in Supplementary Section~\ref{proof:em_rbm}.
\begin{theorem}\label{thm:EM_RBM}
Consider the iterations $\theta^{(t+1)} = \theta^{(t)} + \gamma_t \nabla Q^\star(\theta^{(t)}; \theta^{(t)})$. Suppose $\gamma_t$ is a sequence of step-size satisfying $0\leq \gamma_t\leq 1$ for all $t\geq 1$ and $\sum_{t=1}^\infty \gamma_t = \infty $ and $\sum_{t=1}^\infty \gamma_t^2 < \infty$. Let $\mathcal{L} = \{\theta: \nabla_\theta \log p_\theta(\v) = 0\}$ be the set of stationary points of the observed likelihood function and $\{N_t\}$ be a sequence of increasing Monte Carlo sample sizes. Then $\lim_{t \to \infty} \inf \{\norm{\theta^{(t)} - \theta}_2: \theta \in \mathcal{L}\} \to 0$ with probability 1. 
\end{theorem}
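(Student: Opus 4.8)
The plan is to read the recursion $\theta^{(t+1)} = \theta^{(t)} + \gamma_t \nabla Q^\star(\theta^{(t)};\theta^{(t)})$ as a Robbins--Monro stochastic approximation whose mean field is the gradient of the observed-data log-likelihood. The entry point is Fisher's identity for EM: writing $Q(\theta;\psi) = \mathbb{E}_\psi[\log p_\theta(\v,\h)\mid \v]$ and $H(\theta;\psi) = \mathbb{E}_\psi[\log p_\theta(\h\mid \v)\mid \v]$, one has $\log p_\theta(\v) = Q(\theta;\psi) - H(\theta;\psi)$ with $\nabla_\theta H(\theta;\psi)\big|_{\theta=\psi}=0$, so $\nabla_\theta Q(\theta;\psi)\big|_{\theta=\psi} = \nabla_\theta \log p_\theta(\v)$. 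Evaluated at $\theta=\psi=\theta^{(t)}$, this shows the noiseless drift of the recursion is $h(\theta) \coloneqq \nabla_\theta \log p_\theta(\v)$, whose zero set is exactly $\mathcal{L}$, and that $W(\theta) = -\log p_\theta(\v)$ is a natural Lyapunov function: it is bounded below since $\v\in\{0,1\}^p$ forces $p_\theta(\v)\le 1$, and $\langle \nabla W(\theta), h(\theta)\rangle = -\norm{h(\theta)}_2^2 \le 0$, with equality only on $\mathcal{L}$.

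Second, I would decompose the Monte Carlo gradient. Let $\mathcal{F}_t = \sigma(\theta^{(1)},\ldots,\theta^{(t)})$ and $\phi_t \coloneqq \mathrm{diag}(\theta^{(t)})$. Since $\nabla Q^\star$ replaces $\nabla_\theta \log z(\theta^{(t)})$ by the self-normalized estimator $T_{N_t}(\theta^{(t)}) = T_{N_t}^{(\nabla_\theta z)}(\theta^{(t)})/T_{N_t}^{(z)}(\theta^{(t)})$, write
\[
\nabla Q^\star(\theta^{(t)};\theta^{(t)}) = h(\theta^{(t)}) + \xi_t + b_t,
\]
where $b_t = \mathbb{E}[\nabla Q^\star(\theta^{(t)};\theta^{(t)})\mid \mathcal{F}_t] - h(\theta^{(t)})$ is a bias term and $\xi_t$ is a martingale difference relative to $\{\mathcal{F}_t\}$. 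A first-order Taylor expansion of the ratio $x/y$ about $\bigl(\mathbb{E}\,T_{N_t}^{(\nabla_\theta z)},\,\mathbb{E}\,T_{N_t}^{(z)}\bigr) = \bigl(\nabla z(\theta^{(t)})/z(\phi_t),\, z(\theta^{(t)})/z(\phi_t)\bigr)$ — legitimate because the denominator concentrates around a strictly positive value and both numerator and denominator have variance $O(1/N_t)$ by Proposition~\ref{prop:exp}, uniformly on compacta (and sub-Gaussian by Proposition~\ref{lemma:sub-gaussian} in the bounded-support case) — gives $\norm{b_t}_2 = O(1/N_t)$ and $\mathbb{E}[\norm{\xi_t}_2^2\mid\mathcal{F}_t] = O(1/N_t)$. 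With $N_t\to\infty$ one gets $b_t\to 0$; and since $N_t\ge 1$ and $\sum_t\gamma_t^2<\infty$, the series $\sum_t \gamma_t\xi_t$ converges almost surely by the $L^2$-martingale convergence theorem, while $\sum_t\gamma_t\norm{b_t}_2$ is handled by a Kronecker-lemma argument (or Cauchy--Schwarz with $\sum_t\gamma_t^2<\infty$ if additionally $\sum_t N_t^{-2}<\infty$).

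Third, with the drift identified as $h=\nabla W$, a bounded-below Lyapunov function $W$, a convergent weighted noise series, and vanishing bias, I would invoke a Kushner--Clark / Delyon--Lavielle--Moulines type stochastic approximation theorem (the same structure used in \citet{jiang2018convergence} for contrastive divergence) to conclude that, on the event that the iterates remain bounded, $\theta^{(t)}$ converges to a connected component of $\{\theta:h(\theta)=0\}=\mathcal{L}$; in particular $\inf_{\theta\in\mathcal{L}}\norm{\theta^{(t)}-\theta}_2\to 0$ with probability one. The remaining point is stability: the recursion as written has no projection and $\Theta=\mathbb{R}^{(p+m)\times(p+m)}$ for the Ising complete-data model, so a.s.\ boundedness of $\{\theta^{(t)}\}$ does not come for free and the RBM marginal likelihood need not be coercive. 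One would secure it either by attaching a Chen-type truncation/resetting device that keeps $\theta^{(t)}$ in an expanding sequence of compact sets, or by carrying the standard standing assumption $\sup_t\norm{\theta^{(t)}}_2<\infty$ a.s.\ that is customary in the convergence theory of Monte Carlo EM.

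The main obstacle is therefore twofold: (a) rigorously bounding the bias of the self-normalized (ratio) gradient estimator and making it $\gamma_t$-summable — this is exactly where Propositions~\ref{prop:exp}--\ref{lemma:sub-gaussian} and the choice $\phi=\mathrm{diag}(\theta)$ do the heavy lifting, since an arbitrary $\phi$ would leave the variance, and hence the Taylor remainder, uncontrolled; and (b) ruling out escape of the unconstrained iterates to infinity, which is the step most likely to force an additional regularity or truncation assumption.
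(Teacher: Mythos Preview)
Your proposal is correct and follows essentially the same route as the paper: cast the iteration as a Robbins--Monro stochastic approximation, decompose $\nabla Q^\star$ into mean field plus bias plus martingale noise, identify the mean field with $\nabla_\theta\log p_\theta(\v)$ via Fisher's identity, and invoke the Delyon--Lavielle--Moulines (1999) theorems (the paper cites their Theorems~2 and~5 explicitly). The one technical difference is in the bias control: you use a delta-method Taylor expansion of the ratio estimator to obtain the rate $\|b_t\|_2 = O(1/N_t)$, whereas the paper instead shows that $\{T_N(\theta)\}_{N\ge1}$ is equicontinuous on a compact $\Theta$ (by bounding $\|\nabla_\theta T_N(\theta)\|_F$ uniformly in $N$), upgrades pointwise a.s.\ convergence to uniform convergence, and then applies dominated convergence to conclude $\mathbb{E}[T_N(\theta)]\to\nabla_\theta\log z(\theta)$ uniformly, hence $R_t\to0$. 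Your stability concern is well placed: the paper handles it by simply asserting ``since $\Theta$ is compact'' in its equicontinuity step, which is precisely the standing boundedness hypothesis you flag as needing either a truncation device or an a priori assumption.
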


\vspace{-0.2in}
\section{Numerical Experiments}\label{sec:sims}
We compare the performances of (maximum) likelihood based methods (MLE), including a full Bayes method, to that of (maximum) pseudo-likelihood (MPLE) based methods for parameter inference in PEGMs. We consider the Ising model and the PGM as examples of PEGMs. We design our experiment in a low-dimensional (LD), high-dimensional (HD), and ultra high-dimensional (UHD) setting for both models. In the low-dimensional setting, our goal is to demonstrate the efficiency in terms of uncertainty quantification of a full likelihood based analysis since both MLE and PMLE are consistent.  The focus in the (ultra) high-dimensional setting is to understand the performance of \emph{penalized} versions of these methods, PMLE and PMPLE, respectively, in terms of graph structure learning. We set the dimension $p = 3, 5 $ (low-dimensional) and $p = 20, 30, 40, 50$ (high-dimensional) and $p = 100, 200, 300$ (ultra high-dimensional). For both low and high-dimensional settings, the sample size is set as $n = 100$. In the ultra high-dimensional setting we set $n = 2p$. The true parameter $\theta_0$ for all the cases is generated randomly, first by sampling $z_{jk} \sim \text{Bernoulli}(\omega)$ and then setting $\theta_{0,jk}\mid [z_{jk} = 0] = 0$ and $\theta_{0,jk}\mid [z_{jk} = 1] = \eta $. In the low-dimensional setting we set $\omega = 0.9$, $\eta = -0.8$, whereas for the other two cases we set $\omega = 0.05$, $\eta = -3$. We note here that $\theta_0$ generated in this way ensures that $\theta_0 \in \Theta$, both for Ising and PGM, and the conditions of Proposition \ref{prop:exp} are satisfied. 

In the LD setting, we compare the methods using the Frobenius norm $||\hat{\theta} - \theta_0||_F^2$, average coverage probabilities of 95\% confidence intervals, and their corresponding widths. 
For the Bayes method, we choose $\hat{\theta}$ as the posterior mean using the Laplace scale mixture prior described previously. In the HD setting, we focus on the Matthews correlation coefficient (MCC) \citep{chicco2021matthews} between the estimated and the true graph and $||\hat{\theta} - \theta_0||_F^2$. The MCC values ranges between $-1$ and $+1$, with $-1$ indicating the worst performance and $+1$ the best. The advantage of MCC as a measure is that its value is close to $+1$ if and only if the estimated graph estimates zero and non-zero edges between nodes correctly at the same time. We only consider PMLE and PMPLE here, since the Bayesian posterior mean with a continuous shrinkage prior is not sparse. For both of these methods, the estimated graph is produced following the procedure described in Supplementary Section~\ref{app:computational_details}. Additionally, we report $||\hat{\theta} - \theta_0||_F^2$, where for PMLE, the estimate $\hat{\theta}$ correspond to the one obtained using cross-validation. Finally, in the UHD setting, we only consider MCC of PMLE and PMPLE. 
We summarize the Frobenius norm and MCC for the Ising model and PGM in Table \ref{tab:Ising_PGM_table} for 10 replications. The average coverage probabilities of $95\%$ confidence intervals and their corresponding widths for both models in the LD settings are in Supplementary Table~\ref{tab:CI_width}. Full likelihood-based methods achieve lower Frobenius norms in all the settings considered, for both  Ising and  PGM, compared to their PMPLE counterparts. In terms of graph reconstruction, the PMLE estimator \eqref{eq:l1_penalized_estimator} outperforms the PMPLE-based methods with significantly higher MCC values. Figure \ref{fig:Ising_stable_selection} shows the heatmap comparing the data generating parameter $\theta_0$ and the estimates obtained from PMPLE and PMLE for the Ising model with $p = 20$; the PMLE estimate corresponds to $\lambda = 7$. 
Clearly, $\hat{\theta}_{PMLE}$ effectively captures the underlying structure and also yields estimates that are closer to the truth. Finally, we note that penalization was not used in LD, and the fully Bayesian procedure was computationally prohibitive in UHD. These cases are marked with a `--' in Table~\ref{tab:Ising_PGM_table}. Additional simulation results on the partially observed case, comparing the representational powers of BM and RBM, are presented in Supplementary Section~\ref{sec:supp_bm}.

\begin{table}[!t]
    \centering
    \scalebox{.54}{
    \begin{NiceTabular}{ccc||ccc|cc||ccc|cc}
    \toprule
    & & & \multicolumn{5}{c||}{Ising} & \multicolumn{5}{c}{PGM}\\
    \midrule \addlinespace
        Setting & $p$ & $n$ & \multicolumn{3}{c|}{ $||\theta_0 - \hat{\theta}||_F^2$} & \multicolumn{2}{c||}{MCC} & \multicolumn{3}{c|}{$||\theta_0 - \hat{\theta}||_F^2$} & \multicolumn{2}{c}{MCC}  \\ \addlinespace
        \toprule
        & & & PMLE & PMPLE & Bayes & PMLE & PMPLE  & PMLE & PMPLE & Bayes & PMLE & PMPLE  \\ \addlinespace
       \midrule
        \multirow{2}{*}{LD} & 3 & \multirow{2}{*}{100} & 1.24 (0.52) & 6.48 (6.05)  & 2.09 (1.62) & -- & -- & 0.38 (0.09) & 5.83 (5.65) &  0.70 (0.02)  & --& --\\
         & 5 &  & 2.07 (0.85) & 16.23 (6.42)  & 1.68 (0.28) & -- & -- &  0.98 (0.27) & 19.57 (5.21) & 1.37 (0.30)   & --& --\\
         \midrule \addlinespace
         \multirow{2}{*}{HD} & 20 &\multirow{4}{*}{100} & 9.18 (0.61)  & 9.56 (0.57)  & 10.58 (0.27)& 0.67 (0.05) & 0.48 (0.01) & 11.60 (0.43) & 13.62 (0.17) & 13.91 (0.05)& 0.83 (0.07) & 0.24 (0.02)\\
          & 30 &  & 12.85 (0.99) & 13.94 (0.61) & 16.24 (0.38) & 0.78 (0.02) & 0.49 (0.02) & 15.27 (1.25) & 20.43 (0.12) & 20.94 (0.01)& 0.71 (0.05) & 0.22 (0.02)\\
          & 40 & & 18.61 (1.31) & 27.38 (0.11) & 24.33 (0.29)  & 0.71 (0.05) & 0.46 (0.03) & 21.00 (1.03) & 27.06 (0.07) & 27.26 (0.01)& 0.73 (0.04) & 0.25 (0.02)\\
          & 50 & & 26.89 (0.50) & 35.38 (0.09) & 32.10 (0.21) &0.71 (0.03) & 0.39 (0.03)&30.57 (1.32) & 34.36 (0.15) & 33.85 (0.01)& 0.61 (0.04) & 0.22 (0.02) \\
          \midrule \addlinespace
         \multirow{2}{*}{UHD} & 100 & 200 & 54.35 (0.93) & 69.71 (0.04)  & -- & 0.80 (0.02) & 0.44 (0.01) & 59.26 (0.62) & 67.43 (0.07) & -- & 0.52 (0.02) & 0.27 (0.01) \\
          & 200 & 400& 128.97 (0.14) & 137.16 (0.02) & -- & 0.70 (0.01) & 0.41 (0.01) & 131.84 (0.03) & 134.70 (0.05) & -- & 0.47 (0.01) & 0.34 (0.01) \\
         & 300 & 600 & 195.97 (0.06) & 202.21 (0.02)& -- & 0.56 (0.01) & 0.24 (0.01)&201.11 (0.03) & 203.73 (0.05) & -- & 0.50 (0.01) & 0.33 (0.01)\\
         \bottomrule
    \end{NiceTabular}
    
    }
    \caption{Mean (sd) of MSE = $||\hat{\theta} - \theta_0||_F^2$ and Matthews correlation coefficient (MCC) across 10 datasets for penalized maximum likelihood estimation (PMLE), penalized maximum pseudo-likelihood estimation (PMPLE) and the Bayes methods for Ising and PGM. }
    \label{tab:Ising_PGM_table}
\end{table}
\begin{figure}[!h]
\centering
\includegraphics[width=0.2\textwidth, angle = 270]{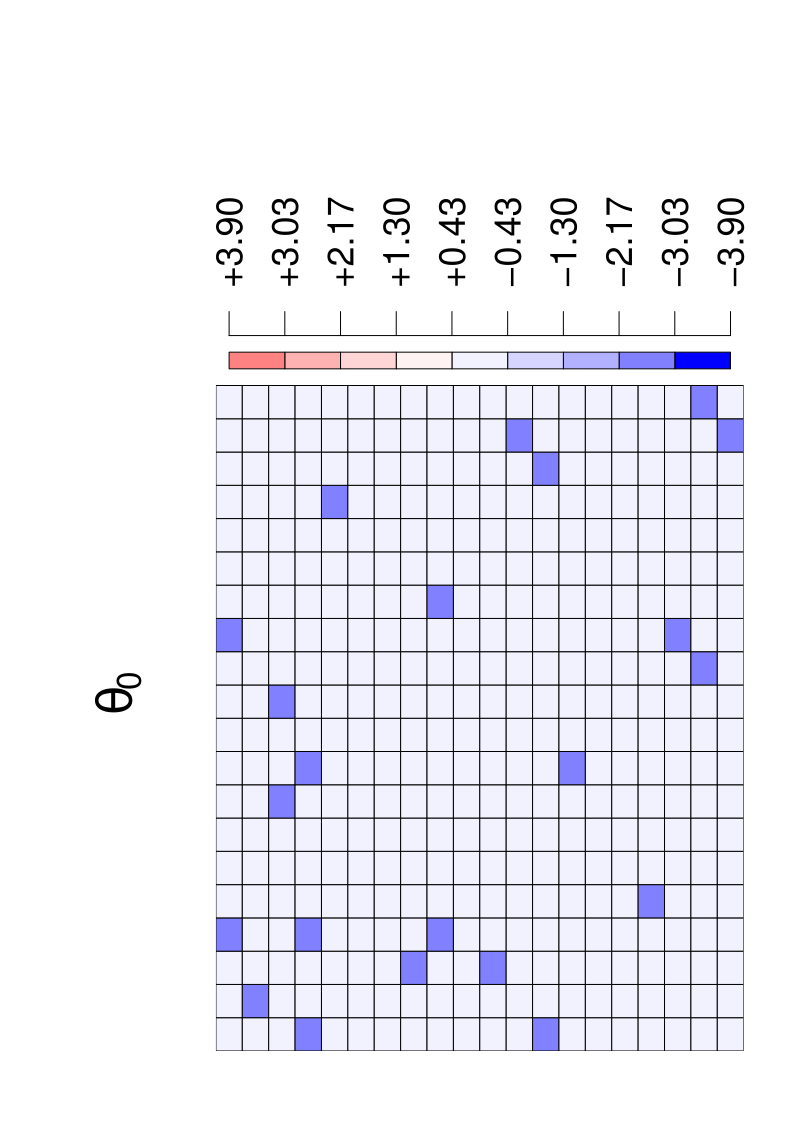} \quad\includegraphics[width=0.2\textwidth, angle = 270]{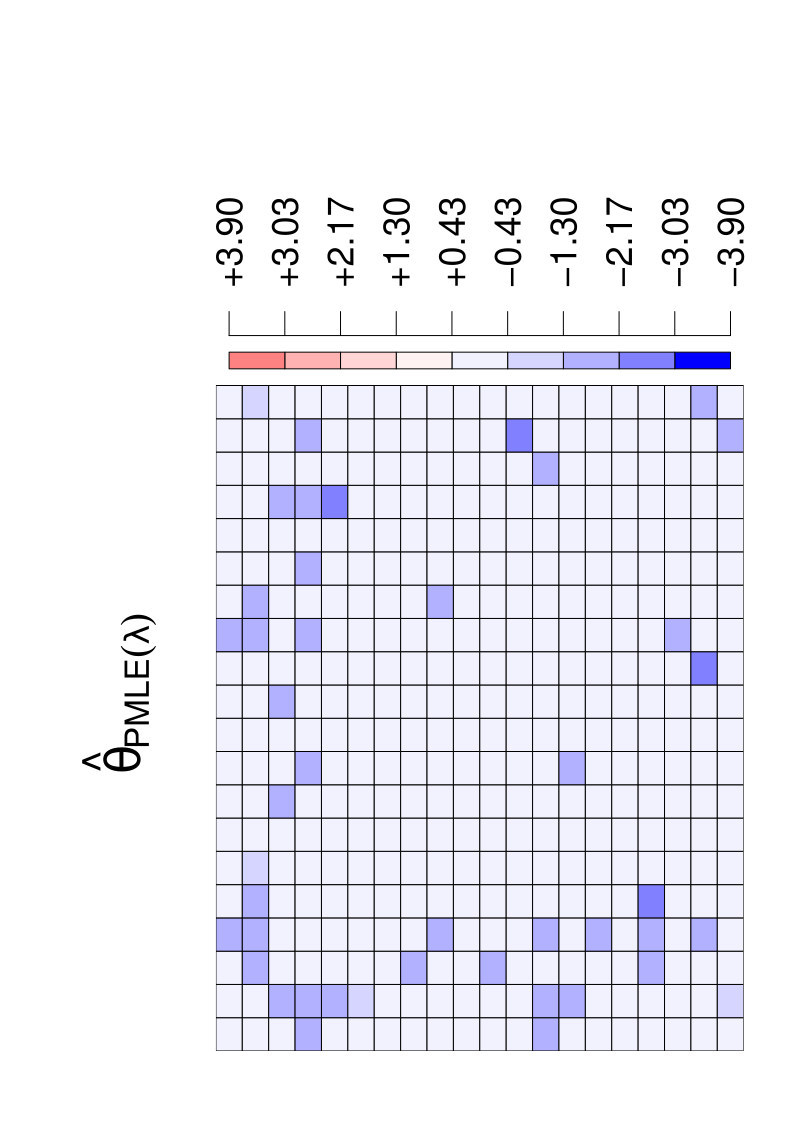}\quad \includegraphics[width=0.2\textwidth, angle = 270]{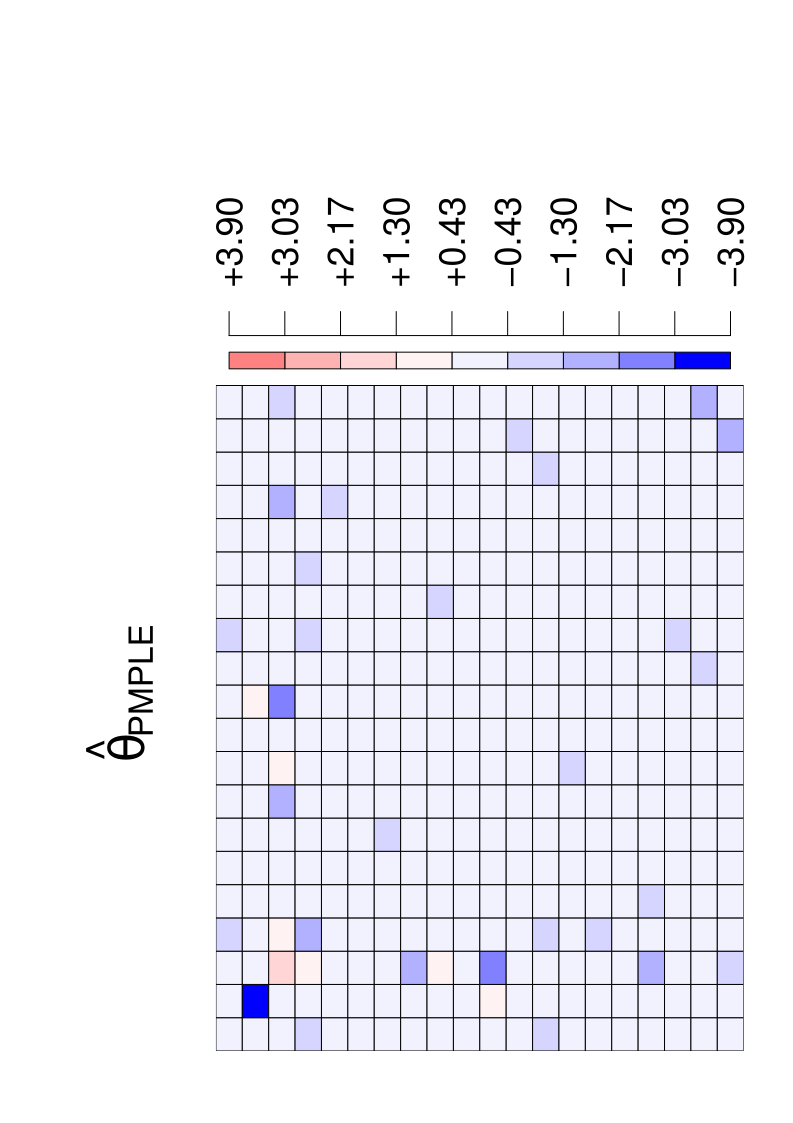}
    \caption{True parameter $\theta_0$ and the corresponding PMLE and PMPLE  estimates $\hat{\theta}$ for the Ising model with $(n,p)=(100,20)$ and $\lambda = 7$.}
    \label{fig:Ising_stable_selection}
\end{figure}

\vspace{-20pt}
\section{Data Analysis}\label{sec:real_data}
We provide demonstrations on data sets arising from disparate applications, intentionally so chosen, to demonstrate the scope of the methodology.

\subsection{Movie ratings network via Ising model}\label{sec:Ising_movie}

We demonstrate the merits of a likelihood-based analysis for the Ising Model by analyzing a real-world movie rating dataset. The original $Movielens \ 25M$ dataset collected by Grouplens consists of movie ratings, ranging from 0 to 5 with increments of 0.5,  for 62,000 movies by 162,000 users (\href{https://grouplens.org/datasets/movielens/}{https://grouplens.org/datasets/movielens/}). We select the most frequently rated 50 movies ($p = 50$) from the same group of $n = 303$ viewers. We code observations that received a rating of 5 as `1', signifying complete satisfaction with the film, and assign `0' to ratings of 4.5 or below \citep{roach2023graphical}. Supplementary Table~\ref{tab:movie_id} provides the legend of the movie titles, genres, and years of release.
The binary variables $X_{ij}$ indicates the preference of the $i$th viewer for the $j$th movie, where $i = 1, \ldots, 303$ and $j = 1, \ldots, 50$. Specifically, $X_{ij} = X_{ik} = 1$, where $j \neq k$, implies that the $i$th viewer has strong preference for both movies $j$ and $k$. Other combinations can also be interpreted similarly. We model the distribution of $X$ using $\text{Ising}(\theta)$ where $\theta \in \mathbb{R}^{p \times p}$. Diagonal elements of $\theta$ model the propensity of a movie being rated 1, whereas off-diagonal elements capture the direction of association between movies $(j,k)$ conditional on other movies.

We estimate $\theta$ using (\ref{eq:l1_penalized_estimator}) and build the graph structure using the stable selection algorithm described in the Supplementary Section~\ref{app:computational_details}. For stable selection, we select a grid of values of $\lambda = 5, 10, \dots, 100 $ and set threshold $\pi_{thr} = 0.6$. The resulting movie rating network with abbreviations of the movie titles is plotted in Supplementary Figure \ref{fig:Ising_movie_sw_lr}, where the edge widths are proportional to the stable selection probability, and the node sizes are proportional to the degree. Our analysis reveals that \textit{Star Wars IV} and \textit{Star Wars V} have the most connections, partly due to their appeal to a broader audience, see Supplementary Table~\ref{tab:movies_top}. \textit{Lord of the Rings: The Fellowship of the Ring} has the second most connections. These two movie franchise also form two separate cliques within the graph, see Supplementary Figure~\ref{fig:Ising_movie_sw_lr}. A more detailed summary of the results, in terms of the most positive and negative connections, is provided in Table \ref{tab:movie_top_pos_neg}. Movies within the \textit{Lord of the Rings} and \textit{Star Wars} franchises have strong positive connections, whereas \textit{Inception (2010)} and  \textit{Batman (1989)} have the strongest negative connection. We remark here that \textit{Batman (1989)} is directed by Tim Burton, unlike the popular Batman movie franchise of the 2000s directed by Christopher Nolan, who also directed~\textit{Inception (2010)}. A heatmap of $\hat{\theta}_\lambda$ for $\lambda = 10$ is in Supplementary Figure \ref{fig:Ising_movie_estimated_theta}.

\begin{table}[!t]
    \centering
    \scalebox{0.6}{
    \begin{tabular}{c|c||c|c}
    \hline
\textbf{Positive Edge} & \textbf{$\hat{\theta_{jk}}$} & \textbf{Negative Edge} & \textbf{$\hat{\theta_{jk}}$} \\

\hline
Lord of the Rings (2003)  -  Lord of the Rings (2001) & 0.77 & Inception (2010) - Batman (1989) &  -0.95  \\        
Lord of the Rings (2002) - Lord of the Rings (2001)    & 0.72 & Terminator (1984) - Dances with Wolves (1990) & -0.80 \\    
Lord of the Rings (2003) - Lord of the Rings (2002) & 0.61 & True Lies (1994) - Star Wars: Episode IV (1977) & -0.53 \\
Star Wars: Episode V (1980) - Star Wars: Episode IV (1977) & 0.56 & Memento (2000) - Fugitive (1993) & -0.49\\ 
Terminator (1984) - Terminator 2: Judgment Day (1991) & 0.55 & Inception (2010) - Terminator (1984) &  -0.48 \\
Star Wars: Episode VI (1983) - Star Wars: Episode IV (1977) & 0.53 & Dances with Wolves (1990) - Twelve Monkeys (1995) &  -0.44 \\ 
Star Wars: Episode VI (1983) - Star Wars: Episode V (1980) & 0.48 & Independence Day (1996) - Batman (1989)  &  -0.44  \\ 
Raiders of the Lost Ark  (1981) - Star Wars: Episode V  (1980) &0.40 & Godfather (1972) - Independence Day (1996) & -0.42  \\
Godfather (1972) - Schindler's List (1993) &  0.25 & Inception (2010)-Independence Day (1996) &  -0.41\\ 
Raiders of the Lost Ark (1981) - Star Wars: Episode IV (1977) & 0.24 & Braveheart (1995) - Toy Story (1995) & -0.39 \\
\hline
\end{tabular}
}
\caption{Top 10 positive and negative interactions in the Movie Ratings Network.}
\label{tab:movie_top_pos_neg}
\end{table}

\subsection{Breast cancer microRNA network via Poisson graphical model}

 We next apply a full-likelihood ($\hat\theta_{PMLE}$) and pseudo-likelihood ($\hat\theta_{PMPLE}$) analysis to the task of inferring dependency among count-valued variables that represent microRNA (miRNA) expression profiles of the $353$ genes. This level III RNA-Seq data is collected from 445 breast invasive carcinoma (BRCA) patients ($n = 445$) from the Cancer Genome Atlas (TCGA) project \citep{cancer2012comprehensive} and is available in the \texttt{R} package \texttt{XMRF}. The observed values are counts of sequencing reads mapped back to a reference genome. Raw counts are highly skewed and zero-inflated.  Following the processing pipeline in \citet{allen2012log}, we use the \texttt{processSeq} function in the \texttt{XMRF} package to adjust for sequencing depth, mitigate overdispersion, and normalize the counts. This process includes a quantile matching step to preserve the integer-valued nature of the data. Since the PGM can only model negative conditional associations, we select a subset of $p = 101$ genes with negative Pearson correlation between their counts.

We select a grid of $\lambda_{PMLE} = [200, 240, 280, \dots, 1000]$, and set the threshold $\pi_{thr} = 0.6$ to implement the stable selection algorithm. For pseudo-likelihood, we perform node-wise $\ell_1$ penalized Poisson generalized linear model (GLM) using \texttt{glmnet} \citep{friedman2010regularization} in \texttt{R} with a negative constraint on coefficients and consider a grid of 20 evenly spaced values in the interval $[5\sqrt{{\log p}/{n}}, \; 7\sqrt{{\log p}/{n}}]$ for the penalty parameter $\lambda_{PMPLE}$. We then construct the final estimator $\hat{\theta}_{\lambda, PMPLE}$ for a given $\lambda$ by setting $\hat{\theta}_{\lambda, PMPLE} = (\hat{\theta}_{\lambda} + \hat{\theta}^\T_{\lambda})/2$, where $\hat{\theta}_{\lambda}$ represents the matrix formed by coefficients from Poisson GLMs. The sequence of estimates $\hat{\theta}_{\lambda, PMPLE}$ are then used for model selection following the stable selection procedure with $\pi_{thr} = 0.6$.

Estimated networks from a full-likelihood and pseudo-likelihood based analysis are comparable. Supplementary Figures~\ref{fig:PGM_breast_cancer_graph_MLE_neg} and~\ref{fig:PGM_breast_cancer_graph_MPLE_neg} show the estimated networks from a likelihood and pseudo-likelihood analysis, respectively, where blue edges are common to both networks. 
Let $S_{PMLE}:=\{ (j,k): \hat{\theta}_{PMLE, jk} \neq  0, 1\leq j \leq k \leq p \}$ and  $S_{PMPLE}:=\{ (j,k): \hat{\theta}_{PMPLE, jk} \neq  0, 1\leq j \leq k \leq p \}$. The cardinality of the intersection $|S_{PMLE} \cap S_{PMPLE}| = 372$ is about $70\%$ of $|S_{PMLE}|$. Moreover, the genes with the top numbers of connections in the PMLE-based network also have high degrees of rank in the PMPLE-based network, as shown in Table~\ref{tab:cancer_nodes}. Both networks successfully identify the edge between the CDH1 (located on chromosome 16q) and the BCL9 (located on chromosome 1q). Existing studies reported that the CDH1 is often underexpressed, while the BCL9 is typically overexpressed, caused by frequent chromosomal aberrations involved in breast cancer pathogenesis \citep{privitera2021aberrations}. Additionally, the top connected genes in Table~\ref{tab:cancer_nodes}, such as WHSC1L1, TMPRSS2, CDH1, SLC34A2 and EP300, are all known key players in breast cancer \citep{xiao2022tmprss2, kim2021high}. 


\begin{table}[h!]
\centering
    \scalebox{0.7}{
    \begin{tabular}{c|cccccccccc}
    
    \hline
    \textbf{Gene} & WHSC1L1 & TMPRSS2  & CDH1  & SLC34A2 & FGFR2  & EP300 & RNF43 & NUP98  & ACSL3 & MAP2K1  \\
    \hline
    $d_{PMLE}$ (\textbf{Ranking})& 52 (1) & 36 (2) & 24 (3) & 21 (4) & 21 (5) & 20 (6) & 19 (7) & 19 (8) & 19 (9)& 19 (10)\\
     $d_{PMPLE}$ (\textbf{Ranking}) & 33 (3) & 15 (43) & 34 (2) & 26 (7) & 19 (25) & 17 (32) & 20 (21) & 17 (32) & 14 (49) & 7 (84)\\

        \hline
    \end{tabular}
    }
    \caption{Top 10 genes ranked by degrees from the full-likelihood based ($d_{PMLE}$) and pseudo-likelihood based ($d_{PMPLE}$) estimates for Breast Cancer data.}
    \label{tab:cancer_nodes}
\end{table}

\vspace{-20pt}
\subsection{Analysis of MNIST data via restricted and full Boltzmann machines} \label{sec:PGM_breast}

We analyze the MNIST data \citep{deng2012mnist} using RBM and BM where we train the model with the proposed algorithm  separately for each digit. The data consists of images of handwritten digits ($0$--$9$). The number of training samples for the $10$ digits vary within the range $n=3000$ to $n=4000$. We consider images of 15$\times$15 resolution. Each image is then converted to a binary vector of length $p = 225$, with on pixels are coded as 1 and off pixels are coded as 0. A fitted RBM or BM model on these images approximates the distribution of this binary vector over $\{0,1\}^{225}$. Furthermore, since both RBM and BM are generative models, the trained models can be used to produce synthetic data based on held out test images of these digits. For example, suppose an RBM is trained for the digit $0$ yielding an estimate $\hat{\theta}$. Given a new test image $\v_{new}$ of the digit $0$, we can generate $\h_{new}$ by simulating from $\h_{new}\mid \v_{new}, \hat{\theta}$. The generated $h_{new}$ can then be used to generate $\v_{recon}$ using $(\v_{recon}\mid \h_{new},\hat{\theta}).$ 
This is known as reconstruction. A well-trained model will generally capture the global distribution of the pixels for each digit, and be able to reproduce subject-specific variations within this global distribution. In Figure \ref{fig:reconstruction}, we show the reconstructed images of test images for all the digits $0$--$9$ with $m = 50$ hidden variables for both RBM and BM. Additionally, we compare the performance of reconstructed test images in terms of Brier loss. The Brier loss measures how well-calibrated the reconstructed probabilities are. This is done on test images of all the digits. Given a test image $\v$, and reconstructed probabilities $\hat{\v}$, we compute the Brier loss as 
$$B = (n_t p)^{-1} \sum_{i=1}^{n_t} \sum_{j =1}^p (\v_{ij} - \hat{\v}_{ij})^2,$$
where $n_t$ is the number of test images. The results are reported in Table \ref{tab:MNIST_brier}. Performance of the likelihood trained RBM (RBM-MLE) is better than CD trained RBM for all digits. Also, with the same number of hidden variables, BM-MLE achieves a (marginally) lower reconstruction error compared to RBM-MLE in most cases.

\begin{figure}
    \centering
    \includegraphics[width=0.95\linewidth]{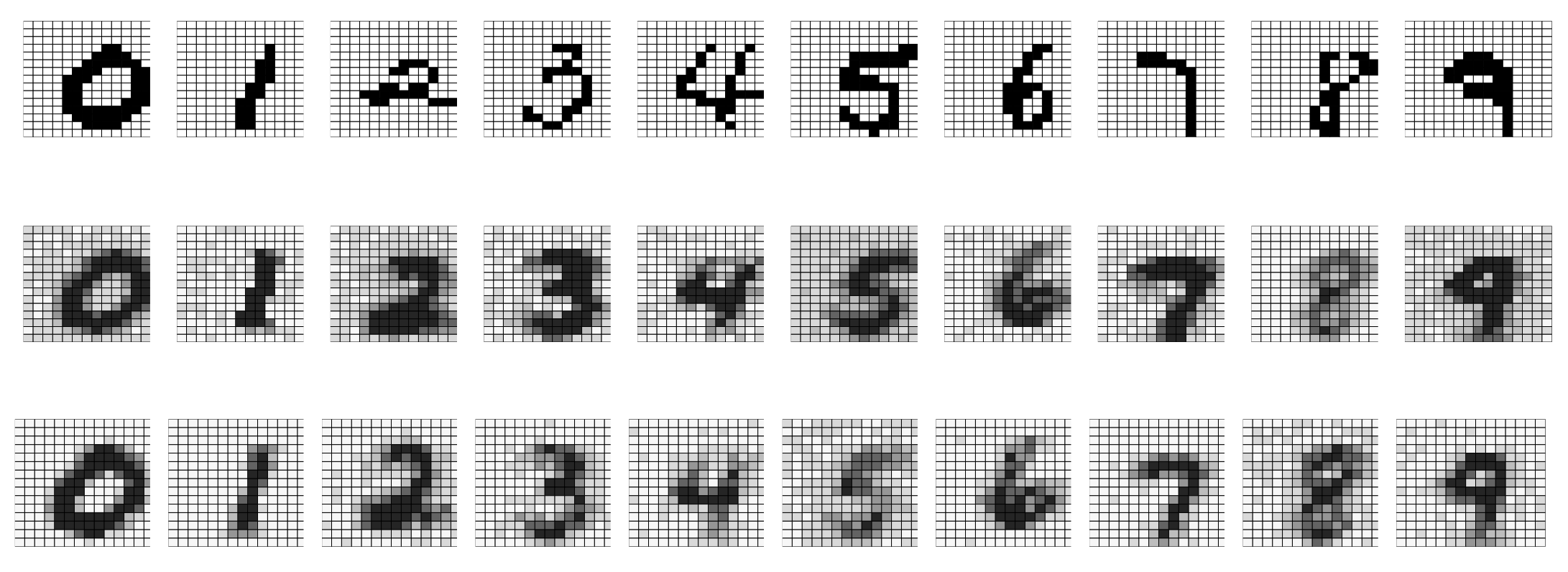}
    \caption{Observed and reconstructed images of handwritten digits. The first row cor-
responds to observed of test images of digits 0–9. The second and third rows show the
reconstructed images from the trained RBM-MLE and BM-MLE model. All models are
fitted with m = 50 hidden variables}
    \label{fig:reconstruction}
\end{figure}

\begin{table}
\centering
    \scalebox{0.7}{
    \begin{tabular}{c|cccccccccc}
    \hline
    Digit & $\mathbf{0}$ & $\mathbf{1}$ & $\mathbf{2}$ & $\mathbf{3}$ & $\mathbf{4}$ & $\mathbf{5}$ & $\mathbf{6}$ & $\mathbf{7}$ & $\mathbf{8}$ & $\mathbf{9}$ \\
    \hline
    RBM-MLE & 0.39 & 0.17 & 0.31 & 0.45 & 0.32 & 0.40 & 0.43 & 0.35 & 0.41 & 0.32\\
    RBM-CD & 0.44 & 0.36 & 0.42 & 0.48 & 0.43 & 0.43 & 0.47 & 0.39 & 0.41 & 0.40\\
    BM-MLE &  0.42 & 0.15 & 0.29 & 0.46 & 0.30 & 0.41 & 0.43 & 0.32 & 0.39 & 0.28\\
        \hline
        
    \end{tabular}
    }
    \caption{Brier loss for RBM and BM trained with proposed algorithm and RBM trained with CD. The results demonstrate the superiority of full-likelihood based training over CD-based training.}
    \label{tab:MNIST_brier}
\end{table}
\section{Conclusions and Future Works}\label{sec:conc}
The overarching focus of the current paper has been on the development of computationally feasible likelihood-based inference procedures in intractable graphical models, and in particular for models that were long considered impermeable to such inference, e.g., Boltzmann machine-based architectures. Our theoretical results on likelihood-based inference complement the methodological contributions. A number of future investigations can be considered. First, we only considered first order gradient-based methods for optimization. However, since we have a Monte Carlo estimate of the gradient, a quasi-Newton approach such as BFGS \citep[see, e.g.,][]{broyden1970convergence} appears straightforward to develop and may have better convergence speed in practice compared to a first order gradient descent scheme. Second, the intractability of the normalizing constant also affects the computation of evidence or marginal likelihood, and makes model selection and comparison difficult. Although not explicitly considered in this work, it is a natural future step to consider model selection problems in exponential family graphical models and to consider multiple connected graphs, where we envisage that the techniques developed in this paper will prove fruitful.




\section*{Supplementary Material}
The Supplementary Material contains proofs, algorithmic details, and additional results from simulations and data analysis.  Computer code with complete documentation is publicly available on \texttt{github} at: \href{https://github.com/chenyujie1104/ExponentialGM}{https://github.com/chenyujie1104/ExponentialGM}.

\bibliography{references}

\begin{thebibliography}{}

\bibitem[\protect\citeauthoryear{Ackley, Hinton, and Sejnowski}{Ackley et~al.}{1985}]{ackley1985learning}
D.~H. Ackley, G.~E. Hinton, and T.~J. Sejnowski (1985).
\newblock A learning algorithm for {B}oltzmann machines.
\newblock {\em Cognitive Science} {\bf 9,} 147--169.

\bibitem[\protect\citeauthoryear{Aggarwal}{Aggarwal}{2018}]{Aggarwal2018}
C.~C. Aggarwal (2018).
\newblock {\em Restricted Boltzmann Machines}, pages 235--270.
\newblock Springer International Publishing, Cham.

\bibitem[\protect\citeauthoryear{Allen and Liu}{Allen and Liu}{2012}]{allen2012log}
G.~I. Allen and Z.~Liu (2012).
\newblock A log-linear graphical model for inferring genetic networks from high-throughput sequencing data.
\newblock In {\em 2012 IEEE international conference on bioinformatics and biomedicine}, pages 1--6. IEEE.

\bibitem[\protect\citeauthoryear{Ashford and Sowden}{Ashford and Sowden}{1970}]{ashford1970multi}
J.~Ashford and R.~Sowden (1970).
\newblock Multi-variate probit analysis.
\newblock {\em Biometrics} pages 535--546.

\bibitem[\protect\citeauthoryear{Barndorff-Nielsen}{Barndorff-Nielsen}{1978}]{barndorff2014information}
O.~Barndorff-Nielsen (1978).
\newblock {\em Information and exponential families in statistical theory}.
\newblock John Wiley \& Sons.

\bibitem[\protect\citeauthoryear{Besag}{Besag}{1974}]{besag1974spatial}
J.~Besag (1974).
\newblock Spatial interaction and the statistical analysis of lattice systems.
\newblock {\em Journal of the Royal Statistical Society: Series B (Methodological)} {\bf 36,} 192--225.

\bibitem[\protect\citeauthoryear{Betancourt}{Betancourt}{2011}]{betancourt2011nested}
M.~Betancourt (2011).
\newblock Nested sampling with constrained {H}amiltonian {M}onte {C}arlo.
\newblock In {\em AIP Conference Proceedings}, volume 1305, pages 165--172. American Institute of Physics.

\bibitem[\protect\citeauthoryear{Bhadra, Datta, Polson, and Willard}{Bhadra et~al.}{2017}]{bhadra2015horseshoe+}
A.~Bhadra, J.~Datta, N.~G. Polson, and B.~Willard (2017).
\newblock {The Horseshoe+ Estimator of Ultra-Sparse Signals}.
\newblock {\em Bayesian Analysis} {\bf 12,} 1105 -- 1131.

\bibitem[\protect\citeauthoryear{Bhattacharya, Chakraborty, and Mallick}{Bhattacharya et~al.}{2016}]{bhattacharya2016fast}
A.~Bhattacharya, A.~Chakraborty, and B.~K. Mallick (2016).
\newblock Fast sampling with {G}aussian scale mixture priors in high-dimensional regression.
\newblock {\em Biometrika} {\bf 103,} 985--991.

\bibitem[\protect\citeauthoryear{Bhattacharya, Dunson, Pati, and Pillai}{Bhattacharya et~al.}{2016}]{bhattacharya2016suboptimality}
A.~Bhattacharya, D.~B. Dunson, D.~Pati, and N.~S. Pillai (2016).
\newblock Sub-optimality of some continuous shrinkage priors.
\newblock {\em Stochastic Processes and their Applications} {\bf 126,} 3828 -- 3842.
\newblock In Memoriam: Evarist Giné.

\bibitem[\protect\citeauthoryear{Bhattacharya, Pati, Pillai, and Dunson}{Bhattacharya et~al.}{2015}]{bhattacharya2015dirichlet}
A.~Bhattacharya, D.~Pati, N.~S. Pillai, and D.~B. Dunson (2015).
\newblock Dirichlet--{L}aplace priors for optimal shrinkage.
\newblock {\em Journal of the American Statistical Association} {\bf 110,} 1479--1490.

\bibitem[\protect\citeauthoryear{Boucheron, Lugosi, and Bousquet}{Boucheron et~al.}{2003}]{boucheron2003concentration}
S.~Boucheron, G.~Lugosi, and O.~Bousquet (2003).
\newblock Concentration inequalities.
\newblock In {\em Summer school on machine learning}, pages 208--240. Springer.

\bibitem[\protect\citeauthoryear{Boyd and Vandenberghe}{Boyd and Vandenberghe}{2004}]{boyd2004convex}
S.~P. Boyd and L.~Vandenberghe (2004).
\newblock {\em Convex optimization}.
\newblock Cambridge university press.

\bibitem[\protect\citeauthoryear{Broyden}{Broyden}{1970}]{broyden1970convergence}
C.~G. Broyden (1970).
\newblock The convergence of a class of double-rank minimization algorithms 1. general considerations.
\newblock {\em IMA Journal of Applied Mathematics} {\bf 6,} 76--90.

\bibitem[\protect\citeauthoryear{B{\"u}hlmann and Van De~Geer}{B{\"u}hlmann and Van De~Geer}{2011}]{buhlmann2011statistics}
P.~B{\"u}hlmann and S.~Van De~Geer (2011).
\newblock {\em Statistics for high-dimensional data: methods, theory and applications}.
\newblock Springer Science \& Business Media.

\bibitem[\protect\citeauthoryear{Chakraborty, Bhattacharya, and Mallick}{Chakraborty et~al.}{2020}]{chakraborty2020bayesian}
A.~Chakraborty, A.~Bhattacharya, and B.~K. Mallick (2020).
\newblock Bayesian sparse multiple regression for simultaneous rank reduction and variable selection.
\newblock {\em Biometrika} {\bf 107,} 205--221.

\bibitem[\protect\citeauthoryear{Chakraborty, Ou, and Dunson}{Chakraborty et~al.}{2023}]{chakraborty2023bayesian}
A.~Chakraborty, R.~Ou, and D.~B. Dunson (2023).
\newblock Bayesian inference on high-dimensional multivariate binary responses.
\newblock {\em Journal of the American Statistical Association} {\bf 0,} 1--12.

\bibitem[\protect\citeauthoryear{Chatterjee and Diaconis}{Chatterjee and Diaconis}{2018}]{chatterjee2018sample}
S.~Chatterjee and P.~Diaconis (2018).
\newblock The sample size required in importance sampling.
\newblock {\em The Annals of Applied Probability} {\bf 28,} 1099--1135.

\bibitem[\protect\citeauthoryear{Chicco, T{\"o}tsch, and Jurman}{Chicco et~al.}{2021}]{chicco2021matthews}
D.~Chicco, N.~T{\"o}tsch, and G.~Jurman (2021).
\newblock The matthews correlation coefficient (mcc) is more reliable than balanced accuracy, bookmaker informedness, and markedness in two-class confusion matrix evaluation.
\newblock {\em BioData mining} {\bf 14,} 1--22.

\bibitem[\protect\citeauthoryear{Delyon, Lavielle, and Moulines}{Delyon et~al.}{1999}]{delyon1999convergence}
B.~Delyon, M.~Lavielle, and E.~Moulines (1999).
\newblock Convergence of a stochastic approximation version of the em algorithm.
\newblock {\em Annals of statistics} pages 94--128.

\bibitem[\protect\citeauthoryear{Deng}{Deng}{2012}]{deng2012mnist}
L.~Deng (2012).
\newblock The {MNIST} database of handwritten digit images for machine learning research.
\newblock {\em IEEE Signal Processing Magazine} {\bf 29,} 141--142.

\bibitem[\protect\citeauthoryear{Fischer and Igel}{Fischer and Igel}{2012}]{fischer2012introduction}
A.~Fischer and C.~Igel (2012).
\newblock An introduction to restricted {B}oltzmann machines.
\newblock In {\em Progress in Pattern Recognition, Image Analysis, Computer Vision, and Applications: 17th Iberoamerican Congress, CIARP 2012, Buenos Aires, Argentina, September 3-6, 2012. Proceedings 17}, pages 14--36. Springer.

\bibitem[\protect\citeauthoryear{Fisher}{Fisher}{1922}]{fisher1922mathematical}
R.~A. Fisher (1922).
\newblock On the mathematical foundations of theoretical statistics.
\newblock {\em Philosophical transactions of the Royal Society of London, Series A} {\bf 222,} 309--368.

\bibitem[\protect\citeauthoryear{Friedman, Hastie, and Tibshirani}{Friedman et~al.}{2010}]{friedman2010regularization}
J.~H. Friedman, T.~Hastie, and R.~Tibshirani (2010).
\newblock Regularization paths for generalized linear models via coordinate descent.
\newblock {\em Journal of Statistical Software} {\bf 33,} 1–22.

\bibitem[\protect\citeauthoryear{George and McCulloch}{George and McCulloch}{1993}]{george1993variable}
E.~I. George and R.~E. McCulloch (1993).
\newblock Variable selection via gibbs sampling.
\newblock {\em Journal of the American Statistical Association} {\bf 88,} 881--889.

\bibitem[\protect\citeauthoryear{Geyer}{Geyer}{1991}]{geyer1991}
C.~J. Geyer (1991).
\newblock {Markov chain Monte Carlo maximum likelihood}.
\newblock In {\em Computing Science and Statistics: Proc.~23rd Symposium of the Interface Foundation}, pages 156--163.

\bibitem[\protect\citeauthoryear{Geyer and Thompson}{Geyer and Thompson}{1992}]{geyer1992constrained}
C.~J. Geyer and E.~A. Thompson (1992).
\newblock Constrained {M}onte {C}arlo maximum likelihood for dependent data.
\newblock {\em Journal of the Royal Statistical Society: Series B} {\bf 54,} 657--683.

\bibitem[\protect\citeauthoryear{Ghosal and Van~der Vaart}{Ghosal and Van~der Vaart}{2017}]{ghosal2017fundamentals}
S.~Ghosal and A.~Van~der Vaart (2017).
\newblock {\em Fundamentals of nonparametric Bayesian inference}, volume~44.
\newblock Cambridge University Press.

\bibitem[\protect\citeauthoryear{Hebb}{Hebb}{1949}]{hebb1949}
D.~O. Hebb (1949).
\newblock {\em The organization of behavior: {A} neuropsychological theory}.
\newblock Wiley.

\bibitem[\protect\citeauthoryear{Hinton}{Hinton}{2002}]{hinton2002training}
G.~E. Hinton (2002).
\newblock Training products of experts by minimizing contrastive divergence.
\newblock {\em Neural Computation} {\bf 14,} 1771--1800.

\bibitem[\protect\citeauthoryear{Hinton}{Hinton}{2007}]{hinton2007boltzmann}
G.~E. Hinton (2007).
\newblock Boltzmann machine.
\newblock {\em Scholarpedia} {\bf 2,} 1668.

\bibitem[\protect\citeauthoryear{Hinton, Sallans, and Ghahramani}{Hinton et~al.}{1998}]{hinton1998hierarchical}
G.~E. Hinton, B.~Sallans, and Z.~Ghahramani (1998).
\newblock A hierarchical community of experts.
\newblock In {\em Learning in graphical models}, pages 479--494. Springer.

\bibitem[\protect\citeauthoryear{Hopfield}{Hopfield}{1982}]{hopfiled82}
J.~J. Hopfield (1982).
\newblock Neural networks and physical systems with emergent collective computational abilities.
\newblock {\em Proceedings of the National Academy of Sciences} {\bf 79,} 2554--2558.

\bibitem[\protect\citeauthoryear{Ising}{Ising}{1924}]{ising1924beitrag}
E.~Ising (1924).
\newblock {\em Beitrag zur theorie des ferro-und paramagnetismus}.
\newblock PhD thesis, Grefe \& Tiedemann Hamburg.

\bibitem[\protect\citeauthoryear{Jiang, Wu, Jin, and Wong}{Jiang et~al.}{2018}]{jiang2018convergence}
B.~Jiang, T.-Y. Wu, Y.~Jin, and W.~H. Wong (2018).
\newblock Convergence of contrastive divergence algorithm in exponential family.
\newblock {\em The Annals of Statistics} {\bf 46,} 3067--3098.

\bibitem[\protect\citeauthoryear{Kim, Min, Kim, Son, Kwon, and Hong}{Kim et~al.}{2021}]{kim2021high}
H.-S. Kim et al. (2021).
\newblock High whsc1l1 expression reduces survival rates in operated breast cancer patients with decreased cd8+ t cells: machine learning approach.
\newblock {\em Journal of Personalized Medicine} {\bf 11,} 636.

\bibitem[\protect\citeauthoryear{Lauritzen}{Lauritzen}{1996}]{lauritzen1996graphical}
S.~L. Lauritzen (1996).
\newblock {\em Graphical models}, volume~17.
\newblock Clarendon Press.

\bibitem[\protect\citeauthoryear{Le~Roux and Bengio}{Le~Roux and Bengio}{2008}]{le2008representational}
N.~Le~Roux and Y.~Bengio (2008).
\newblock Representational power of restricted {B}oltzmann machines and deep belief networks.
\newblock {\em Neural computation} {\bf 20,} 1631--1649.

\bibitem[\protect\citeauthoryear{Liang}{Liang}{2010}]{liang2010double}
F.~Liang (2010).
\newblock A double {M}etropolis--{H}astings sampler for spatial models with intractable normalizing constants.
\newblock {\em Journal of Statistical Computation and Simulation} {\bf 80,} 1007--1022.

\bibitem[\protect\citeauthoryear{Meinshausen and B{\"u}hlmann}{Meinshausen and B{\"u}hlmann}{2006}]{meinshausen2006variable}
N.~Meinshausen and P.~B{\"u}hlmann (2006).
\newblock {High-dimensional graphs and variable selection with the Lasso}.
\newblock {\em The Annals of Statistics} {\bf 34,} 1436 -- 1462.

\bibitem[\protect\citeauthoryear{Meinshausen and B{\"u}hlmann}{Meinshausen and B{\"u}hlmann}{2010}]{meinshausen2010stability}
N.~Meinshausen and P.~B{\"u}hlmann (2010).
\newblock Stability selection.
\newblock {\em Journal of the Royal Statistical Society: Series B (Statistical Methodology)} {\bf 72,} 417--473.

\bibitem[\protect\citeauthoryear{Mitchell and Beauchamp}{Mitchell and Beauchamp}{1988}]{mitchell1988bayesian}
T.~J. Mitchell and J.~J. Beauchamp (1988).
\newblock Bayesian variable selection in linear regression.
\newblock {\em Journal of the {A}merican {S}tatistical {A}ssociation} {\bf 83,} 1023--1032.

\bibitem[\protect\citeauthoryear{M{\o}ller, Pettitt, Reeves, and Berthelsen}{M{\o}ller et~al.}{2006}]{moller2006efficient}
J.~M{\o}ller, A.~N. Pettitt, R.~Reeves, and K.~K. Berthelsen (2006).
\newblock An efficient {M}arkov chain {M}onte {Ca}rlo method for distributions with intractable normalising constants.
\newblock {\em Biometrika} {\bf 93,} 451--458.

\bibitem[\protect\citeauthoryear{Montufar and Ay}{Montufar and Ay}{2011}]{montufar2011refinements}
G.~Montufar and N.~Ay (2011).
\newblock Refinements of universal approximation results for deep belief networks and restricted {B}oltzmann machines.
\newblock {\em Neural computation} {\bf 23,} 1306--1319.

\bibitem[\protect\citeauthoryear{Murray, Ghahramani, and MacKay}{Murray et~al.}{2012}]{murray2012mcmc}
I.~Murray, Z.~Ghahramani, and D.~MacKay (2012).
\newblock {MCMC} for doubly-intractable distributions.
\newblock {\em arXiv preprint arXiv:1206.6848} .

\bibitem[\protect\citeauthoryear{Neal}{Neal}{2001}]{neal2001annealed}
R.~M. Neal (2001).
\newblock Annealed importance sampling.
\newblock {\em Statistics and Computing} {\bf 11,} 125--139.

\bibitem[\protect\citeauthoryear{Neal}{Neal}{2011}]{neal2011mcmc}
R.~M. Neal (2011).
\newblock {MCMC} using {H}amiltonian dynamics.
\newblock {\em Handbook of Markov chain Monte Carlo} {\bf 2,} 2.

\bibitem[\protect\citeauthoryear{Parikh and Boyd}{Parikh and Boyd}{2014}]{parikh2014proximal}
N.~Parikh and S.~Boyd (2014).
\newblock Proximal algorithms.
\newblock {\em Foundations and trends{\textregistered} in Optimization} {\bf 1,} 127--239.

\bibitem[\protect\citeauthoryear{Park and Casella}{Park and Casella}{2008}]{park_bayesian_2008}
T.~Park and G.~Casella (2008).
\newblock The {B}ayesian lasso.
\newblock {\em Journal of the American Statistical Association} {\bf 103,} 681--686.

\bibitem[\protect\citeauthoryear{Pinelis}{Pinelis}{2020}]{pinelis2020exact}
I.~Pinelis (2020).
\newblock Exact lower and upper bounds on the incomplete gamma function.
\newblock {\em arXiv preprint arXiv:2005.06384} .

\bibitem[\protect\citeauthoryear{Potts}{Potts}{1952}]{potts1952some}
R.~B. Potts (1952).
\newblock Some generalized order-disorder transformations.
\newblock In {\em Mathematical proceedings of the cambridge philosophical society}, volume~48, pages 106--109. Cambridge University Press.

\bibitem[\protect\citeauthoryear{Privitera, Barresi, and Condorelli}{Privitera et~al.}{2021}]{privitera2021aberrations}
A.~P. Privitera, V.~Barresi, and D.~F. Condorelli (2021).
\newblock Aberrations of chromosomes 1 and 16 in breast cancer: a framework for cooperation of transcriptionally dysregulated genes.
\newblock {\em Cancers} {\bf 13,} 1585.

\bibitem[\protect\citeauthoryear{Rao}{Rao}{1945}]{rao1945information}
C.~R. Rao (1945).
\newblock Information and accuracy attainable in the estimation of statistical parameters. kotz s \& johnson nl (eds.), breakthroughs in statistics volume i: Foundations and basic theory, 235--248.

\bibitem[\protect\citeauthoryear{Ravikumar, Wainwright, and Lafferty}{Ravikumar et~al.}{2010}]{ravikumar2010high}
P.~Ravikumar, M.~J. Wainwright, and J.~D. Lafferty (2010).
\newblock High-dimensional {I}sing model selection using $\ell_1$-regularized logistic regression.
\newblock {\em The Annals of Statistics} {\bf 38,} 1287--1319.

\bibitem[\protect\citeauthoryear{Roach and Gao}{Roach and Gao}{2023}]{roach2023graphical}
L.~Roach and X.~Gao (2023).
\newblock Graphical local genetic algorithm for high-dimensional log-linear models.
\newblock {\em Mathematics} {\bf 11,} 2514.

\bibitem[\protect\citeauthoryear{Robbins and Monro}{Robbins and Monro}{1951}]{robbins1951stochastic}
H.~Robbins and S.~Monro (1951).
\newblock A stochastic approximation method.
\newblock {\em The annals of mathematical statistics} pages 400--407.

\bibitem[\protect\citeauthoryear{Rothman, Levina, and Zhu}{Rothman et~al.}{2010}]{rothman2010sparse}
A.~J. Rothman, E.~Levina, and J.~Zhu (2010).
\newblock Sparse multivariate regression with covariance estimation.
\newblock {\em Journal of Computational and Graphical Statistics} {\bf 19,} 947--962.

\bibitem[\protect\citeauthoryear{Rudin et~al\mbox{.}}{Rudin et~al.}{1964}]{rudin1964principles}
W.~Rudin et~al. (1964).
\newblock {\em Principles of mathematical analysis}, volume~3.
\newblock McGraw-hill New York.

\bibitem[\protect\citeauthoryear{Sagar, Banerjee, Datta, and Bhadra}{Sagar et~al.}{2024}]{sagar2024precision}
K.~Sagar, S.~Banerjee, J.~Datta, and A.~Bhadra (2024).
\newblock Precision matrix estimation under the horseshoe-like prior--penalty dual.
\newblock {\em Electronic Journal of Statistics} {\bf 18,} 1--46.

\bibitem[\protect\citeauthoryear{Salakhutdinov and Hinton}{Salakhutdinov and Hinton}{2009}]{salakhutdinov2009deep}
R.~Salakhutdinov and G.~Hinton (2009).
\newblock Deep {B}oltzmann machines.
\newblock In {\em Artificial intelligence and statistics}, pages 448--455. PMLR.

\bibitem[\protect\citeauthoryear{Salakhutdinov, Mnih, and Hinton}{Salakhutdinov et~al.}{2007}]{salakhutdinov2007restricted}
R.~Salakhutdinov, A.~Mnih, and G.~Hinton (2007).
\newblock Restricted {B}oltzmann machines for collaborative filtering.
\newblock In {\em Proceedings of the 24th international conference on Machine learning}, pages 791--798.

\bibitem[\protect\citeauthoryear{Salakhutdinov and Murray}{Salakhutdinov and Murray}{2008}]{salakhutdinov2008quantitative}
R.~Salakhutdinov and I.~Murray (2008).
\newblock On the quantitative analysis of deep belief networks.
\newblock In {\em Proceedings of the 25th International Conference on Machine Learning}, pages 872--879.

\bibitem[\protect\citeauthoryear{Smolensky}{Smolensky}{1986}]{smolensky1986information}
P.~Smolensky (1986).
\newblock Information processing in dynamical systems: Foundations of harmony theory.
\newblock In {\em Parallel distributed processing: Explorations in the microstructure of cognition}, pages 194--281--. MIT Press, Cambridge, MA.

\bibitem[\protect\citeauthoryear{Song and Liang}{Song and Liang}{2017}]{song2017nearly}
Q.~Song and F.~Liang (2017).
\newblock Nearly optimal bayesian shrinkage for high dimensional regression.
\newblock {\em arXiv preprint arXiv:1712.08964} .

\bibitem[\protect\citeauthoryear{Stoehr, Benson, and Friel}{Stoehr et~al.}{2019}]{stoehr2019noisy}
J.~Stoehr, A.~Benson, and N.~Friel (2019).
\newblock Noisy {H}amiltonian {M}onte {C}arlo for doubly intractable distributions.
\newblock {\em Journal of Computational and Graphical Statistics} {\bf 28,} 220--232.

\bibitem[\protect\citeauthoryear{Sutskever and Tieleman}{Sutskever and Tieleman}{2010}]{sutskever2010convergence}
I.~Sutskever and T.~Tieleman (2010).
\newblock On the convergence properties of contrastive divergence.
\newblock In {\em Proc. {AISTATS}}, pages 789--795.

\bibitem[\protect\citeauthoryear{TCGA}{TCGA}{2012}]{cancer2012comprehensive}
TCGA (2012).
\newblock Comprehensive molecular portraits of human breast tumours.
\newblock {\em Nature} {\bf 490,} 61--70.

\bibitem[\protect\citeauthoryear{Tibshirani}{Tibshirani}{1996}]{tibshirani1996regression}
R.~Tibshirani (1996).
\newblock Regression shrinkage and selection via the lasso.
\newblock {\em Journal of the Royal Statistical Society. Series B (Methodological)} {\bf 58,} 267--288.

\bibitem[\protect\citeauthoryear{Vetter}{Vetter}{1973}]{vetter1973matrix}
W.~J. Vetter (1973).
\newblock Matrix calculus operations and {T}aylor expansions.
\newblock {\em SIAM review} {\bf 15,} 352--369.

\bibitem[\protect\citeauthoryear{Wainwright, Jordan, et~al\mbox{.}}{Wainwright et~al.}{2008}]{wainwright2008graphical}
M.~J. Wainwright, M.~I. Jordan, et~al. (2008).
\newblock Graphical models, exponential families, and variational inference.
\newblock {\em Foundations and Trends{\textregistered} in Machine Learning} {\bf 1,} 1--305.

\bibitem[\protect\citeauthoryear{Xiao, Shan, Niu, Wang, Li, Zhang, Wang, Wu, and Jiang}{Xiao et~al.}{2022}]{xiao2022tmprss2}
X.~Xiao et al. (2022).
\newblock Tmprss2 serves as a prognostic biomarker and correlated with immune infiltrates in breast invasive cancer and lung adenocarcinoma.
\newblock {\em Frontiers in Molecular Biosciences} {\bf 9,} 647826.

\bibitem[\protect\citeauthoryear{Yang, Allen, Liu, and Ravikumar}{Yang et~al.}{2012}]{yang2012graphical}
E.~Yang, G.~Allen, Z.~Liu, and P.~Ravikumar (2012).
\newblock Graphical models via generalized linear models.
\newblock In F.~Pereira, C.~Burges, L.~Bottou, and K.~Weinberger, editors, {\em Advances in Neural Information Processing Systems}, volume~25. Curran Associates, Inc.

\bibitem[\protect\citeauthoryear{Yang, Ravikumar, Allen, and Liu}{Yang et~al.}{2015}]{yang2015graphical}
E.~Yang, P.~Ravikumar, G.~I. Allen, and Z.~Liu (2015).
\newblock Graphical models via univariate exponential family distributions.
\newblock {\em Journal of Machine Learning Research} {\bf 16,} 3813--3847.

\bibitem[\protect\citeauthoryear{Yang, Ravikumar, Allen, and Liu}{Yang et~al.}{2013}]{yang2013poisson}
E.~Yang, P.~K. Ravikumar, G.~I. Allen, and Z.~Liu (2013).
\newblock On {P}oisson graphical models.
\newblock {\em Advances in neural information processing systems} {\bf 26,}.

\bibitem[\protect\citeauthoryear{Yu}{Yu}{2013}]{yu2013decomposing}
Y.-L. Yu (2013).
\newblock On decomposing the proximal map.
\newblock {\em Advances in neural information processing systems} {\bf 26,}.

\bibitem[\protect\citeauthoryear{Yuan, Ekici, Lu, and Monteiro}{Yuan et~al.}{2007}]{yuan2007dimension}
M.~Yuan, A.~Ekici, Z.~Lu, and R.~Monteiro (2007).
\newblock Dimension reduction and coefficient estimation in multivariate linear regression.
\newblock {\em Journal of the Royal Statistical Society: Series B (Statistical Methodology)} {\bf 69,} 329--346.

\bibitem[\protect\citeauthoryear{Zhang, Yao, Sun, and Tay}{Zhang et~al.}{2019}]{zhang2019deep}
S.~Zhang, L.~Yao, A.~Sun, and Y.~Tay (2019).
\newblock Deep learning based recommender system: A survey and new perspectives.
\newblock {\em ACM computing surveys (CSUR)} {\bf 52,} 1--38.

\end{thebibliography}
\bibliographystyle{biom}

\clearpage
\begin{center}
	{\large{\bf Supplementary Material to\\
	{\it Likelihood-based Inference in Fully and Partially Observed Exponential Family Graphical Models with Intractable Normalizing Constants}}
	}
\end{center}
\setcounter{equation}{0}
\setcounter{page}{1}
\setcounter{table}{0}
\setcounter{section}{0}
\setcounter{subsection}{0}
\setcounter{figure}{0}
\renewcommand{\theequation}{S.\arabic{equation}}
\renewcommand{\thesection}{S.\arabic{section}}
\renewcommand{\thealgorithm}{S.\arabic{algorithm}}
\renewcommand{\thepage}{S.\arabic{page}}
\renewcommand{\thetable}{S.\arabic{table}}
\renewcommand{\thefigure}{S.\arabic{figure}}

\vspace{-60pt}
\begingroup
  \hypersetup{linktoc=none}
\part{} 
\parttoc 
\endgroup

\section{Proofs}
\subsection{Proof of Proposition \ref{prop:exp}}\label{prf:exp}
(a). (Proof for $z_\theta$). For brevity, define $T_j = T_j(X_j)$ and $T_{jk} = T_{jk}(X_j, X_k)$. Recall $\phi=\mathrm{diag}(\theta)$. Thus, from Equation \eqref{eq:pairwise_graphical_model}, $q_\theta (X)/q_\phi (X) = \exp \left\lbrace \underset{(j,k) \in E}{\sum}\theta_{jk}T_{jk} \right\rbrace$.
    However, the model of Equation~\eqref{eq:pairwise_graphical_model} is of the form $p_{\eta}(X) = \exp \{ \eta T(X) + H(X) - A(\eta)\}$
    with $T(X) = (T_j, T_{jk})$ as the sufficient statistic. That is, $T(X)$ is a $p\times p$ symmetric matrix with $T_j$ as the $j$th diagonal entry and $T_{jk}$ as the $(j,k)$th off-diagonal entry.  The moment generating function of exponential family distribution satisfies $M_{T,\eta}(u) =  \mathbb{E}_\eta(\exp(u^T T(X))  = \exp(A(u + \eta) - A(\eta)).$
Setting $u =\theta - \mathrm{diag}(\theta)$, and $\phi=\mathrm{diag}(\theta)$, {one has}, $u + \phi = \theta \in \Theta$. Similarly, $\phi \in \Theta$, because this corresponds to the special case where the $p$ variables are independent. Using the fact that for any valid parameter $\gamma \in \Theta$ we have $A(\gamma)< \infty$ in an exponential family model, shows that: 
\begin{align*}
\mathbb{E}_\phi (q_\theta(X)/q_\phi(X)) &= M_{T,\phi}(u) = \exp(A(u+\phi) - A(\phi)) = \exp(A(\theta) - A(\phi)) < \infty.
\end{align*}
Similarly, if, $2u + \phi \in \Theta$, then, $\mathbb{E}_\phi\{(q_\theta(X)/q_\phi(X))^2\} = M_{T,\phi}(2u) = \exp(A(2u+\phi) - A(\phi)) < \infty.$
Thus, $
\mathrm{Var}_\phi (q_\theta(X)/q_\phi(X)) < \infty.$ 

(b). (Proof for $\nabla_\theta z(\theta)$). Next consider $\nabla_\theta z(\theta)$. For $(j,k)\in E$, $\frac{\nabla_{\theta_{jk}} q_\theta (X)}{q_\phi (X)} = \frac{ q_\theta (X)}{q_\phi (X)} T_{jk}$.
Thus, 
$$\mathbb{E}_\phi\left\{\left(\frac{\nabla_{\theta_{jk}} q_\theta (X)}{q_\phi (X)}\right)^2\right\} =\mathbb{E}_\phi\left\{\left(\frac{ q_\theta (X)}{q_\phi (X)}\right)^2 T_{jk}^2\right\} \le \left\{\mathbb{E}_\phi\left(\frac{ q^{2a}_\theta (X)}{q^{2a}_\phi (X)} \right)\right\}^{1/a}\left[\mathbb{E}_\phi\left(T_{jk}^{2b}\right)\right]^{1/b} <\infty,$$ 
where $a>1,\; b>1,\; (1/a+1/b)=1$ and the claim in the previous display is a consequence of the H\"older inequality. Similar to Part (a), and taking $a=1+\delta$, we have: 
$
\mathbb{E}_\phi\left(\frac{ q^{2a}_\theta (X)}{q^{2a}_\phi (X)} \right) =M_{T,\phi}(2au)=M_{T,\phi}(2(1+\delta)u),
$
which is bounded if $(2(1+\delta)u+\phi)\in\Theta$, for some  $\delta>0$. The boundedness of ${E}_\phi\left(T_{jk}^{2b}\right)$ follows from the fact that for an exponential family distribution, the sufficient statistic also belongs to an exponential family, for which moments of all orders exist \citep[Theorem 8.1,][]{barndorff2014information}. Repeating the calculation for all $(j,k) \in E$, the result follows. If $(j,k)\notin E$, the variance is zero and the result is trivial. 
\subsection{Proof of Proposition \ref{lemma:sub-gaussian}}\label{sec:lemma_proof}
Since $\phi = \text{diag}(\theta)$ and $X$ is distributed according to a PEGM, we have $U = q_{\theta}(X)/q_{\phi}(X) = \exp\{\sum_{(j,k) \in E} \theta_{jk}X_jX_k\}$. Next, recall that a function $g:\mathbb{R}^p \to \mathbb{R}$ is said to have the bounded difference property if for some non-negative constants $c_1, \ldots, c_p$,
\begin{align}\label{eq:bdd_difference}
    \sup_{x_1,\ldots, x_p, x_{j'} \in \mathbb{R}^p} |g(x_1, \ldots, x_j, \ldots, x_p) - g(x_1, \ldots, x_{j'}, \ldots, x_p)| \leq c_j.
\end{align}
In the current context, let $U = U(X_1,\ldots, X_j, \ldots, X_p)$ and $U' = U(X_1, \ldots, X_{j'}, \ldots, X_p)$. By assumption, each $X_j$ has support in $[a_j, b_j]$.  Without loss of generality, let $|b_j| \geq |a_j|$ for all $j = 1, \ldots, p$. Clearly, $U \leq \exp\{\sum_{(j,k) \in E}\theta_{jk} |X_j X_k|\} \leq \exp\{\sum_{(j,k) \in E}\theta_{jk} |b_j b_k|\}$. Similarly, $U'$ is also bounded. Then by the triangle inequality, $|U - U'| = |U| + |U'|  \leq 2\exp\{\sum_{(j,k) \in E}\theta_{jk} |b_j b_k|\}$. Thus, $U$ is of bounded difference according to the definition \eqref{eq:bdd_difference}. The second assertion then follows from \citet[][Theorem 12]{boucheron2003concentration}. Noting that $V = X_jX_k U$, the same argument for $V$ then applies.

\subsection{Proof of Proposition \ref{prop:importance_sample_size}}\label{sec:importance_sample_size_proof}
The result is a direct consequence of Theorem 3.1 of \cite{chatterjee2018sample}. In particular, when $|S_\theta| = s = O(\log p)$, it is straightforward to observe that $D(p_\theta, p_\phi) \sim s$. 

\subsection{Auxiliary results for Theorems~\ref{thm:l1_consistency} and~\ref{thm:posterior_consistency}}
In this subsection, we record a few auxiliary results which will be used to prove Theorems in Section \ref{sec:theory} of the main draft. The first result is an exponential concentration result, which is used to prove Theorem \ref{thm:l1_consistency}. The second result is needed for the proof of Theorem~\ref{thm:posterior_consistency}, and characterizes the prior Kullback-Leibler support set in terms of the Euclidean distance between the true parameter and prior realizations.

\begin{proposition}\label{prop:tail_bound}
     Suppose we observe $X_1, \ldots, X_n \overset{iid}{\sim} \text{PEGM}(\theta_0)$ where the model satisfies Assumption 2.2 and Assumption 2.3. Let $\mu_{0} = (\mu_{0,jk})$ be the matrix of moments, i.e. $\mu_{0,jk} = \mathbb{E}_{\theta_0}[T(X_j, X_k)]$ when $j \neq k$, and $\mu_{0, jj} = \mathbb{E}_{\theta_0}[T(X_j)] $. Define $S_n = \sum_{i=1}^n T(X_{ij}, X_{ik})$. Then for some $c \geq 0$
     $$\mathbb{P}[|S_n - n\mu_{0, jk}| > n\nu] \leq 2 \exp(-cn \nu^2),$$
     for $\nu \geq 4\kappa_3/3$.
 \end{proposition}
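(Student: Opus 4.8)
The plan is to prove this as a Bernstein-type tail bound for the i.i.d.\ sum $S_n=\sum_{i=1}^n T(X_{ij},X_{ik})$ via a Chernoff (exponential-moment) argument. The key structural observation I would use is that the function $\bar A_{jk}(\theta_0;\cdot)$ from Assumption~\ref{ass:moments} is, up to the additive constant $A(\theta_0)$, exactly the cumulant generating function of $T(X_j,X_k)$ under $p_{\theta_0}$: exponentiating the defining integral and recognizing the PEGM numerator gives $\exp\{\bar A_{jk}(\theta_0;\eta)\}=z(\theta_0)\,\mathbb{E}_{\theta_0}\exp\{\eta T(X_j,X_k)\}$, so $K(\eta):=\bar A_{jk}(\theta_0;\eta)-A(\theta_0)$ satisfies $K(0)=0$, $K'(0)=\mu_{0,jk}$, and $K''(\eta)\le\kappa_5$ for $|\eta|<1$ by Assumption~\ref{ass:moments}. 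Thus the assumed bound on $\partial_\eta^2\bar A_{jk}$ is precisely the local control on the exponentially tilted variance that a Bernstein bound needs.

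First I would write, for $0<\eta<1$, $\log\mathbb{E}_{\theta_0}\exp\{\eta(S_n-n\mu_{0,jk})\}=n\{K(\eta)-\eta\mu_{0,jk}\}$ and bound the bracket by a second-order Taylor expansion about $\eta=0$ with Lagrange remainder, using $K(0)=0$, $K'(0)=\mu_{0,jk}$ and $|K''|\le\kappa_5$ on $(-1,1)$, giving $K(\eta)-\eta\mu_{0,jk}\le\kappa_5\eta^2/2$. The Markov/Chernoff step then yields $\mathbb{P}[S_n-n\mu_{0,jk}>n\nu]\le\exp\{-n(\eta\nu-\kappa_5\eta^2/2)\}$, and optimizing over $\eta\in(0,1)$ produces a bound of the form $\exp(-cn\nu^2)$ over the relevant range of $\nu$. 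The lower tail $\mathbb{P}[S_n-n\mu_{0,jk}<-n\nu]$ I would handle symmetrically by tilting with $-\eta$; for the PEGMs with non-negative sufficient statistic --- the Poisson graphical model, where $T(X_j,X_k)=x_jx_k\ge0$ --- this tail in fact vanishes once $\nu>\kappa_3\ge\mu_{0,jk}$, which is where the threshold $\nu\ge 4\kappa_3/3$ enters: by Assumption~\ref{ass:moments}, $|\mu_{0,jk}|\le\mathbb{E}_{\theta_0}|T(X_j,X_k)|\le\kappa_3$, and $\nu\ge 4\kappa_3/3$ gives $\nu-\kappa_3\ge\nu/4$, which also lets one pass freely between centered and uncentered deviations in the surrounding arguments. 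Combining the two one-sided bounds and relabeling the constant gives the factor $2$.

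The hard part will be that the sufficient statistic $T(X_j,X_k)$ need not be bounded --- the Poisson case being the motivating example --- so the tilted moment $\mathbb{E}_{\theta_0}\exp\{\eta T(X_j,X_k)\}$ is finite only near $\eta=0$ and the tilt cannot be pushed arbitrarily large. This is exactly why the argument must route through the Assumption~\ref{ass:moments} bound $\partial_\eta^2\bar A_{jk}(\theta_0;\eta)\le\kappa_5$ on $|\eta|<1$ rather than a naive boundedness/McDiarmid argument: it provides uniform quadratic control of $K$ on the interval where the tilted moment exists, which is what the Bernstein mechanism consumes, and a ``soft'' tilt avoids the $\log n$ factor a hard-truncation-plus-union-bound argument would incur. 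I would need to check carefully that the optimizing tilt stays inside $(-1,1)$ over the stated range of $\nu$. For the subclass of PEGMs with bounded support the conclusion also follows directly from the sub-Gaussian summand tails of Proposition~\ref{lemma:sub-gaussian}.
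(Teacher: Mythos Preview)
Your proposal is correct and follows essentially the same route as the paper: a Chernoff bound with the log-moment generating function identified as $\bar A_{jk}(\theta_0;\eta)-\bar A_{jk}(\theta_0;0)$, a second-order Taylor expansion on $|\eta|<1$ using the $\kappa_5$ bound from Assumption~\ref{ass:moments}, optimization over the tilt, and a symmetric argument for the lower tail. The only cosmetic difference is that you keep $K'(0)=\mu_{0,jk}$ exact so the linear terms cancel, whereas the paper first bounds $K'(0)\le\kappa_3$ and later invokes $\mu_{0,jk}\ge-\kappa_3$; both arrive at the same $\exp(-cn\nu^2)$ bound once $\nu\ge 4\kappa_3/3$.
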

 \begin{proof}
     Define $S_n = \sum_{i=1}^n T(X_{ij}, X_{ik})$. We have for any $s\geq 0$,
 \begin{align*}
     \mathbb{P}(S_n - n \mu_{0,jk} > n\nu )  &\leq \mathbb{E}[e^{s(S_n - n\mu_{0,jk})}]e^{-sn\nu}\\
     & = \prod_{i=1}^n \mathbb{E}[e^{sT(X_{ij}, X_{ik})}]e^{-ns(\mu_{0,jk}+ \nu)}.
 \end{align*}
 Write $T_j = T(X_j)$ and $T_{jk} = T(X_j, X_k)$. Then,
 \begin{align*}
 \log \mathbb{E}[\exp(\alpha T_{jk})] &=  \log \int_{\mathcal{X}} \exp\left\lbrace \alpha T_{jk}  +\sum_{j=1}^p  \theta_{0j} T_j + \sum_{\substack{j,k=1 \\j \neq k}}^{p} \theta_{0,jk}T_{jk} + \sum_{j=1}^p C(X_j) - A(\theta_0) \right\rbrace \mathrm{d}x\\
 & = \bar{A}_{jk}(\theta_0;\alpha) - \bar{A}_{jk}(\theta_0;0). 
 \end{align*}
 Letting $\alpha \leq 1$ and using a Taylor expansion we get for some $\gamma \in (0,1)$,
 \begin{align*}
     \bar{A}_{jk}(\theta_0;\alpha) - \bar{A}_{jk}(\theta_0;0) &= \alpha \dfrac{\partial}{\partial \eta} \bar{A}_{jk}(\theta_0;\eta)\rvert_{\eta = 0} + \frac{\alpha^2}{2} \dfrac{\partial^2}{\partial \eta^2} \bar{A}_{jk}(\theta_0;\eta)\rvert_{\eta = \gamma\alpha}\\
     & \leq \alpha \kappa_3 + \frac{\alpha^2}{2} \kappa_5.
 \end{align*}
 Thus, $\mathbb{E}[\exp(s T_{jk})]\leq \exp(s \kappa_3 + \frac{s^2}{2} \kappa_5)$ for $s\leq 1$. Hence, for $0\leq s \leq 1$, and then minimizing over $s$ and since $\mu_{0,jk} \geq -\kappa_3$ we get,
 \begin{align*}
     \mathbb{P}(S_n - n \mu_{0,jk} > n\nu )& \leq e^{-ns(\nu + \mu_{0,jk}) +  ns \kappa_3 + ns^2 \frac{\kappa_5}{2}} \leq e^{-n\frac{(\nu + \mu_{0,jk} - \kappa_3)^2}{2\kappa_5}}, \quad s = (\nu + \mu_{0,jk} - \kappa_3)/\kappa_5\\
     & \leq e^{-n \frac{\nu^2}{8 \kappa_5}}, 
 \end{align*}
 when $\nu \geq 4\kappa_3/3$.  Applying the same result on $-(S_n - n \mu_{0,jk})$ and using the fact that $\mathbb{P}(|X|>t) = \mathbb{P}(X>t) + \mathbb{P }(-X>t)$, we obtain the result.
 A similar analysis leads to $\mathbb{P}(|S_n - \mu_{0,jj}|>n\nu) \leq 2 \exp(-cn\nu^2)$ for $0\leq \nu \leq \kappa_2/2$.
 \end{proof}

For notational convenience, we write $\text{vech}(\theta) = \theta \in \mathbb{R}^{p(p+1)/2}$. Let $\mu \in \mathbb{R}^{p\times p}$ denote the corresponding symmetric matrix of mean parameters, i.e. $\mu_{jk} = \mathbb{E}[T(X_j, X_k)]$ for $j \neq k$ and $\mu_{jj} = \mathbb{E}[T(X_j)]$ for $j = 1, \ldots, p$. Similarly, $\mu_0$ is the mean matrix when the parameter is $\theta_0$. Again, we write $\text{vech}(\mu) = \mu$.
\begin{proposition}\label{prop:KL_set}
   Suppose $\pi$ is some distribution on $\Theta$.  Assume that $\lambda_{\max}\left(\dfrac{\partial^2 A(\theta_0)}{\partial \theta \partial \theta^\T}\right)\leq \tilde{\beta}$ for some $\tilde{\beta}>0$. Fix $\epsilon>0$. Define the set $B(p_{\theta_0}, \epsilon ) = \{p_\theta: KL(p_{\theta_0},p_\theta) < \epsilon, V(p_{\theta_0}, p_\theta)< \epsilon\}$, where $V(p_{\theta_0}, p _\theta) = \text{Var}_{p_{\theta_0}}\left[ \log p_{\theta_0}/p_\theta\right]$. Then for every $\epsilon >0$ the exists $\delta = \delta(\epsilon)>0$ such that,
   $$\pi\left\lbrace B(p_\theta, \epsilon)\right\rbrace = \pi\left\lbrace \theta \in \Theta: \norm{\theta - \theta_0}_2 < \delta \right\rbrace.$$
\end{proposition}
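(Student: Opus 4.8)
The plan is to exploit the exponential-family structure of a PEGM, under which both $KL(p_{\theta_0},p_\theta)$ and the variance term $V(p_{\theta_0},p_\theta)$ collapse to quadratic forms in $\theta-\theta_0$ governed by the Hessian of the log-partition function $A(\theta)=\log z(\theta)$, and then to sandwich the set $B(p_{\theta_0},\epsilon)$ between two Euclidean balls centered at $\theta_0$. Write the model in canonical form $p_\theta(x)=\exp\{\langle\mathrm{vech}(\theta),T(x)\rangle+H(x)-A(\theta)\}$, where, abusing notation, $\theta$ also denotes $\mathrm{vech}(\theta)$ and $T(x)$ the corresponding stacked vector of sufficient statistics, so that $\nabla A(\theta_0)=\mu_0$ and $\nabla^2 A(\theta_0)=\mathrm{Cov}_{\theta_0}(T(X))=G(\xi_0)$.

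First I would write $\log\{p_{\theta_0}(x)/p_\theta(x)\}=\langle\theta_0-\theta,T(x)\rangle+A(\theta)-A(\theta_0)$. Taking expectation under $p_{\theta_0}$ and using $\mathbb{E}_{\theta_0}[T(X)]=\mu_0=\nabla A(\theta_0)$ yields
\[
KL(p_{\theta_0},p_\theta)=A(\theta)-A(\theta_0)-\langle\nabla A(\theta_0),\theta-\theta_0\rangle,
\]
the Bregman divergence of $A$ at $(\theta,\theta_0)$; taking variance under $p_{\theta_0}$ and noting that the $A$-terms are constants gives $V(p_{\theta_0},p_\theta)=(\theta-\theta_0)^\T G(\xi_0)(\theta-\theta_0)$. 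A second-order Taylor expansion of $A$ about $\theta_0$ cancels the first-order term and gives $KL(p_{\theta_0},p_\theta)=\tfrac12(\theta-\theta_0)^\T\nabla^2 A(\bar\theta)(\theta-\theta_0)$ for some $\bar\theta$ on the segment joining $\theta_0$ and $\theta$.

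Next I would convert these quadratic forms into bounds in $\|\theta-\theta_0\|_2$. Because $\Theta$ is open and $\theta_0$ is interior (Assumption~\ref{ass:regularity}), a closed Euclidean ball $\{\|\vartheta-\theta_0\|_2\le\delta_0\}$ lies in $\mathrm{int}(\Theta)$, where $A$ is smooth, so $\lambda_{\max}(\nabla^2 A(\vartheta))$ attains a finite maximum $\bar\beta_1$ there (the local control of second derivatives being exactly what the constants $\kappa_4,\kappa_5$ of Assumption~\ref{ass:moments} encode); together with $\lambda_{\max}(G(\xi_0))\le\tilde\beta$ this gives $KL(p_{\theta_0},p_\theta)\le\tfrac12\bar\beta_1\|\theta-\theta_0\|_2^2$ and $V(p_{\theta_0},p_\theta)\le\tilde\beta\|\theta-\theta_0\|_2^2$ whenever $\|\theta-\theta_0\|_2\le\delta_0$. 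Choosing $\delta=\min\{\delta_0,(2\epsilon/\bar\beta_1)^{1/2},(\epsilon/\tilde\beta)^{1/2}\}$ forces $\{\theta\in\Theta:\|\theta-\theta_0\|_2<\delta\}\subseteq B(p_{\theta_0},\epsilon)$, hence $\pi\{B(p_{\theta_0},\epsilon)\}\ge\pi\{\theta\in\Theta:\|\theta-\theta_0\|_2<\delta\}$, which is the inequality used in the proof of Theorem~\ref{thm:posterior_consistency}. For the matching reverse inclusion I would invoke $\lambda_{\min}(G(\xi_0))\ge\beta>0$ from Assumption~\ref{ass:regularity}, which by the same Taylor argument yields $KL(p_{\theta_0},p_\theta)\ge\tfrac12\beta'\|\theta-\theta_0\|_2^2$ on a neighborhood of $\theta_0$, so $B(p_{\theta_0},\epsilon)$ is contained in a Euclidean ball of radius $\delta'(\epsilon)$ as well.

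The main obstacle is the Hessian step: the assumptions bound $\nabla^2A$ (and the auxiliary functions $\bar A_{jk}$) only \emph{at} $\theta_0$, whereas the Taylor remainder for $KL$ is evaluated at an unknown interior point $\bar\theta$. Handling this cleanly needs a short continuity/compactness argument---$\nabla^2 A$ is continuous on $\mathrm{int}(\Theta)$ by standard exponential-family regularity and $\theta_0$ is interior, so $\lambda_{\max}(\nabla^2A)$ is uniformly bounded on a small closed ball around $\theta_0$---or, alternatively, a direct moment-based bound on the segment Hessian using the constants in Assumption~\ref{ass:moments}. Everything else is routine exponential-family bookkeeping.
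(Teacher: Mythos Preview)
Your proposal is correct and follows essentially the same route as the paper: compute $KL(p_{\theta_0},p_\theta)=A(\theta)-A(\theta_0)-\langle\nabla A(\theta_0),\theta-\theta_0\rangle$ and $V(p_{\theta_0},p_\theta)=(\theta-\theta_0)^\T\nabla^2 A(\theta_0)(\theta-\theta_0)$, then sandwich both by constant multiples of $\|\theta-\theta_0\|_2^2$ using the eigenvalue bounds $\beta\le\lambda_{\min}$ and $\lambda_{\max}\le\tilde\beta$. The only cosmetic difference is that the paper bounds the Bregman remainder via an $L$-smoothness/Lipschitz-gradient inequality from convex analysis, whereas you use Taylor's theorem with Lagrange remainder plus a local compactness argument to control $\nabla^2 A$ on a ball around $\theta_0$; your treatment of that step is, if anything, more careful than the paper's, which infers a global Lipschitz constant for $\nabla A$ from a Hessian bound stated only at $\theta_0$.
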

\begin{proof}
    It is straightforward to establish that $KL(p_{\theta_0}, p_{\theta}) = \langle \theta_0- \theta, \mu_0\rangle + \log z(\theta) - \log z(\theta_0)$. Indeed, letting $T \in \mathbb{R}^{p(p+1)/2}$ to be the vector of sufficient statistics, we have
    \begin{align*}
    KL(p_{\theta_0}, p_\theta) =& \log z(\theta) - \log z(\theta_0) +  \int \langle \theta_0 - \theta, T \rangle p_{\theta_0}(x) dx\\
    =& \langle \theta_0 - \theta, \mu_0\rangle + \log z(\theta) - \log z(\theta_0), 
    \end{align*}
    where $\mu_{0}  = \mathbb{E}_{\theta_0}[ T]$.
    Furthermore, 
    \begin{align*}
    V(p_{\theta_0}, p_\theta) &= \int \langle \theta_0 - \theta, T - \mu_0 \rangle^2 p_{\theta_0}(x) dx\\
    & = \mathbb{E}_{\theta_0}\left[ \langle \theta_0 - \theta, T - \mu_0 \rangle^2 \right] \\
    & = (\theta_0 - \theta)^\T  \dfrac{\partial^2\log z(\theta_0)}{\partial \theta \partial \theta^\T} (\theta_0 - \theta).
\end{align*}
    Write $A(\theta) = \log z(\theta)$ and $A(\theta_0) = \log z(\theta_0)$. It is well known that $A(\theta)$ is convex and has derivatives of all orders \citep[Proposition 3.1]{wainwright2008graphical}.  Also, by assumption $\lambda_{\max}\left(\dfrac{\partial^2 A(\theta_0)}{\partial \theta \partial \theta^\T}\right)\leq \tilde{\beta}$ which implies there exists $L>0$ such that $\norm{\nabla A(\theta_1) - \nabla A(\theta_2)}_2\leq L \norm{\theta_1 - \theta_2}_2$ for all $\theta_1, \theta_2 \in \Theta$.
    Thus, standard convex analysis \citep{boyd2004convex} implies that: 
    \begin{align*}
        \lvert A(\theta_1) - A(\theta_2) - \nabla A(\theta_2)^\T (\theta_1 - \theta_2) \rvert \leq \frac{L}{2} \norm{\theta_1 - \theta_2}_2^2.
    \end{align*}
   Since, $KL(p_{\theta_0}, p_{\theta}) = \langle \theta_0- \theta, \mu_0\rangle + A(\theta) - A(\theta_0)$ and $\lambda_{\min}\left(\dfrac{\partial^2 A(\theta_0)}{\partial \theta \partial \theta^\T}\right)\geq \beta,$ we have that: 
    \begin{align*}
        \frac{\beta}{2} \norm{\theta_1 - \theta_2}_2^2 \leq KL(p_{\theta_0}, p_{\theta}) \leq \frac{L}{2} \norm{\theta_1 - \theta_2}_2^2,
    \end{align*}
which proves the result. A similar argument applies to $V(p_{\theta_0}, p_\theta)$.

\end{proof}


\subsection{Proof of Theorem \ref{thm:l1_consistency}}\label{sec:proof_l1_consistency}

Define the matrix $S = (S_{jk})$ where $S_{jk} = \sum_{i = 1}^n T(X_{ij},X_{ik})$ if $j \neq k$ and $S_{jj}= \sum_{i=1}^n T(X_{ij})$. Then $\ell(\theta) = \sum_{i=1}^n q_{\theta}(X_i) - n A(\theta) $ $ = \tr(\theta S) - n A(\theta)$. The $\ell_1$-penalized estimator is defined as the minimizer of $-\ell(\theta) + \lambda \norm{\theta}_1 = nA(\theta) - \tr(\theta S) + \lambda \norm{\theta}_1$. Let
\begin{align*}
    Q(\theta) &= nA(\theta) - \tr(\theta S) + \lambda \norm{\theta}_1 - nA(\theta_0) + \tr(\theta_0 S) - \lambda\norm{\theta_0}_1 \\
    &= n [A(\theta) - A(\theta_0)]   -\tr[(\theta - \theta_0)(S-\mu_0)] - \tr[(\theta - \theta_0)\mu_0] + \lambda [\norm{\theta}_1 - \norm{\theta_0}_1].
\end{align*}
Then, the $\ell_1$ penalized estimate minimizes $Q(\theta)$. Define $\Delta = \theta - \theta_0$ and $G(\Delta) = Q(\theta_0 + \Delta)$. Equivalently, we want to minimize $G(\Delta)$ with respect to $\Delta$. First, note that for any $\alpha \in (0,1)$ and $\theta_1, \theta_2 \in \Theta$,
\begin{align*}
\small
    \alpha Q(\theta_1) =& n\alpha [A(\theta_1) - A(\theta_0)] - \tr[\alpha(\theta_1 - \theta_0)(S- \mu_0)] - \tr[\alpha(\theta_1 - \theta_0)\mu_0] \\ 
    & +\lambda \alpha [\norm{\theta_1}_1 - \norm{\theta_0}_1], \\
    (1-\alpha) Q(\theta_2) =& n(1-\alpha) [A(\theta_2) - A(\theta_0)] - \tr[(1 - \alpha)(\theta_2 - \theta_0)(S - \mu_0)] - \tr[(1-\alpha)(\theta_2 - \theta_0)\mu_0]\\
    &+ \lambda (1-\alpha) [\norm{\theta_1}_1 - \norm{\theta_0}_1].
\end{align*}
This implies that if $\theta^\star = \alpha \theta_1 + (1-\alpha)\theta_2$ then,
\begin{align*}
    \alpha Q(\theta_1) + (1-\alpha) Q(\theta_2) &\geq n[A(\theta^\star) - A (\theta)] - \tr((\theta^\star - \theta_0) (S - \mu_0)) -\tr((\theta^\star - \theta_0) \mu_0) \\
    & + \lambda [\norm{\theta^\star}_1 - \norm{\theta_0}_1]\\
    &= Q(\theta^\star),
\end{align*}
by triangle inequality and since $A(\theta)$ is convex. Similarly, if $\Delta_1 = \theta_1 - \theta_0$ and $\Delta_2 = \theta_2 - \theta_0$, then $\alpha G(\Delta_1) + (1-\alpha)G(\Delta_2) \geq G(\Delta^\star)$ where $\Delta^\star = \theta^\star - \theta_0$. Thus, $G(\Delta)$ is convex and $G(0) = 0$. Hence, if $\hat{\Delta} = \inf_{\Delta} G(\Delta)$, then $G(\hat{\Delta}) \leq 0$.

Now, $G(\Delta) = n[A(\theta_0 + \Delta) - A(\theta_0)] - \tr[\Delta(S - \mu_0)] - \tr(\Delta \mu_0) + \lambda [\norm{\theta_0 + \Delta}_1 - \norm{\theta_0}_1]$. A Taylor expansion of $A(\theta_0 + \Delta) - A(\theta_0)$ yields 
\begin{align*}
   A(\theta_0 + \Delta) - A(\theta_0) & = \, \tr(\nabla A(\theta_0)\Delta)  \\
   & + \Tilde{\Delta}^\T\left[ \int_{0}^1 (1-v)\left\lbrace \frac{\partial^2}{\partial v^2} A(\theta_0 + v \Delta) \right\rbrace  dv \right] \Tilde{\Delta},
\end{align*}
where $\Tilde{\Delta} = \text{vec}(\Delta) \in \mathbb{R}^{p^2}$. Here, we adopt the convention that for a matrix-valued function (with matrix arguments) $f: \mathbb{R}^{d_1 \times d_2} \to \mathbb{R}^{d_3 \times d_4}$, $df(A)$ is a matrix of order $d_1d_3 \times d_2d_4$ with elements
$$df(A) = \begin{pmatrix}
    \frac{df}{A_{11}} & \ldots & \frac{df}{A_{1, d_2}}\\
    \vdots & \vdots & \vdots \\
     \frac{df}{A_{d_1, 1}} & \ldots & \frac{df}{A_{d_1, d_2}}
     \end{pmatrix}.
$$
In particular, since $\nabla A(\theta) \in \mathbb{R}^{p \times p}$, we have that $\frac{\partial^2 A(\theta)}{\partial \theta \partial \theta^\T } \in \mathbb{R}^{p^2 \times p^2}$.

Thus $G(\Delta) = n\Tilde{\Delta}^\T\left[ \int_{0}^1 (1-v)\frac{\partial^2 }{\partial v^2}A(\theta_0 + v \Delta)dv \right] \Tilde{\Delta} - \tr[\Delta(S-\mu_0)] + \lambda [\norm{\theta_0 + \Delta}_1 - \norm{\theta_0}_1]$. That is, $G(\Delta) = T_1 - T_2 + T_3$. Let $\mathbf{S} = \{j\leq k: \theta_{0,jk} \neq 0, j, k = 1, \ldots, p\}$.
We next work with each term separately. We have 
\begin{align*}
    |T_2| = |\tr[\Delta(S-\mu_0)]|& \leq \left| \sum\sum_{j\leq k} (S_{jk} - \mu_{0,jk})\Delta_{jk}\right|  \\
    & \leq \left| \sum_{j \leq k:}\sum_{\theta_{0, jk} = 0} (S_{jk} - \mu_{0,jk})\Delta_{jk}\right| + \left| \sum_{j \leq k:}\sum_{\theta_{0, jk} \neq 0} (S_{jk} - \mu_{0,jk})\Delta_{jk}\right|\\
    & \leq \max_{\mathbf{S}^c} |S_{jk} - \mu_{0,jk}|\norm{\Delta_{\mathbf{S}^c}}_1 + \sqrt{s} \max_{\mathbf{S}}|S_{jk} - \mu_{0,jk}|\norm{\Delta_\mathbf{S}}_F,
\end{align*}
where the second bound comes from Cauchy-Schwarz inequality. 
From Proposition \ref{prop:tail_bound}, and the union bound we have that with probability at least $1 - 2e^{-C_1\log p}$, $\max_{j\leq k} |S_{jk} - \mu_{0, jk}|\leq C_1 \sqrt{\log p/n}$ for some $C, C_1>0$. Thus we have with probability tending to 1,
\begin{align*}
    |T_2| \leq C_1\sqrt{\frac{\log p}{n}} \norm{\Delta_{\mathbf{S}^c}}_1 + C_1 \sqrt{\frac{s\log p}{n}} \norm{\Delta_{\mathbf{S}}}_F.
\end{align*}
Next, we work with $T_3 = \lambda [\norm{\theta_0 + \Delta}_1 - \norm{\theta_0}_1]$. Clearly, $\norm{\theta_0}_1 = \norm{\theta_{0\mathbf{S}}}_1$ and $\norm{\theta_0 + \Delta}_1 = \norm{\theta_{0\mathbf{S}} + \Delta_S}_1 + \norm{\Delta_{\mathbf{S}^c}}_1$. Thus, by triangle inequality 
$$T_3 \geq \lambda (\norm{\Delta_{\mathbf{S}^c}}_1 - \norm{\Delta_\mathbf{S}}_1).$$
Set $\lambda = \frac{C_1}{\epsilon} \sqrt{\frac{\log p}{n}}$ for some $0< \epsilon < 1 $. Then since $\norm{\Delta_\mathbf{S}}_1 \leq \sqrt{s}\norm{\Delta_\mathbf{S}}_F$, we have,
\begin{align*}
    T_3 &\geq \dfrac{C_1}{\epsilon} \sqrt{\dfrac{\log p}{n}}\norm{\Delta_{\mathbf{S}^c}}_1 - \dfrac{C_1}{\epsilon}\sqrt{\dfrac{s\log p}{n}}\norm{\Delta_{\mathbf{S}}}_F.
\end{align*}
Finally, for $T_1$ we have for $\Delta \in \Theta_n(M) = \{\Delta: \Delta = \Delta^\T , \norm{\Delta}_F = Mr_n\}$, 
\begin{align*}
   T_1 &= n \Tilde{\Delta}^\T\left[ \int_{0}^1 (1-v)\frac{\partial^2}{\partial v^2} A(\theta_0 + v \Delta)dv \right] \Tilde{\Delta} \\
   &\geq \lambda_{\min}\left(\int_{0}^1 (1-v)\frac{\partial^2}{\partial v^2} A(\theta_0 + v \Delta)dv\right) \norm{\Delta}_F^2 \\
   & \geq \norm{\Delta}_F^2\int_0^1(1- v)\lambda_{\min}\left(\frac{\partial^2}{\partial v^2} A(\theta_0 + v \Delta) \right)dv \\
   &\geq \norm{\Delta}_F^2 \min_{0\leq v \leq 1} \lambda_{\min}\left(\frac{\partial^2}{\partial \theta \partial \theta^\T} A(\theta_0 + v \Delta) \right) \int_0^1 (1- v)dv \\
    &\geq \frac{1}{2} \norm{\Delta}_F^2 \min \left[\lambda_{\min}\left(\frac{\partial^2}{\partial \theta \partial \theta^\T} A(\theta_0 + \bar{\Delta}) \right): \norm{\bar{\Delta}}_F \leq Mr_n\right].
\end{align*}
In order to work with $\lambda_{\min}\left(\frac{\partial^2}{\partial \theta \partial \theta^\T} A(\theta_0 + \bar{\Delta}) \right)$, we next consider a Taylor expansion of $\frac{\partial^2}{\partial \theta \partial \theta^\T} A(\theta)$. For a matrix $P$, $\norm{P}_2 = \sup_{\norm{x}_2 = 1} \norm{Px}_2$. We  note here that $\frac{\partial^2 A(\theta)}{\partial  \theta \partial \theta^\T}: \Xi  \to \mathbb{R}^{p^2 \times p^2}$, where $\Xi = \Theta \times \Theta$. Elements of $\Xi$ are of the form $\theta \otimes \theta$. Let $\xi_0 = \theta_0 \otimes \theta_0$. Then, a $\bar{\Delta}$ perturbation in $\theta_0$ leads to the following perturbation in $\xi_0$: $ \xi_{\bar{\Delta}} = (\theta_0 + \bar{\Delta}) \otimes (\theta_0 + \bar{\Delta}) = \xi_0 + \theta_0 \otimes \bar{\Delta} + \bar{\Delta} \otimes \theta_0 + \bar{\Delta} \otimes \bar{\Delta}  = \xi_0 + \underline{\Delta}$, say. Moreover, $\norm{\xi_{\bar{\Delta}} - \xi_0}_2 = \norm{\underline{\Delta}}_2 \leq 2 \norm{\theta_0}_2 \norm{\bar{\Delta}}_2 + \norm{\bar{\Delta}}_2^2$, where we have used the fact that $\norm{A \otimes B}_2 = \norm{A}_2 \norm{B}_2$. Similarly, $\norm{\xi_{\bar{\Delta}} - \xi_0}_F = \norm{\underline{\Delta}}_F \leq 2 \norm{\theta_0}_2 \norm{\bar{\Delta}}_F + \norm{\bar{\Delta}}_F^2$, since $\norm{AB}_F \leq \norm{A}_2 \norm{B}_F$.
Hence, $\norm{\xi_{\bar{\Delta}} - \xi_0}_2 = O(r_n )$ and $\norm{\xi_{\bar{\Delta}} - \xi_0}_F = O(r_n )$  since by assumption $\norm{\theta_0}_2 < \bar{\beta}$. 
 Then, a Taylor expansion of $G(\xi_0 + \underline{\Delta})$ around $\xi_0$  \citep[Equation (24)]{vetter1973matrix} yields:
     \begin{align}
       G(\xi_0 + \underline{\Delta} ) &= G(\xi_0) + D_{\xi_0^\T}[\mathrm{tr}(\underline{\Delta}G(\xi) ) ]+ R_{\xi_0},
     \end{align}
     where $R_{\xi_0}^{p^2 \times p^2}$ is a matrix such that $\norm{R_{\xi_0}}_F = o(\norm{\bar{\Delta}}_F)$, which follows from the property of Taylor's theorem and analytical nature of $A(\theta)$ for all $\theta \in \Theta$. Thus, we have that $G(\xi_0 + \underline{\Delta}) = G(\xi_0) + H$, where $H = D_{\xi_0^\T}[\mathrm{tr}(\underline{\Delta}G(\xi) ) ] + R_{\xi_0}$, and $\norm{H}_2 \leq \norm{D_{\xi_0^\T}[\mathrm{tr}(\underline{\Delta}G(\xi) ) ]}_2 + \norm{R_{\xi_0}}_2 \leq C \norm{\bar{\Delta}}_2 + \norm{R_{\xi_0}}_2$ for some positive constant $C>0$ by the regularity assumption. Hence, $\norm{H}_2 = O(r_n) $. Due to Weyl's inequality, we then have:
     \begin{align*} 
     &\left| \lambda_{\min} \left( \frac{\partial^2}{\partial \theta \partial \theta^\T } A(\theta_0 + \bar{\Delta}) \right) - \lambda_{\min} \left( \frac{\partial^2}{\partial \theta \partial \theta^\T} A(\theta_0) \right) \right| \\ &\leq \norm{\frac{\partial^2}{\partial \theta \partial \theta^\T} A(\theta_0 + \bar{\Delta}) -\frac{\partial^2}{\partial \theta \partial \theta^\T} A(\theta_0 )}_2\\
     & \leq Cr_n,
     \end{align*}
     for some $C>0$.
Since $\norm{\bar{\Delta}}_F \leq Mr_n$ and $Mr_n \to 0$ as $n \to \infty$ we have that,
$$\lambda_{\min}\left(\frac{\partial^2}{\partial \theta \partial \theta^\T} A(\theta_0  + \bar{\Delta}) : \norm{\bar{\Delta}}_F \leq Mr_n\right)\geq \frac{\beta}{2}.$$ 
Putting all the pieces together, we obtain,

\begin{align*}
    G(\Delta) &\geq \frac{n\beta}{4} \norm{\Delta}_F^2 + C_1 \sqrt{\dfrac{\log p}{n}}\norm{\Delta_{S^c}}_1 \left(\frac{1}{\epsilon} - 1\right) - C_1\sqrt{\dfrac{s\log p}{n}}\norm{\Delta_{S}}_F  \left(1 + \frac{1}{\epsilon}\right)\\
    & \geq  \frac{n\beta}{4} \norm{\Delta}_F^2 - C_1\sqrt{\dfrac{s\log p}{n}}\norm{\Delta_{S}}_F  \left(1 + \frac{1}{\epsilon}\right) > 0,
\end{align*}
since $0<\epsilon <1$, and the second term in the last line of the above display is $O(s\log p/n)$.

Hence, we have argued that if $\hat{\Delta}$ is the minimizer of $G(\Delta)$ then $G(\hat{\Delta})\leq 0$ and $\inf \{G(\Delta): \Delta \in \Theta_n(M)\}>0$, then $\hat{\Delta}$ must be inside the ball $\Theta_n(M)$, i.e. $\norm{\hat{\Delta}}_F \leq Mr_n$.

\subsection{Proof of Theorem \ref{thm:posterior_consistency}}\label{app:consistency_proof}
To establish the posterior concentration rate result, we follow the standard route of sufficient conditions established by \citet[][Chapter 8]{ghosal2017fundamentals}. Define $B(p_{\theta_0}, \epsilon ) = \{p_\theta: KL(p_{\theta_0},p_\theta) < \epsilon, V(p_{\theta_0}, p_\theta)< \epsilon\}$ and $\mathcal{P}= \{p_{\theta}: \theta \in \Theta\}$. Equip $\mathcal{P}$ with the Hellinger metric which we write as $H$ (the choice of the metric can is flexible). To be precise, define the squared Hellinger distance between two probability distributions having densities $p_1$ and $p_2$ with respect to the Lebesgue measure as: $H^2(p_1, p_2) = \frac{1}{2} \int (\sqrt{p_1(x)} - \sqrt{p_2(x)})^2 dx $. Now suppose $p_1(x) = q_{\theta_1}(x)/z(\theta_1)$ and $p_2(x) = q_{\theta_2}(x)/z(\theta_2)$ where $\theta_1, \theta_2 \in \Theta$. Then, straightforward calculations yield:
$$ H^2(p_1, p_2) = 1 - \dfrac{z\left( \frac{\theta_1 + \theta_2}{2}\right)}{\sqrt{z(\theta_1) z(\theta_2)}}.$$
Furthermore, by Pinsker's inequality:
$$H^2(p_{\theta_0}, p_{\theta}) \leq \frac{1}{\sqrt{2}} \sqrt{KL(p_{\theta_0}, p_\theta)} \leq C \norm{\theta_0 - \theta}_2, $$
for some $C>0$, in the light of Proposition \ref{prop:KL_set}. The conditions for the posterior distribution $\pi(\theta \mid \mathbf{X})$ to contract at the rate $\epsilon_n$ are the following:
\begin{enumerate}
    \item Sufficient prior mass: $ \pi\{B(p_{\theta_0, \epsilon_n})\} \geq e^{-Cn \epsilon_n^2}$.
    \item Metric entropy: there exists a partition of $\mathcal{P} = \mathcal{P}_1 \cup \mathcal{P}_2$ such that $\log (\xi \epsilon_n, \mathcal{P}_1, d) \leq n\epsilon_n^2$.
    \item Small support on $\mathcal{P}_2$: prior support on $\mathcal{P}_2$ is such that $\pi(\mathcal{P}_2) \leq e^{-(C+4)n\epsilon_n^2}$.
\end{enumerate}
The above three conditions imply that $\pi(\theta: H^2(p_{\theta_0}, p_\theta) \geq M_n \epsilon_n \mid \mathbf{X}) \to 0$ in $P_{\theta_0}$-probability as $n \to \infty$ for every $M_n \to \infty$. We show that in our case these conditions hold with $\epsilon_n = \bar{\epsilon}_n = \sqrt{s \log^2 p/n}$.

\noindent \circled{1} {\bf Prior mass:}  The vectorized data generating parameter $\text{vech}(\theta_0) \in \Theta \subset \mathbb{R}^{p(p+1)/2}$. By Assumption \ref{ass:regularity}, $\Theta$ is open, hence there exists $\epsilon >0$ such that $\{\theta: \norm{\theta - \theta_0}_2 < \epsilon\} \subset \Theta$. Moreover, from Proposition \ref{prop:KL_set}, we have:
 $$\pi\left\lbrace B(p_{\theta_0}, \epsilon)\right\rbrace =\pi\left\lbrace \theta \in \Theta: \norm{\theta - \theta_0}_2 < \delta(\epsilon)\right\rbrace.$$
 Since the prior distribution is a truncation of the prior on $\mathbb{R}^{p(p+1)/2}$, we also have: 
 \begin{align}\label{eq:trun_prior_prob}
 \nonumber \pi\left\lbrace \theta : \norm{\theta - \theta_0}_2 < \epsilon \right\rbrace & = \dfrac{\pi_L\left\lbrace \theta: \norm{\theta - \theta_0}_2 < \epsilon \right\rbrace}{\pi_L(\Theta)} \\
 & \geq \pi_L\left\lbrace \theta: \norm{\theta - \theta_0}_2 < \epsilon \right\rbrace,
 \end{align}
 since $\pi_L(\Theta) \leq 1$. Next, we obtain a lower bound of $\pi_L\left\lbrace \theta : \norm{\theta - \theta_0}_2 < \bar{\epsilon}_n \right\rbrace$ where $\bar{\epsilon}_n = \{(s \log^2 p)/n\}^{1/2}$. We will show for $\theta, \theta_0 \in \mathbb{R}^p$, and $\theta_0$ having $s$ non-zero entries,
 $$ \pi_L\left\lbrace \theta: \norm{\theta - \theta_0}_2 < \sqrt{\dfrac{s\log^2 p}{n}}  \right\rbrace \geq \pi_L\left\lbrace \theta: \norm{\theta - \theta_0}_2 < \sqrt{\dfrac{s\log p}{n}} \right\rbrace \geq e^{-Cn \bar{\epsilon}_n^2}.$$
The result then applies to our context where $\theta_0, \theta \in \mathbb{R}^{p(p+1)/2}$.
Define $S = \{j: |\theta_{0j}| \neq 0\}$. Set $\delta = \sqrt{(s\log p)/n}$. We then have:
    \begin{align*}
\mathbb{P}[\norm{\theta - \theta_0}_2 < \delta]& \geq \mathbb{P}\left[\norm{\theta_S - \theta_{0S}}_2 < \frac{\delta}{2}\right]\mathbb{P}\left[\norm{\theta_{S^c} }_2 < \frac{\delta}{2}\right]\\
&\geq \prod_{j \in S}\mathbb{P}\left[|\theta_j - \theta_{0j}|< \frac{\delta}{2\sqrt{s}}\right] \int_{I_\lambda} \mathbb{P}\left[\norm{\theta_{S^c}}_2 < \frac{\delta}{2} \mid \lambda \in I_\lambda \right] \pi(\lambda) d\lambda,
\end{align*}
where $I_{\lambda} = [\lambda_1, \lambda_2]$ with $\lambda_1 = \sqrt{\frac{np\log p}{s}}$ and $\lambda_2 = 2\lambda_1$. We next focus on the conditional probability:
\begin{align*}
    \mathbb{P}\left[\norm{\theta_{0S^c}}_2 \leq \frac{\delta}{2} \mid \lambda \in I_\lambda \right]\geq \prod_{j \notin S}\mathbb{P}\left[|\theta_j|< \frac{\delta}{2\sqrt{p}} \mid \lambda \in I_\lambda \right].
\end{align*}
First, we handle the term $\mathbb{P}\left[|\theta_j|< \frac{\delta}{2\sqrt{p}} \mid \lambda \in I_\lambda \right]$. Recall, if the random variable $X$ is sub-exponential distribution with parameters $(\nu^2, \rho),$ then, 
\begin{equation}\label{eq:sub_exponential}
    \mathbb{P}\left[ |X - \mu| \geq t \right] \leq \begin{cases}
        &2e^{-\frac{t^2}{2\nu^2}}, \quad \text{if } 0<t \leq \frac{\nu^2}{\rho},\\
        & 2e^{-\frac{t}{2\rho}}, \quad \text{if }t > \frac{\nu^2}{\rho}.
    \end{cases}
\end{equation}
Here $\theta_j \mid \lambda \sim \text{Laplace}(1/\lambda)$, and the Laplace distribution is sub-exponential with $\nu = 2/\lambda$ and $\rho = 2/\lambda$. Set $t = \frac{\delta}{2\sqrt{p}} = \frac{1}{2}\sqrt{\frac{s\log p }{np}}$ in \eqref{eq:sub_exponential} and note that $\nu^2/\rho = 2/\lambda$. Thus, for any $\lambda> 4\sqrt{\frac{np }{s \log p}}$
we have:
\begin{align*}
    \mathbb{P}\left[|\theta_j|< \frac{\delta}{2\sqrt{p}} \mid \lambda \right] &\geq 1 - 2e^{-\frac{\delta \lambda}{4 \sqrt{p}}}.
\end{align*}
Specifically, for any $\lambda \in I_\lambda$ we get, 
\begin{align*}
    \mathbb{P}\left[|\theta_j|< \frac{\delta}{2\sqrt{p}} \mid \lambda \in I_\lambda\right] &\geq 1 - 2e^{-\frac{\delta \lambda}{4 \sqrt{p}}} \geq 1 - 2e^{\frac{\delta \lambda_1}{4 \sqrt{p}}} = 1 - 2e^{-\frac{1}{4}\log p}.
\end{align*}
Thus, for some $C>0$, $\mathbb{P}\left[|\theta_j|< \frac{\delta}{2\sqrt{p}} \mid \lambda \right]\geq e^{-C\log p}$ for some $C>0$. 
We also have $\mathbb{P}[\lambda \in I_{\lambda}] = \mathbb{P}[\lambda > \lambda_1] - \mathbb{P}[\lambda >\lambda_2]$. Next, we record the following upper and lower bounds for the incomplete Gamma function \citep{pinelis2020exact},
\begin{align*}
    \dfrac{x^a e^{-x} (x - a - 1)}{(x - a)^2 + a} \leq \Gamma (a, x) = \int_{x}^\infty t^{a - 1} e^{-t} dt \leq \dfrac{x^a e^{-x}}{x-a}.
    \end{align*}
Also, we have $(a/e)^{a-1} \leq \Gamma(a) \leq (a/2)^{a-1}$. Set $\alpha = a_\lambda$ and $\beta = b_\lambda$. Thus,
\begin{align*}
\mathbb{P}(\lambda \in I_\lambda) &= \frac{\beta^\alpha}{\Gamma(\alpha)} \Gamma(\alpha, \beta \lambda_1) - \frac{\beta^\alpha}{\Gamma(\alpha)} \Gamma(\alpha, 2\beta\lambda_1) \\
& = e^{\log \beta} e^{-(\alpha - 1)\log \alpha} e^{(\alpha-1)\log 2}\Gamma(\alpha, \beta\lambda_1) - e^{\log \beta} e^{-(\alpha - 1)\log \alpha} e^{(\alpha-1)}\Gamma(\alpha, 2\beta\lambda_1)\\
& \geq e^{\log \beta} e^{-(\alpha - 1)\log \alpha} e^{K(\alpha-1)} (\beta\lambda_1)^{\alpha} e^{-C \beta \lambda_1}, \quad \text{for } \alpha = O(\beta\lambda_1)\\
& = e^{\log \beta} e^{-(\alpha - 1)\log \alpha} e^{(\alpha-1)\log 2} e^{\alpha \log \beta \lambda_1} e^{-C \beta \lambda_1}\\
& \geq e^{-C\beta \lambda_1} \geq e^{-Cs \log p},
\end{align*}
for $\beta = O\left(s\sqrt{\frac{s\log p}{np}}\right)$. Hence, we have that,
\begin{align*}
    \int_{I_\lambda} \mathbb{P}\left[\norm{\theta_{S^c}}_2 < \frac{\delta}{2} \mid \lambda \in I_\lambda \right] \pi(\lambda) d\lambda \geq e^{-Cs\log p},
\end{align*}
for some $C>0$.
We now handle the term $\mathbb{P}[|\theta_j - \theta_{0j}| < \frac{\delta}{2\sqrt{s}}]$. Since $\lambda \sim \text{Gamma}(\alpha, \beta)$, marginal density of $\theta_j$ is:
\begin{align*}
    f(\theta_j) = \dfrac{\beta^{\alpha}}{2\Gamma(\alpha)} \dfrac{\Gamma(\alpha +1)}{(|\theta_j| + \beta)^{\alpha + 1}} = \dfrac{\alpha \beta^{\alpha}}{2(|\theta_j| + \beta)^{\alpha + 1}}.
\end{align*}
Let $A = \{\theta_j : |\theta_j - {\theta_{0j}}|\leq \delta^\star, j \in S\}$ where $\delta^\star = \frac{\delta}{2\sqrt{s}}$. We have for $\alpha, \beta$ as chosen above
\begin{align*}
    \mathbb{P}\left[\theta_j \in A \right] &= \int_{A} \dfrac{\alpha \beta^{\alpha}}{2(|\theta_j| + \beta)^{\alpha +1}} d\theta_j \\
    &\geq e^{-\log 2}e^{\alpha \log \beta} e^{\log \alpha } e^{\log \delta^\star}e^{-(\alpha+1) (|\theta_{0j}|+ \delta^\star + \beta)}  \\
    & \geq e^{-C \alpha \log (\frac{1}{\beta})},
\end{align*}
which establishes that $\pi_L\left\lbrace \theta: \norm{\theta - \theta_0}_2 < \delta \right\rbrace>e^{-Cn \bar{\epsilon}_n^2}$. 

 \circled{2} {\bf Metric entropy:} Due to the relation between the Hellinger metric and the Euclidean balls, the metric entropy of the set of densities $\mathcal{P} = \{p_\theta: \theta \in \Theta\}$ can be studied using $\norm{\theta_1 - \theta_2}_2$. Define the following sieve on $\mathcal{P}$:
$$\mathcal{P}_n = \left\lbrace \theta \in \Theta: \sum \ind\{|\theta_j|> \delta_n\}\leq Cn\bar{\epsilon}_n^2, \norm{\theta}_2 \leq B\right\rbrace, $$
where $B$ is a sufficiently large number to be chosen later and $\delta_n = s \bar{\epsilon}_n/( \sqrt{p \log p})$. Arguments similar to \citet[][Theorem 3.1]{bhattacharya2015dirichlet} establish that: 
$$\log \mathcal{N}(\xi\bar{\epsilon}_n, \mathcal{P}_n, \norm{\cdot}_2) \leq C n\bar{\epsilon}_n^2,$$
where $\xi > 0$ is a positive number, $C$ is a positive constant and $\mathcal{N}(\epsilon, \mathcal{A}, \norm{\cdot})$ denotes the $\epsilon$-covering number of the metric space $(\mathcal{A}, \norm{\cdot})$.

\circled{3} {\bf Prior support on $\mathcal{P}_2$:} Next, we work with the (unrestricted) prior probability of the complement sieve, that is, $\pi_L(\mathcal{P}_n^c) $. We have that:
\begin{align*}
    \pi_L(\mathcal{P}_n^c) \leq \mathbb{P}(N \geq Cn \bar{\epsilon}_n^2 + 1) + \pi_L(\norm{\theta}_2 > B),
\end{align*}
where $N$ is the random variable denoting the number of elements in a $p(p+1)/2$-dimensional vector with marginal distribution $p(\theta) = \alpha \beta^{\alpha}/[2(|\theta| + \beta)^{(\alpha +1)}]$ exceeds $\delta_n$ in absolute value. Clearly, $N \sim \text{Bin}(p(p+1)/2, \nu)$ where $\nu = \mathbb{P}(|\theta| > \delta_n)$. Write $p^\star = p(p+1)/2$. From \cite{song2017nearly} we obtain that
\begin{align*}
    \mathbb{P}(N \geq Cn \bar{\epsilon}_n^2 + 1) &\leq 1 - \Phi\left\lbrace (2p^\star H[\nu,  Cn\bar{\epsilon}_n^2/p^\star])^{1/2}\right\rbrace\\
    & \leq (2\pi)^{-1/2} (p^\star H[\nu, Cn\bar{\epsilon}_n^2/p^\star])^{1/2} \exp\{(p^\star H[\nu, Cn\bar{\epsilon}_n^2/p^\star])^{1/2}\},
\end{align*}
where 
$$p^\star H[\nu, Cn\bar{\epsilon}_n^2/p^\star] = Cn\bar{\epsilon}_n^2 \log \left( \dfrac{Cn\bar{\epsilon}_n^2}{p^\star \nu}\right) + (p^\star - Cn\bar{\epsilon}_n^2) \log \left( \dfrac{p^\star - Cn\bar{\epsilon}_n^2}{p^\star - p^\star \nu}\right).$$
Hence, it suffices to show that $p^\star H[\nu, Cn\bar{\epsilon}_n^2/p^\star] \geq O(n \bar{\epsilon}_n^2)$. Following \citet[][Theorem 4.6]{sagar2024precision}, this in turn is true if we can establish $\nu = p^{-\gamma}$ for some positive constant $\gamma$.
Recall $\alpha = O(s\log p)$ and $\beta = O(s\sqrt{s\log p/np}) = O(s\bar{\epsilon}_n/\sqrt{p\log p})$, so that $\delta_n/\beta = O(1)$. We shall show that this probability is upper bounded by $p^{-\gamma}$ for some $\gamma>0$. Indeed,
\begin{align*}
    \mathbb{P}(|\theta| > \delta_n) &= 2\int_{\delta_n}^\infty p(\theta) d\theta \\
    & = \int_{\delta_n}^\infty \dfrac{\alpha \beta^{\alpha}}{(\theta + \alpha)^{\alpha + 1}} d\theta\\
    & = \dfrac{\alpha}{\alpha + 2} \dfrac{1}{(1 + \delta_n /\beta)^{\alpha +2}} \beta^{-2}\\
    & \approx  \beta^{-2} e^{-C(\alpha+2)\log 2} \leq p^{-\gamma},
\end{align*}
for some $\gamma>0$. 
Finally, we have,
\begin{align*}
    \mathbb{\pi}_L(\norm{\theta}_2 > B) &\leq \pi_L(\norm{\theta}_1 > B) \\
    & = \int \mathbb{P}\left[ \norm{\theta}_1 > B \mid \lambda\right] p(\lambda) d\lambda \\
    & = \int \mathbb{P}\left[Y > B \mid \lambda\right] p(\lambda) d\lambda, \quad Y = \sum_{j=1}^{p(p+1)/2}|\theta_j| \\
    & = \int \mathbb{P}\left[ \frac{2Y}{\lambda} > \frac{2B}{\lambda} \mid \lambda\right] p(\lambda) d\lambda  = \int \mathbb{P}\left[ \chi^2_{p(p+1)} > \frac{2B}{\lambda} \mid \lambda\right] p(\lambda) d\lambda,
\end{align*}
where we have used the fact that if $\theta_j\mid \lambda \overset{iid}{\sim} \text{Laplace}(1/\lambda)$ for $j = 1, \ldots, p$, then $(2/\lambda) \sum_{j} |\theta_j| \sim \chi^2_{2p}$. Hence, by Markov's inequality, we obtain,
\begin{align*}
    \mathbb{\pi}_L(\norm{\theta}_2 > B) &\leq \int \dfrac{p(p+1)}{(2B/\lambda)} g(\lambda) d\lambda\\
    &=  \dfrac{p(p+1)}{2B} \dfrac{\alpha}{\beta},
\end{align*}
with $\alpha/\beta = O(\sqrt{nps\log p})$. Hence, setting $B = O(\sqrt{nps\log p}) e^{k n\bar{\epsilon}_n^2}$ for some large positive constant $k$, we obtain that $\pi_L(\norm{\theta}_2 > B) \leq e^{-Cn\bar{\epsilon}_n^2}$ for some $C>0$. Finally, since by assumption $\pi_L(\Theta) \geq e^{-K\log p}$ for some $K>0$, we have the desired result.

\subsection{Proof of Theorem \ref{thm:EM_RBM}}\label{proof:em_rbm}

Recall the proposed estimator of $\nabla_\theta \log z(\theta)$ defined as $T_N(\theta) = T_N^{(\nabla_\theta z)}(\theta)/T_N^{(z)}(\theta)$. Next, consider the following sequence of iterates 
\begin{align*}
\theta^{(t+1)} =& \theta^{(t)} + \gamma_t [\nabla g(\v; \theta^{(t)}, \theta^{(t)}) - T_{N_t}(\theta^{(t)})] \\
=& \theta^{(t)} + \gamma_t [\nabla g(\v; \theta^{(t)}, \theta^{(t)}) - \nabla \log z(\theta^{(t)})] + \gamma_t [\nabla \log z(\theta^{(t )}) - \mathbb{E}\{T_{N_t} (\theta^{(t)})\}] \\
&+\gamma_t [\mathbb{E}\{T_{N_t} (\theta^{(t)})\} - T_{N_t}(\theta^{(t)})]\\
=& \theta^{(t)} + \gamma_t m(\theta^{(t)}) + \gamma_t R_t + \gamma_t E_t, 
\end{align*}
where $m(\theta^{(t)}) = [\nabla_\theta \log z(\theta^{(t )}) - \mathbb{E}\{T_{N_t} (\theta^{(t)})\}]$, $R_t = [\nabla_\theta \log z(\theta^{(t )}) - \mathbb{E}\{T_{N_t} (\theta^{(t)})\}] $, and $E_t = [\mathbb{E}\{T_{N_t} (\theta^{(t)})\} - T_{N_t}(\theta^{(t)})|\mathcal{F}_t]$, where $\mathcal{F}_t$ is the filtration at step $t$. Since construction of $T_{N_t}(\theta_t)$ only requires knowledge of $\theta^{(t)}$, this error term is conditionally independent of $\theta^{(0)} \ldots, \theta^{(t-1)}$ given $\theta^{(t)}$. In stochastic approximation literature \citep{robbins1951stochastic, delyon1999convergence}, the term $m(\theta)$ is known as the mean-field of the algorithm, the random variables $R_t$ can be thought of vanishingly small remainder or bias terms and $E_t$ is a random error with mean $0$. Furthermore, we allow the number of Monte Carlo samples to increase with the iterations. This makes the remainder term $R_t$ to converge to $0$ as the number of iterations increase. In the following series of steps, we establish some key properties about the estimator $T_N(\theta)$ which will be key in establishing the theorem.
\vspace{10pt}

\noindent { \bf \circled{A} The sequence of functions $\mathbf{\{T_N(\theta)_{N\geq 1}\}}$ is equicontinuous in $\mathbf{\Theta}$:} By the mean value theorem, it is enough to show that there exists $C>0$ such that:
$$\norm{\nabla_\theta T_N(\theta)}_F < C,$$
for all $N\geq 1$. We have, 
\begin{align*}
    &\nabla T_N(\theta) = T_N^{(z)}(\theta) \nabla T_N^{(\nabla_\theta z)}(\theta) - \nabla T_N^{(\nabla_\theta z)}(\theta)\nabla T_N^{(z)}(\theta),\\
    \text{i.e., } &\norm{\nabla T_N(\theta)}_F \leq |T_N^{(z)}(\theta)| \norm{\nabla T_N^{(\nabla_\theta z)}(\theta)}_F + \norm{\nabla T_N^{(\nabla_\theta z)}(\theta)\nabla T_N^{(z)}(\theta)}_F.
    \end{align*}
    Note that $T_N^{(z)}(\theta)  = N^{-1}\sum_{i=1}^N \exp\{(\theta - \phi)' X_{i\bullet}\}$. Since $\Theta$ is compact, and $\theta \mapsto e^{\theta'y} \in \mathbb{R}$ is continuous, the map admits a lower and upper bound, say $[l_y, u_y]$. Here, the dependence of $l, u$ on $y$ is implicit. However, since $y$ is also bounded as is the case in our setting, and the map $y \mapsto e^{a'y}$ is continuous, there exists $l = \inf_y l_y > -\infty$ and $u = \sup_y u_y <\infty$. Thus, $T_N^{(z)}(\theta)\geq l $. Also,
    \begin{align*}
        \norm{\nabla T_N^{(\nabla_\theta z)}(\theta)}_F &= N^{-1}\sum_{i=1}^N \norm{\exp\{(\theta -\phi)' X_{i \bullet}\} X_{i\bullet} X_{i\bullet}'}_F\\
        & \leq N^{-1}\sum_{i=1}^N\exp\{(\theta -\phi)' X_{i \bullet}\}\norm{ X_{i\bullet} X_{i\bullet}'}_F\\
        & \leq u \sqrt{(p+m)},
        \end{align*}
since elements of the $(p+m) \times 1$ vector $X_{i\bullet}$ are either 0 or 1. Next, $\norm{\nabla T_N^{(\nabla_\theta z)} (\theta)}_2 \leq N^{-1} \sum_{i=1}^n \exp\{(\theta - \phi)'X_{i\bullet}\}\norm{X_{i\bullet}}_2 \leq u\sqrt{(p+m)}$. Hence, 
\begin{align*}
    \norm{\nabla T_N^{(\nabla_\theta z)}(\theta)\nabla T_N^{(z)}(\theta)}_F &\leq \norm{\nabla T_N^{(\nabla_\theta z)} (\theta)}_2 \norm{\nabla T_N^{(z)}(\theta)}_2\\
    & \leq N^{-1}u\sqrt{(p+m)} \sum_{i=1}^N  \norm{\exp\{(\theta - \phi)' X_{i\bullet}\}X_{i\bullet}}_2\\
    & \leq u^2 (p+m).
\end{align*}
Thus, we have that $T_N(\theta) \to \nabla_\theta \log z(\theta)$ almost surely and the sequence is equicontinuous on $\Theta$. Hence, $T_N(\theta) \to \nabla_\theta \log z(\theta)$ uniformly on $\Theta$ with probability $1$ \citep[Exercise 7.16]{rudin1964principles}.

\noindent {\bf \circled{B} $\mathbf{\text{lim}_{t \to \infty}\sum_{k=1}^t \gamma_k E_k}$ exists almost surely:} First, note that $\norm{T_N(\theta)}_2 < \infty$. This is a consequence of the arguments presented above. Let $M_t = \sum_{k=1}^t \gamma_k E_k$ and $E_0 = 0$. Then $M_k$ is an $\mathcal{F}$-martingale with $\mathcal{F} = \cap_{t=1}^\infty \mathcal{F}_t$. Moreover,
\begin{align*}
    \sum_{t=1}^\infty \mathbb{E}\left[ \norm{M_{t} - M_{t-1}}_2^2 \mid \mathcal{F}_{t-1}\right] \leq \sum_{t=1}^\infty \gamma_t^2 \mathbb{E}[\norm{E_t}_2^2 \mid \mathcal{F}_{t-1}] < \infty.
\end{align*}
This implies that $\lim_{t \to \infty} \sum_{k=1}^\infty \gamma_kE_k $ exists almost surely.

\noindent {\circled{C} \bf The estimator $\mathbf{T_N(\theta)}$ is asymptotically uniformly unbiased:} First note that $T_N(\theta)$ is a vector. We recall that if $x_n$ is a sequence in $\mathbb{R}^p$, then $x_n \to x \in \mathbb{R}^p$ iff $\Psi(x_n) \to \Psi(x)$ for all linear functional $\Psi$. Now, $|\Psi(T_N(\theta))| \leq \norm{\Psi}\norm{T_N(\theta)},$ which is bounded following similar arguments as above. Hence, by the dominated convergence theorem, 
$$\lim_{N\to \infty} \mathbb{E}[\Psi(T_N(\theta))] \to \mathbb{E} [\lim_{N\to \infty} \Psi(T_N(\theta))] = \Psi(\nabla_\theta \log z(\theta)).$$
Since $\Psi (\mathbb{E} [T_N(\theta)] )= \mathbb{E}[\Psi(T_N(\theta))]$, we have that: 
$$\lim_{N\to \infty} \mathbb{E}[T_N(\theta)] \to \nabla_\theta \log z(\theta).$$
Moreover, by uniform convergence of $T_N(\theta)$ to $\nabla_\theta \log z(\theta)$, for every $\epsilon>0$, there exists $N(\epsilon)$ such that:
\begin{align*}
    \norm{\mathbb{E}[T_N(\theta) - \nabla_\theta \log z(\theta)]}_2 &= \norm{\int [T_N(\theta) - \nabla_\theta \log z(\theta)]\prod_{i=1}^N p_{\phi}(X_{i\bullet}) dX_{i\bullet}}_2\\
    &\leq \int \norm{T_N(\theta) - \nabla_\theta \log z(\theta)}_2 \prod_{i=1}^N p_{\phi}(X_{i\bullet}) dX_{i\bullet}\\
    & \leq \epsilon,
\end{align*}
for all $\theta$.
An immediate consequence of this is that there exists a sequence of Monte Carlo sample sizes $\{N_t\}_{t\geq 1}$ such that $\norm{R_t}_2 = \norm{\mathbb{E}[T_{N_t}(\theta^{(t)})] - \nabla_\theta \log z(\theta^{(t)})}_2 \to 0$ as $t\to \infty$.

\noindent {\bf \circled{D} The mean field satisfies $\mathbf{m(\theta) = \nabla_\theta \log p_\theta(\v)}$:} By Fisher's identity, we have:
\begin{align*}
    \nabla_\theta \log p_\theta(\v) = \dfrac{\nabla_\theta p_\theta(\v)}{ p_\theta (v)} = \dfrac{\nabla_\theta \sum_{\h} p_\theta(\v, \h)}{ p_\theta (v)} &= \mathbb{E}[ \nabla_\theta \log p_\theta(\v, \h) \mid \v, \theta]\\ 
    & = \nabla_\theta g(\v; \theta, \theta)  -\nabla_\theta \log z(\theta).
\end{align*}

\noindent{\bf \circled{E} $\mathbb{E}[\mathbf{\norm{\h}_2^2} \mid \theta, \v] < \infty$:} This follows from the fact that $\norm{\h}_2^2$ is bounded and $\h\mid \v, \theta$ also follows an Ising distribution.

The conclusion of the theorem follows from \citet[][Theorems 2 and 5]{delyon1999convergence} by noting that the complete data model is a member of the exponential family satisfying the regularity conditions for these theorems and due to \circled{A}, \circled{B}, \circled{C}, \circled{D}. 


\section{A Simple Accept--Reject Sampler for $p_\theta(\cdot)$\label{app:sampler}}
Generating samples from a Boltzmann distribution, of which PEGMs are special cases, remains a challenge; especially without resorting to computationally expensive MCMC techniques. However, our approach also accomplishes this goal. Specifically, for a given $\theta$, suppose we simulate $Y\sim p_\phi(\cdot),$ where $\phi=\mathrm{diag}(\theta)$. This is cheap, and does not require MCMC. Moreover, $z(\phi)$ is analytically available. Next, recall, $p_\theta(x) = q_\theta(x)/z(\theta)$. Thus, an accept--reject sampler for $p_\theta(\cdot)$ consists of the following two steps:
\begin{enumerate}
\item Simulate $Y\sim p_\phi(\cdot)$
\item Accept $Y$ with probability $r=q_\theta(Y)/\{Mp_\phi(Y)\}$ where $M$ is such that $q_\theta(Y) < M p_\phi(Y)$ for all $Y$.
\end{enumerate}
Thus, $M$ must be chosen to satisfy the criterion:
\begin{align*}
\exp\left(\sum_j \theta_j Y_j + \sum_{j\ne k} \theta_{jk} Y_j Y_k\right)&< M \exp\left(\sum_j \theta_j Y_j\right)/z(\phi),\\
\text{i.e., } \exp\left(\sum_{j\ne k} \theta_{jk} Y_j Y_k\right) &<Mz(\phi)^{-1}.
\end{align*}
However, a simple choice of $M$ is available for several PEGMs. For example, in a Poisson graphical model, all off-diagonal elements of a valid $\theta$ are negative, and one may take $Mz(\phi)^{-1}=1$, i.e., $M=z(\phi)$. Similarly, for Ising, where $Y_j=\{0,1\}$ for all $j$, a simple choice is $\{M: M z(\phi)^{-1}>\exp(p^2\max_{jk} \theta_{jk})\}$. Other, less obvious choices of $M$ may result in better average acceptance probabilities. We do not explore the details in this work.
\section{Algorithmic Details for Section \ref{sec:PEGM_inference}}\label{app:computational_details}
\subsection{Tuning $\lambda$ in \eqref{eq:high_dim_projected_gradient_descent} and model selection}
Here we describe how the value of the tuning parameter $\lambda$ is selected in \eqref{eq:high_dim_projected_gradient_descent}. Generally, pseudo-likelihood methods have a regression structure in that the likelihood of $X_j \mid X_{-j}$ is a generalized linear model for PEGMs. As a result, out-of-sample prediction is typically used for tuning parameter selection. This is, however, not the case when a full-likelihood analysis is implemented. In our work, we instead consider the out-of-sample log-likelihood to select the value of $\lambda$ given estimated $\hat{\theta}_\lambda$ is available for a grid of values of $\lambda$: $[\lambda_l, \lambda_u]$. Specifically, we divide the observed data randomly into $K$-folds, of which $K-1$ folds are used to obtain $\hat{\theta}_\lambda$ for each value of $\lambda$. The log-likelihood $\ell(\hat{\theta}_\lambda)$ is computed on the remaining fold. The process is repeated for each fold, and we choose $\hat{\theta}_\lambda$ to be our estimator for which we obtain the maximum out-of-sample log-likelihood. This generally results in improved Frobenius norm estimates of $\theta$ as reported in Table \ref{tab:Ising_PGM_table}. However, it is well known that cross-validated estimates in penalized estimation procedures may provide sub-optimal performance in terms of model selection \citep{buhlmann2011statistics}. To address this, we construct an estimator of $\theta$ that selects edges between nodes which are most often selected. Suppose, for $R$ choices of $\lambda$, we obtain estimates $\hat{\mathbf{S}}_{\lambda_r} = \{(j,k):\hat{\theta}_{\lambda_r, jk} \neq 0\}$. Define $I_{jk}^r = \mathbb{I}\{(j,k) \in \hat{\mathbf{S}}_{\lambda_r}\}$. Then, we estimate the structure of the graph $G$ by:
\begin{equation}
    \hat{\mathbf{S}} = \left\lbrace (j,k): \frac{1}{R} \sum_{r=1}^R I^r_{jk} > \pi_{thr}, \right\rbrace
\end{equation}
where $0<\pi_{thr}<1$. The procedure closely resembles stability selection \citep{meinshausen2010stability}, except here we do not perform subsampling. Our experiments with and without subsampling resulted in similar performance. Hence, we omit that step. For the threshold $\pi_{thr}$, we use $0.6$. Results from this choice are reported under the MCC column in Table \ref{tab:Ising_PGM_table}. For fair comparison, we also implemented this structure learning method with pseudo-likelihood based approaches.

\subsection{HMC for PEGMs}
Suppose we observe iid data $\mathbf{X} = (X_{1\bullet}, \ldots, X_{n\bullet})$ from some $\text{PEGM}(\theta)$. Consider the following prior structure over $\theta$: $\pi(\theta) = \int \left[\prod_{j\leq k} \pi(\theta_{jk} \mid \tau)\right]g(\tau)d\tau$, i.e. the elements $\theta_{jk}$ are conditionally independent given a global shrinkage parameter $\tau$. The posterior is then:
\begin{equation}\label{eq:bayes_posterior}
    \pi(\theta \mid \mathbf{X})\propto \left[\prod_{i=1}^n \dfrac{q_\theta(X_{i\bullet})}{z(\theta)}\right] \pi(\theta).
\end{equation}
 We make two simplifying assumptions. First, we fix  $\tau = 1$, noting that our procedure can be easily incorporated in the general case when $\tau \sim g$ with an additional layer of sampling. Second, we assume that the convexity of $\Theta$ can be represented as $C(\theta) \geq 0$ for a suitable choice of a differentiable $C(\theta)$. 

Suppose $\pi(\theta_{jk})$ admits the marginal representation $\pi(\theta_{jk}) = \int \pi(\theta_{jk}\mid \rho_{jk}) \pi(\rho_{jk})d\rho_{jk},$ with respect to some latent variables $\rho_{jk}$, and $\pi(\theta_{jk}\mid \rho_{jk})$ is differentiable. This is indeed the case for any prior that is a scale mixture of Gaussian. Then samples from the augmented posterior $\pi(\theta, \{\rho_{jk}\}_{j\leq k}\mid \mathbf{X})$ can be obtained by employing a Gibbs sampler that iterates between sampling from $\pi(\theta \mid \{\rho_{jk}\}_{j\leq k}, \mathbf{X})$ and $\pi(\{\rho_{jk}\}_{j\leq k} \mid \theta, \mathbf{X})$. We use a HMC sampler to sample from $\pi(\theta \mid \{\rho_{jk}\}_{j\leq k}, \mathbf{X})$ and sample $\rho_{jk}$'s in a block since these are conditionally independent. To ensure that the HMC sampler lives within $C(\theta)$, we adopt the constrained HMC sampler from \citet{betancourt2011nested}. A full description of the transition operation of the sampler is provided in Algorithm \ref{algo:constrained_HMC} where we write $\theta = \text{vech}(\theta)$ with a slight abuse of notation. Let $U(\theta)$ denote the negative of the logarithm of $\pi(\theta \mid \{\rho_{jk}\}_{j \leq k}, \mathbf{X})$ up to proportionality constants, i.e. $U(\theta) = -\sum_{i=1}^n \log q_\theta(X_{i\bullet}) + n \log z(\theta) - \sum_{j\leq k} \log \pi(\theta_{jk}\mid \rho_{jk})$, and $\nabla_\theta U(\theta)$ denote its gradient which is available following Proposition \ref{prop:geyer_grad} and due to the assumption that $\pi(\theta_{jk}\mid \rho_{jk})$ is differentiable.
The new state generated according to Algorithm \ref{algo:constrained_HMC} is accepted with the standard Metropolis correction \citep[Section 5.3.2.1]{neal2011mcmc}.
\begin{algorithm}[!t]
    \caption{Constrained HMC sampler for updating $\theta \mid \{\rho_{jk}\}_{j \leq k}, \mathbf{X}$}
    {\bf Input}: $\theta$ [current state], $\epsilon$ [step size], $L$ [number of leapfrog steps]

    {\bf Output}: $\theta$ [next update]
    \begin{algorithmic}
\State Sample $p \sim \Gauss(0, I)$ and set $p = p - \frac{1}{2}\epsilon \nabla_\theta U(\theta)$, 
\While{$l \leq L$}
\State $\theta =  \theta + \epsilon p$
\If{$C(\theta)\geq 0$}
    \State $p \gets p - \epsilon \nabla_\theta U(\theta)$
\Else
    \State $r \gets  \dfrac{\nabla C(\theta)}{\norm{\nabla C(\theta)}} $
    \State $p \gets p - 2(r'p)r$
\EndIf
\EndWhile
\State $\theta \gets \theta + \epsilon p$
\State $p \gets p - \frac{1}{2}\epsilon \nabla_\theta U(\theta)$
\end{algorithmic}
\label{algo:constrained_HMC}    
\end{algorithm}


\section{Additional Numerical Experiments}
\subsection{Quality assessment of the Monte Carlo estimates}\label{sup:mc_imp}
 We conduct a simulation study on Gaussian graphical models (GGMs) to assess the validity and efficiency of our Monte Carlo estimates. Clearly, the GGM case is not itractable, i.e., it has a known $z(\theta)$ and $\nabla_\theta z(\theta)$. However, precisely because of this reason, the GGM case serves as a useful \emph{oracle}. Let $\Theta = \mathcal{M}^{+}_p$, the space of $p\times p$ positive definite matrices, which is the valid parameter space for a GGM.  Consider a random vector $x = (x_1, \ldots, x_p)^T \in \mathbb{R}^p$ following a GGM with zero mean and precision matrix $\theta \in \Theta$. Then, the model can be seen to be a PEGM with density $p_\theta(x) = q_\theta(x)/z(\theta)$ with: 
$$q_\theta(x) = \exp\left\{-\frac{1}{2}\left(\sum_{j=1}^p \theta_{jj}x_j^2 + 2 \sum_{j< k} \theta_{jk} x_j x_k\right)\right\},$$
with standard properties of a multivariate Gaussian density yielding:
\begin{equation}
z(\theta) = (2\pi)^{p/2}\text{det}(\theta)^{-1/2},\;\; \nabla_\theta z(\theta) = -\frac{1}{2}(2\pi)^{p/2} \text{det}(\theta)^{-1/2}\theta^{-1}, \;\;\nabla_\theta \log z(\theta)=- \frac{1}{2}\theta^{-1},\label{eq:truth}
\end{equation}
and, for $\phi=\mathrm{diag}(\theta),$ one obtains:
$$z
(\phi)=(2\pi)^{p/2}\prod_{i=1}^{p} \theta^{-1/2}_{ii}.
$$ 
Recall that in an intractable PEGM, none of the quantities in Equation~\eqref{eq:truth} is available in closed form in general, although $z(\phi)$ is available. However, the tractable Gaussian case with known truth in Equation~\eqref{eq:truth} provides a test case for evaluating the quality of our estimates against the truth, when only the knowledge of $q_\theta(x)$ and $\nabla_\theta q_\theta(x)$ is used in forming the estimates, similar to what we have done for intractable PEGMs.

Recall that we define the respective importance sampling estimates of $z(\theta)/z(\phi), \nabla_\theta z(\theta)/z(\phi)$ and $\nabla_\theta \log z(\theta)$ as: 
 $$T^{(z)}_N = \frac{1}{N}\sum^N_{i = 1}\frac{q_{\theta}(X_{i\bullet})}{q_{\phi}(X_{i\bullet})},\; T^{(\nabla_\theta z)}_N=\frac{1}{N}\sum^N_{i = 1} \frac{\nabla q_{\theta}(X_{i\bullet})}{q_{\phi} (X_{i\bullet})},  \text{ and, } T^{(\nabla_\theta \log z)}_N: = \frac{T^{(\nabla_\theta z)}_N}{T^{(z)}_N},
 $$ 
 where $X_{i\bullet}, \ldots, X_{N\bullet}$ are $N$ independent and identically distributed samples drawn from the distribution $p_{\phi}(x)$.

Satisfying the condition of the Proposition \ref{prop:exp} that $2u+\phi\in \Theta$, where $u=\theta-\mathrm{diag}(\theta)$, we generate dense precision matrices $\theta = \Sigma^{-1}$ in low-dimensional ($p = 5$) case and sparse precision matrices in high-dimensional scenarios ($p = 50, 100, 300$). 
\begin{itemize}
    \item \textbf{Mixed Covariance.} Let  $\Sigma = \{\sigma_{ij}\},$ where $\sigma_{ii} = 2$ and $\sigma_{ij} = \sigma_{ji}$ follows $U(-0.6, 0.6)$. 
    \item \textbf{Band graph}. Let $\theta = \{\theta_{ij}\},$ where $\theta_{ii} = 3$, $\theta_{i,i+1} = \theta_{i+1,i} =  0.3$ and $\theta_{ij} = 0$ for $|i - j| \geq 2$. In this case $2u + \phi \in \Theta$.
\end{itemize}
For each scenario, we compute $T^{(z)}_N$, $T^{(\nabla_\theta z)}_N$ and the ratio $T^{(\nabla_\theta z)}_N/T^{(z)}_N$ across $100$ replicates with varying Monte Carlo sample sizes $N$. We then compare these estimators to the true values of $z(\theta)/z(\phi)$, $\nabla_{\theta} z(\theta)/z(\phi)$ and $\nabla_{\theta}\log z(\theta)$, computing the following quantities for each Monte Carlo replication:
\footnotesize
\begin{equation}
\mathrm{SE}(z) =\left\lbrace\frac{z(\theta)}{z(\phi)} - T^{(z)}_N\right\rbrace^2,\; \mathrm{Fr}(\nabla_\theta z) = \frac{||\frac{\nabla_\theta z(\theta)}{z(\phi)} - T^{(\nabla_\theta z)}_N||_F}{p^2} ,\; \mathrm{Fr}(\nabla_\theta \log z) = \frac{||\nabla_{\theta}\log z(\theta) - \frac{T^{(\nabla_\theta z)}_N}{T^{(z)}_N}||_F}{p^2},\label{eq:estimates}
\end{equation}
\normalsize
where for the latter two, we scale the Frobenius norm by the number of elements in the matrix $\theta$ to make the quantities comparable across $p$.

Table \ref{tab:GGM}  reports the average of these numbers across the 100 Monte Carlo replicates. In both low and high dimensional cases, the estimates $T^{(z)}_N$ are close to the truth $z(\theta)/z(\phi)$, with MSE decreasing at the expected rate of $N^{-1}$. {Histograms of $T^{(z)}_N$ for $p = 100$ in  Figure \ref{fig:TZ_densityplot} demonstrate that the estimator is centered around the true value and follows an approximately normal distribution.} Similarly, $\mathrm{Fr}(\nabla_{\theta}z)$ and $\mathrm{Fr}(\nabla_{\theta}\log z)$ confirm the stability of the estimates $T^{(\nabla z)}_N$ and $T^{(\nabla z)}_N/T^{(z)}_N$ for all $p$ we considered. 

We remark here that good performance is achieved only when the conditions of the Proposition \ref{prop:exp} are satisfied. For instance, with $\theta_{i, i+1} = \theta_{i+1, i} = -0.7$ in the same band graph structure for $p=100$, we have $(2u+\phi)\not\in \Theta$, and this  results in a high MSE of $2979.6$ for $T^{(z)}_N$ with $N = 1000$. 

Additionally, we compare $T^{(\nabla_\theta z)}_N/T^{(z)}_N$ with the estimator $G_N^{(\nabla_\theta \log z)}$, mentioned in Remark \ref{rem:diag} that uses draws from $p_\theta(\cdot)$. In practice, these draws are obtained via running a Gibbs sampler at the current $\theta$. Figure \ref{fig:comptime_GT} shows that $G_N^{(\nabla_\theta \log z)}$ requires over 25 times the computation time to reach the approximately same $\mathrm{Fr}(\nabla_\theta \log z)$ for $p=100$. For a fixed computational budget, the proposed estimator performs much better than the Gibbs sampling based estimator. The reasons for these are: (a) unlike the i.i.d. samples from $p_\phi(\cdot)$ used for our importance sapling estimate, the Gibbs sampler takes some time to mix, and, (b) even after mixing, updating all coordinates takes long for large $p$, since batch sampling is not possible in Gibbs.

\begin{table}[!h]
  \caption{
  Mean (sd) of the performance metrics for the estimates as defined in Equation~\eqref{eq:estimates} for Gaussian Graphical Models across 100 replicates. 
\label{tab:GGM}}
  \centering
 \scalebox{0.8}{
  \begin{tabular}{ccccccc}
    \toprule
     $p$ &   Model & $N$  &  $\mathrm{SE}(z)$  & $\mathrm{Fr}(\nabla_\theta z) $ & $\mathrm{Fr}(\nabla_\theta \log z) $ \\
     
    \midrule
 \multirow{2}{*}{$5$} & \multirow{2}{*}{Mixed Covariance} & $1000$  & 0.0007 (0.001) & 0.015 (0.005) & 0.01 (0.004)\\
        & & $10000$   &  0.00006 (0.00009) & 0.005 (0.002) & 0.004 (0.001)\\
         \midrule
      \multirow{2}{*}{50} & \multirow{2}{*}{Band Graph} & $5000$ & 0.0002 (0.0003)  & 0.00009 (0.000008) & 0.00007 (0.000005)\\
     
      &  &  $50000$ &  0.00003 (0.00003) & 0.00003 (0.000002) & 0.00002 (0.000001) \\
        \midrule
      \multirow{2}{*}{$100$} & \multirow{2}{*}{Band Graph} & $5000$ & 0.001 (0.002) & 0.00008 (0.00001) & 0.00005 (0.000006)\\
     
      &  &  $50000$ & 0.0001 (0.0001) & 0.00002 (0.000001) & 0.00001 (0.000001)\\
      \midrule
     
      \multirow{2}{*}{$300$} & \multirow{2}{*}{Band Graph} & $5000$ &  0.20 (0.98) & 0.0002 (0.0002) & 0.00004 (0.00003)\\
     
      &  &  $50000$ & 0.008 (0.01)  & 0.00006 (0.00001) &  0.00001 (0.00003)\\
    \bottomrule
    \end{tabular}
 }
    \end{table}

\begin{figure}[!h]
\centering
    \begin{subfigure}{0.48\textwidth}
       \includegraphics[height = 5cm, width = 7cm]{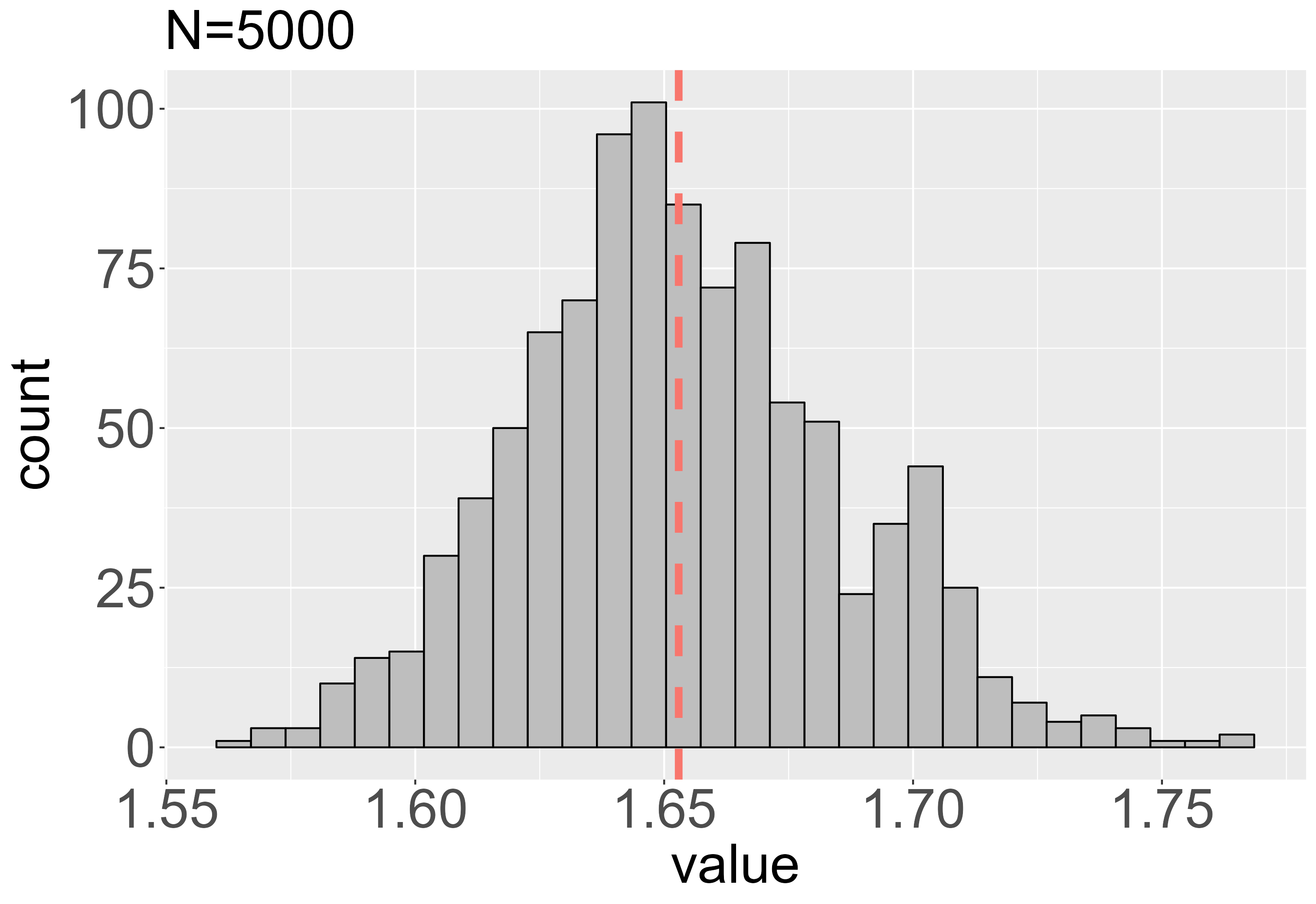}
   \end{subfigure}
   \begin{subfigure}{0.48\textwidth}
       \includegraphics[height = 5cm, width = 7cm]{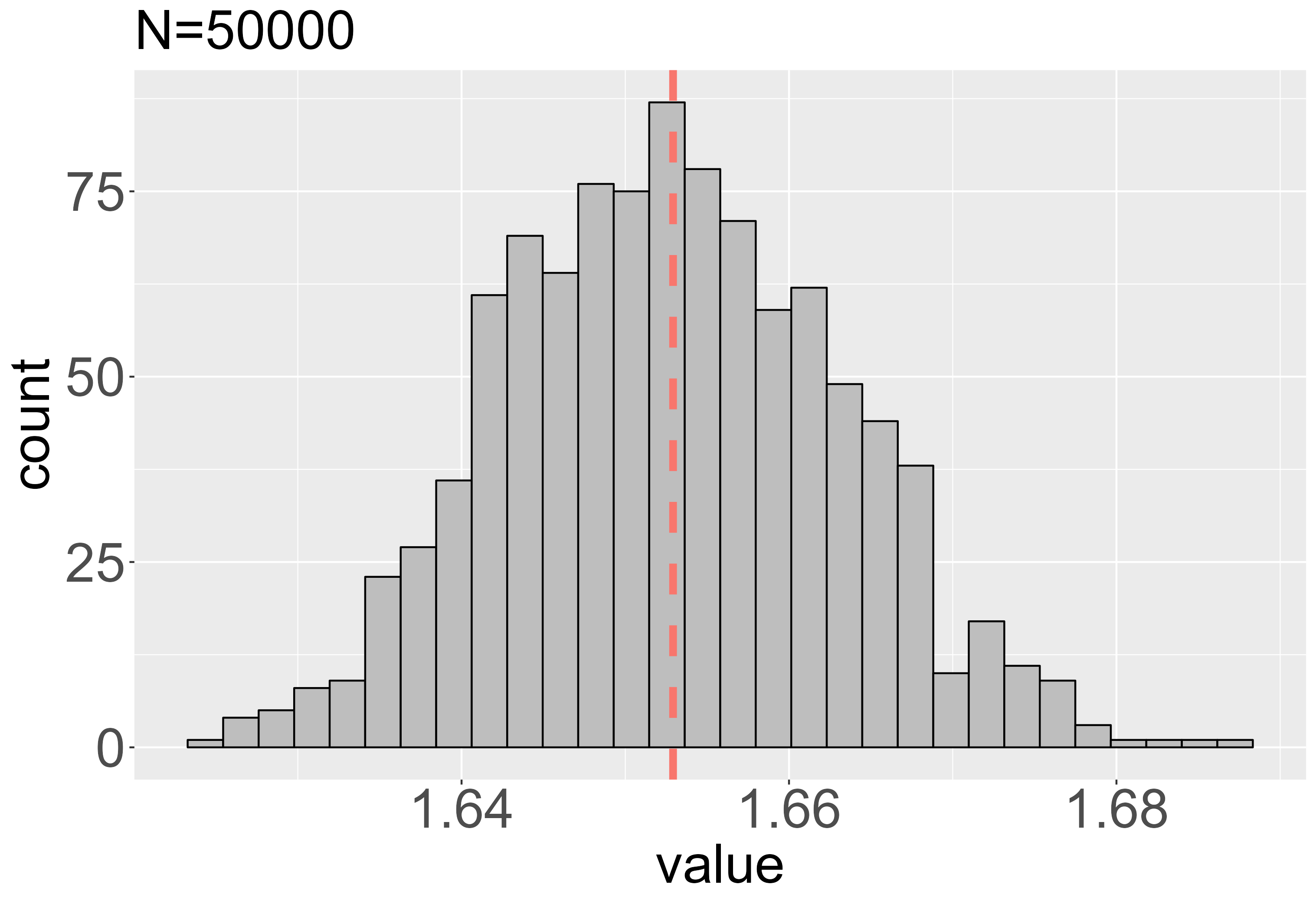}
   \end{subfigure}
    \caption{Histograms of 1000 replications of $T_{N}^{(z)}$ for $p = 100$ with Monte Carlo sample sizes $N = 5,000$ and $50,000$. The vertical red dashed line indicates the true value of ${z(\theta)}/{z(\phi)}$. Note the smaller spread of the histogram on the right.}
    \label{fig:TZ_densityplot}
\end{figure}

\begin{figure}[!h]
\centering
   \includegraphics[width =0.75\textwidth]{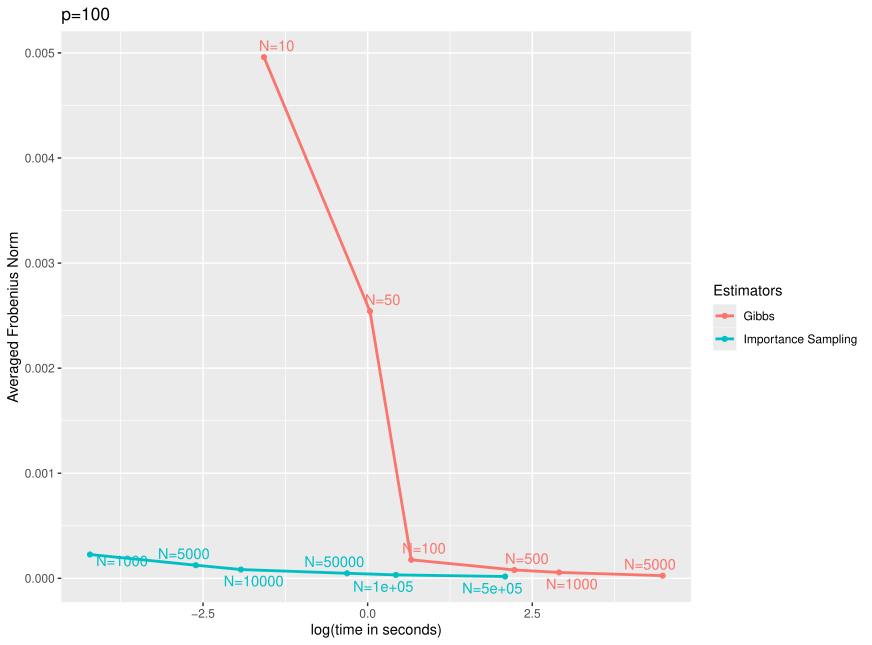}
    \caption{$\mathrm{Fr}(\nabla_\theta \log z)$ vs. log of computation time in seconds for $p=100$ using the band graph structure precision matrix.  The Monte Carlo sample size for the importance sampling estimator and the MCMC sample size for the Gibbs estimator are labeled at the data points.}
    \label{fig:comptime_GT}
\end{figure}

\subsection{Likelihood vs. pseudo-likelihood based CIs in low dimensions}\label{sec:LD_simulations}
Results in Table \ref{tab:CI_width} substantiate that likelihood-based inference provides the best results in low dimensions ($p = 3, 5$). 
 Specifically, we consider a dense true parameter $\theta_0$, where each element is set to $\theta_{0, jk} = -0.5$ for the PGM and $\theta_{0, jk} = -1$ for the Ising model. For each $\theta_0$, we generate datasets with a sample size of 100 ($n = 100$) and construct 95\% bootstrap confidence intervals using 500 bootstrap samples. The average coverage and width of these intervals are reported over 100 replications of this process.
Results in Table \ref{tab:CI_width} indicate that while both maximum likelihood estimate (MLE) and pseudo maximum likelihood estimate (MPLE) have comparable coverage probabilities for the PGM, widths of confidence intervals higher for PMLE, often by a factor as large as 4. For the Ising model, MLE significantly outperforms PMLE in terms of both average coverage rate and confidence interval length. Notably, no penalization was used in low dimensions, either for full likelihood or for pseudolikelihood.
\begin{table}[h!]
    \centering
    \scalebox{.65}{
    \begin{NiceTabular}{ccc||cc|cc||cc|cc}
    \toprule
    & & & \multicolumn{4}{c||}{Ising} & \multicolumn{4}{c}{PGM}\\
    \midrule \addlinespace
        Setting & $p$ & $n$ & \multicolumn{2}{c|}{Coverage} & \multicolumn{2}{c||}{Width} & \multicolumn{2}{c|}{Coverage} & \multicolumn{2}{c}{Width}  \\ \addlinespace
        \toprule
        & & & MLE & MPLE  & MLE & MPLE & MLE & MPLE  & MLE & MPLE \\ \addlinespace
       \midrule
        \multirow{2}{*}{LD} & 3 & \multirow{2}{*}{100}  &0.92 &  0.77 &  0.92 & 3.37 &  0.94 &  0.92 &  1.42 &  2.11\\
         & 5 &  &  0.93 & 0.61 &   0.88 &  3.68 &  0.94 &  0.91 &  1.11 &   4.45 \\
         \bottomrule
    \end{NiceTabular}
    
    }
    \caption{Coverage and width of confidence intervals of maximum likelihood (MLE) and maximum pseudo-likelihood (MPLE) procedures for two PEGMs: Ising and PGM.}
    \label{tab:CI_width}
\end{table}

\subsection{A comparison of RBM and BM representational power}\label{sec:supp_bm}
Existing literature on training performance of the CD algorithm and its other versions focus mostly on benchmark datasets. Nonetheless, in this section we report the results of CD training versus the proposed approach in controlled simulation experiments. 
We compare the quality of the corresponding density estimate in terms of the total variation distance: $\text{TV}(p_{\theta_0}, p_{\theta}) = \sum_{\v \in \{0,1\}^p} |p_{\theta_0}(\v) - p_{\hat{\theta}}(\v)|$ where the dimension of $\theta_0$ is the data generating value of the parameter and $\hat{\theta}$ is the estimate obtained from the two methods, noting that the dimensions of $\theta_0$ and $\hat{\theta}$ might be different given that the number of hidden variables is unknown in reality. We consider two cases $(p,m) = (2, 2)$ and $(p, m) = (3, 4)$ and set the sample size $n = 1000$. Elements of the true data generating $\theta_0$ are set to 0 with probability 0.5 and are drawn from $\text{Uniform}[-1,1]$ with probability 0.5, respecting the bipartite structure of the graph. When implementing CD we set $k =1$, although we did not notice any significant difference in results for $k$ up to 10. We report in Table~\ref{tab:FLvCD} results averaged over 50 replications where we abbreviate the proposed method here by FL (for \emph{full likelihood}). All hyperprameters for training for the two algorithms are kept same, including the step size and convergence criteria. Two key observations can be made from the results. First, for increasing values of the number of hidden variables $m$, the fit is better, which is expected. Second, the performance of FL is better compared to CD in almost all cases, albeit by a small margin. 

We next consider experiments where we study the representational power of RBMs to that of BMs. Since both method learn distributions of visible variables, we want to investigate whether allowing connections within layers as in BM (see also Figure \ref{fig:BM}), leads to similar or better learning power for a smaller number of hidden variables. For this we consider two situations $p = 3, 5$ which is the dimension of the visible variable. We generate the visible variables from a multivariate probit model where $y_j = \mathbb{I}(z_j >0)$ and $z = (z_1, \ldots, z_p)^\T \sim \Gauss(0, \Sigma)$. We set $\Sigma = (\Sigma)_{ij}$ with $\Sigma_{ii} = 1$ and $\Sigma_{ij} = 0.5$, $i \neq j = 1, \ldots, p$. This model is often used to capture dependent multivarite binary data \citep{ashford1970multi, chakraborty2023bayesian}. We fix the sample size of the visible variables to $n = 1000$, and fit RBM and BM to the data letting the number of hidden variables $m$ vary between $6$--$12$ for $p = 3$, and $10$--$18$ for $p = 5$, with increments of 1. Ideally, with enough hidden variables, RBMs should be able to approximate the data generating distribution \citep{le2008representational}. We show the total variation distance between the true distribution and the fitted distribution from both models in Figure \ref{fig:RBMvBM1}, which reveals that BM achieves better approximation to the distribution of the visible variables with a fewer number of hidden variables, at least for this data generating distribution. Indeed, the gain in approximation for BM is almost 50\% for $p = 3$ and 33\% for $p = 5$. The trade-off is the computation time. The key bottleneck of training a full BM is the fact that the conditional distribution of the hidden variables depend on the corresponding visible variables which is not the case for RBMs. This allows for much faster training of RBMs. While a theoretical investigation into the representational power of BM is beyond the scope of this work, we hope that our numerical experiments serve as a platform for further investigation into the properties of a full BM.
\begin{table}
    \parbox{0.45 \textwidth}{
    \centering
    \begin{NiceTabular}[b]{c|c|ccc}
    \hline
    & & $m = 4$ & $m = 5$ & $m= 6$\\
    \cmidrule{1-5}
        \multirow{2}{*}{$p = 2$} & FL & 0.70 & 0.62 & 0.56 \\
         & CD & 0.72 & 0.65 & 0.57\\
         \hline
         & & $m = 16$ & $m = 17$ & $m= 18$\\
    \cmidrule{1-5}
        \multirow{2}{*}{$p = 3$} & FL & 3.66 & 3.56 & 3.54  \\
         & CD & 3.66 & 3.59 & 3.58\\
         \hline
    \end{NiceTabular}

    }
    \hfill 
    \parbox{0.45 \textwidth}{
    \centering
    \begin{NiceTabular}[b]{c|c|cc}
    \toprule
        & $m$ & RBM& BM  \\
        \hline \\
         \multirow{4}{*}{TV} & 20 & 60.29 & 48.91\\
        & 30 &  58.03 & 48.55\\
        & 40 &  55.38 & 46.09\\
        & 50 & 52.81& 43.66\\
        \bottomrule
    \end{NiceTabular}   
    }
      
    \caption{\emph{(Left)} Training performance of RBMs using the proposed method (FL)and CD in terms of the estimated density of the visible variables with $p=2, 3$ where we report the total variation distance between the true distribution and the estimated distribution. \emph{(Right)} Results from a high-dimensional experiment (visible variable has dimension $p=100$) where we report the total variation distance on a held-out test set. Throughtout, $m$ denotes the number of hidden variables.} 
    \label{tab:FLvCD}
    \end{table}
\begin{figure}
   \begin{subfigure}{0.48\textwidth}
       \includegraphics[height = 5cm, width = 7cm]{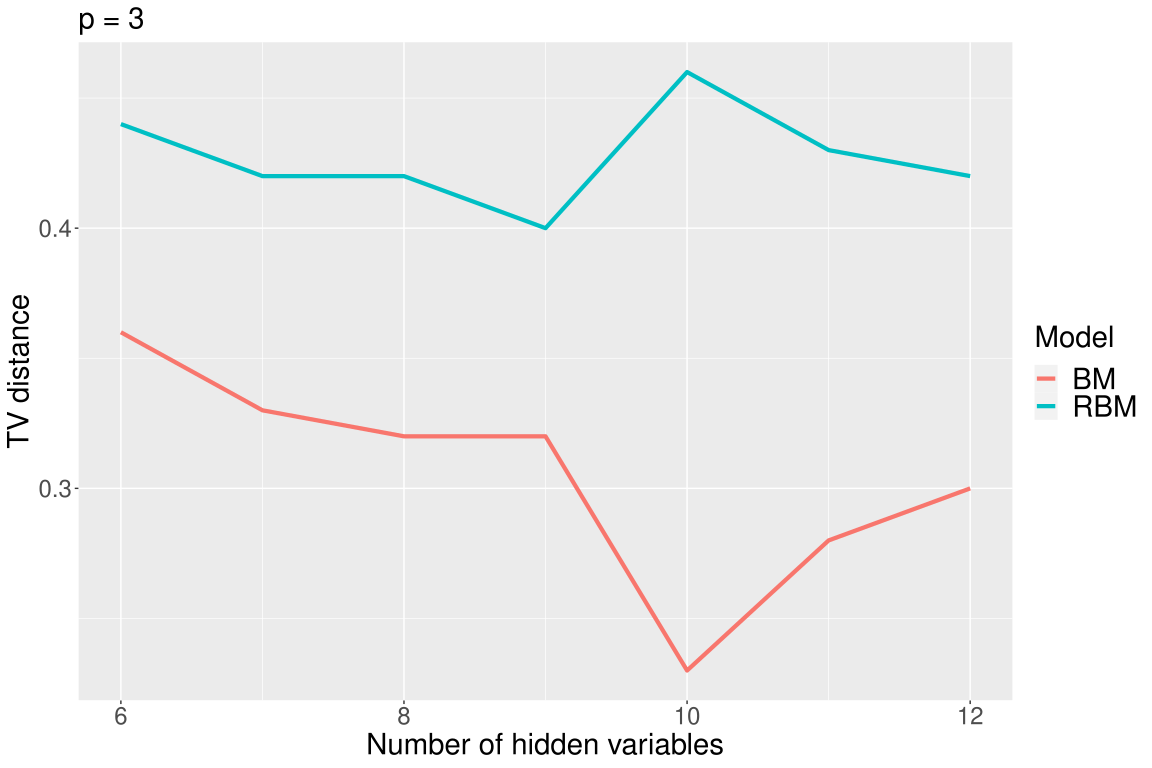}
   \end{subfigure}
   \begin{subfigure}{0.48\textwidth}
       \includegraphics[height = 5cm, width = 7cm]{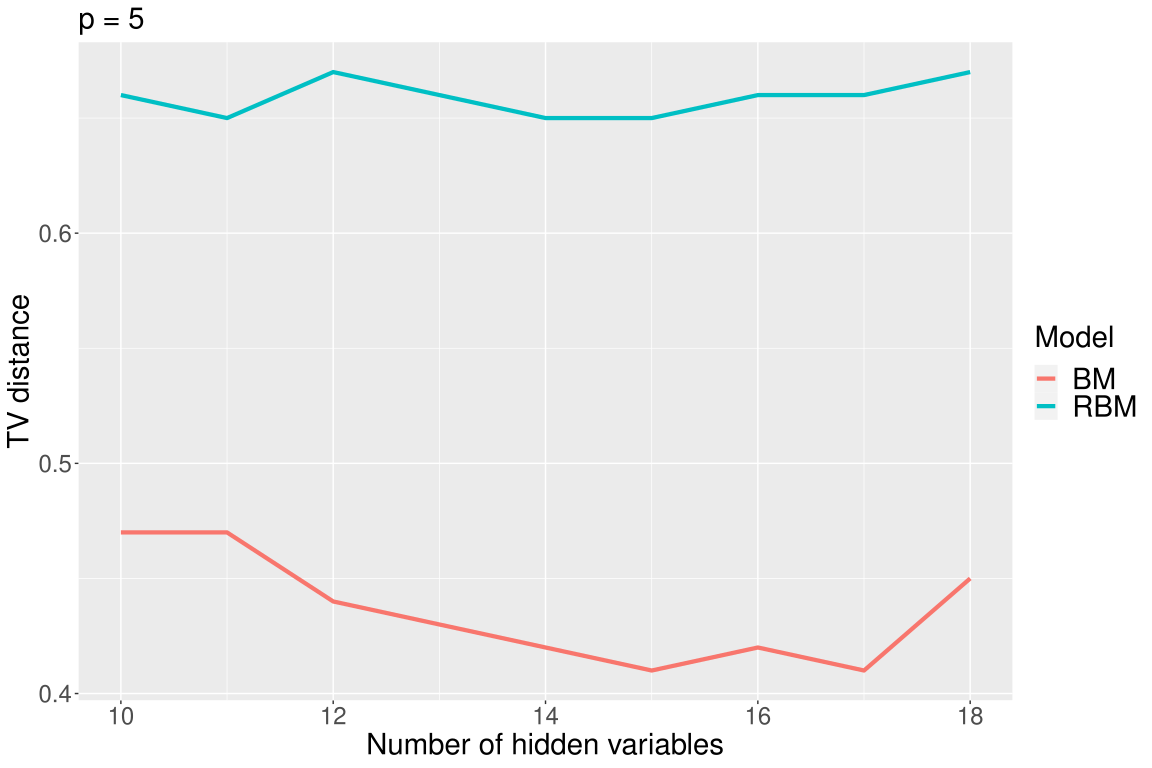}
   \end{subfigure}
    \caption{RBM vs BM total variation distance when the dimension of the visible variables is $p = 3$ and $5$.}
    \label{fig:RBMvBM1}
\end{figure}

Next, we report our results from a high-dimensional experiment comparing the representational power of RBM and BM with different choices of the number of hidden variables. Here, we consider $p = 100$, and let $m = 20, 30, 40, 50$.The observed data is generated from a probit model as before. Estimating the total-variation distance for this very large sample space is computationally prohibitive. Instead, we compare the fit on a test set of size $n_t = 500$. Specifically, we compute $\sum_{\v \in \mathcal{T}} |p(\v) - p_{\hat{\theta}}(\v)| = \text{TV}$, where $\mathcal{T}$ is the set of test points and $p_{\hat{\theta}}(\v)$ is computed following Remark \ref{rm:BM_marginal} in the main document.  We randomly selected $50$ configurations of the hidden variables from $\{0,1\}^m$. Using each of these configurations, we obtained an estimate of $p_{\hat{\theta}}(\v)$ for $\v \in \mathcal{T}$. A final estimate is constructed using the average of these estimates, which is then used to compute $\text{TV}$ as defined above.  In the right panel of Table \ref{tab:FLvCD} we report the results, which reconfirms that for the same number of hidden variables, the representational power of BM seems to be better than that of an RBM. 
\section{Additional Data Analysis Results}
\subsection{Additional results for movie ratings network}

This section presents additional results from fitting the Ising model to the $Movielens$ dataset using the PMLE-based method described as in Section \ref{sec:Ising_movie}. Our primary focus is to reveal the connections among the most frequently rated 50 movies. The resulting graph structure is shown in Figure \ref{fig:Ising_movie_sw_lr}, left panel. Table \ref{tab:movie_id} provides the match between abbreviations and the legends of the movie titles, genres, and years of release. We also present the top 13 most connected movies from the resulting graph structure in Table \ref{tab:movies_top}. The top three of them are from two well known movie franchises. These two movie franchises (\emph{Star Wars} and \emph{Lord of the Rings}) also form two separate cliques within the graph, as shown in Figure \ref{fig:Ising_movie_sw_lr}, right panel. A heatmap of estimated $\hat{\theta}_\lambda$ with the penalty parameter $\lambda = 10$ is in Figure \ref{fig:Ising_movie_estimated_theta}, where red and blue denote positive and negative connections.

\begin{table}[]
    \centering
    \scriptsize
    
    \scalebox{0.8}{
    \begin{tabular}{c|c|c}
    \hline
 \textbf{Abbr.} & \textbf{Title (Year)} & \textbf{Genres} \\
\hline

  TS  & Toy Story (1995) & Adventure$|$Animation$|$Children$|$Comedy$|$Fantasy \\
 
Br & Braveheart (1995) & Action$|$Drama$|$War \\
 
MP & Monty Python and the Holy Grail (1975) & Adventure$|$Comedy$|$Fantasy \\

SW-V & Star Wars: Episode V - The Empire Strikes Back (1980) & Action$|$Adventure$|$Sci-Fi \\
 
PB
 & Princess Bride (1987) & Action$|$Adventure$|$Comedy$|$Fantasy$|$Romance \\
 
RL & Raiders of the Lost Ark  (1981)  & Action$|$Adventure\\

SW-VI & Star Wars: Episode VI - Return of the Jedi (1983) & Action$|$Adventure$|$Sci-Fi\\

Te & Terminator(1984) 
 & Action$|$Sci-Fi$|$Thriller\\
 
GD & Groundhog Day (1993) & Comedy$|$Fantasy$|$Romance \\

BF & Back to the Future (1985) & Adventure$|$Comedy$|$Sci-Fi \\

IJ & Indiana Jones and the Last Crusade (1989) & Action$|$Adventure \\

A13 & Apollo 13 (1995) & Adventure$|$Drama$|$IMAX \\

MB & Men in Black (1997) &Action$|$Comedy$|$Sci-Fi \\

GW & Good Will Hunting (1997) & Drama$|$Romance \\

Ti & Titanic (1997) & Drama$|$Romance \\

SPR & Saving Private Ryan (1998) & Action$|$Drama$|$War \\

Ma & Matrix (1999) & Action$|$Sci-Fi$|$Thriller \\

SW-IV & Star Wars: Episode IV - A New Hope (1977) & Action$|$Adventure$|$Sci-Fi \\

SS  & Sixth Sense (1999) & Drama$|$Horror$|$Mystery \\

AB & American Beauty (1999) & Drama$|$Romance \\

FC & Fight Club (1999) & Action$|$Crime$|$Drama$|$Thriller \\

PF & Pulp Fiction (1994) & Comedy$|$Crime$|$Drama$|$Thriller \\

SR  & Shawshank Redemption (1994) & Crime$|$Drama \\

TM & Twelve Monkeys  (1995) & Mystery$|$Sci-Fi$|$Thriller \\

AV & Ace Ventura: Pet Detective (1994) & Comedy \\

FG & Forrest Gump (1994) & Comedy$|$Drama$|$Romance$|$War \\

Gl & Gladiator (2000) & Action$|$Adventure$|$Drama \\

LK & Lion King (1994) & Adventure$|$Animation$|$Children$|$Drama$|$Musical$|$IMAX \\

Sp & Speed (1994) & Action$|$Romance$|$Thriller \\

TL & True Lies (1994) & Action$|$Adventure$|$Comedy$|$Romance$|$Thriller \\

Me  & Memento (2000) & Mystery$|$Thriller \\

Sh & Shrek (2001) & Adventure$|$Animation$|$Children$|$Comedy$|$Fantasy$|$Romance \\

Fu  & Fugitive (1993) & Thriller \\

Se & Seven (1995) & Mystery$|$Thriller \\

JP & Jurassic Park (1993) & Action$|$Adventure$|$Sci-Fi$|$Thriller \\

LR1 & Lord of the Rings: The Fellowship of the Ring (2001) & Adventure$|$ Fantasy \\

US & Usual Suspects, The (1995) & Crime$|$Mystery$|$Thriller \\

SL  & Schindler's List (1993) & Drama$|$War \\

DK & Dark Knight (2008) & Action$|$Crime$|$Drama$|$IMAX \\

Al & Aladdin (1992) & Adventure$|$Animation$|$Children$|$Comedy$|$ Musical\\

T2  & Terminator 2: Judgment Day (1991) & Action$|$Sci-Fi\\

DW & Dances with Wolves (1990) & Adventure$|$Drama$|$Western \\

Ba & Batman (1989) & Action$|$Crime$|$Thriller \\

SiL  & Silence of the Lambs (1991) & Crime$|$Horror$|$Thriller \\

LR2 & Lord of the Rings: The Two Towers (2002) & Adventure$|$Fantasy \\

Fa & Fargo (1996) & Comedy$|$Crime$|$Drama$|$Thriller \\

LR3  & Lord of the Rings: The Return of the King (2003) &Action$|$Adventure$|$Drama$|$Fantasy \\

ID & Independence Day (1996) & Action$|$Adventure$|$Sci-Fi$|$Thriller\\

In & Inception (2010) & Action$|$Crime$|$Drama$|$Mystery$|$Sci-Fi$|$Thriller$|$IMAX \\

God & Godfather (1972) & Crime$|$Drama \\

    \hline
    \end{tabular}
    }
    \caption{Abbreviations for the movie data set.}
    \label{tab:movie_id}
\end{table}

\begin{table}[]
    \centering
    \begin{tabular}{c|c}
    \hline
    \textbf{Title (Year)} & \textbf{Number of Degree} \\
    \hline
    Star Wars: Episode IV - A New Hope (1977) & 20 \\
    Star Wars: Episode V - The Empire Strikes Back (1980) & 20 \\
    Lord of the Rings: The Fellowship of the Ring (2001) & 19 \\
    Fight Club (1999) & 17 \\
    Memento (2000)  & 17 \\
    Independence Day (1996) & 16 \\
    American Beauty (1999) & 16 \\
    Inception (2010) & 16 \\
    Twelve Monkeys (1995) & 15 \\
    Pulp Fiction (1994) & 15 \\	
    Speed (1994) & 15 \\
    True Lies (1994)  & 15 \\
    Titanic (1997) & 15 \\
    \hline
    \end{tabular}
    \caption{Movies with the highest degree of connectivity: top 13.}
    \label{tab:movies_top}
\end{table}

\begin{figure}[h!]
\begin{subfigure}[t]{0.6\textwidth}
    \centering
\includegraphics[width = 9cm, height = 10cm]{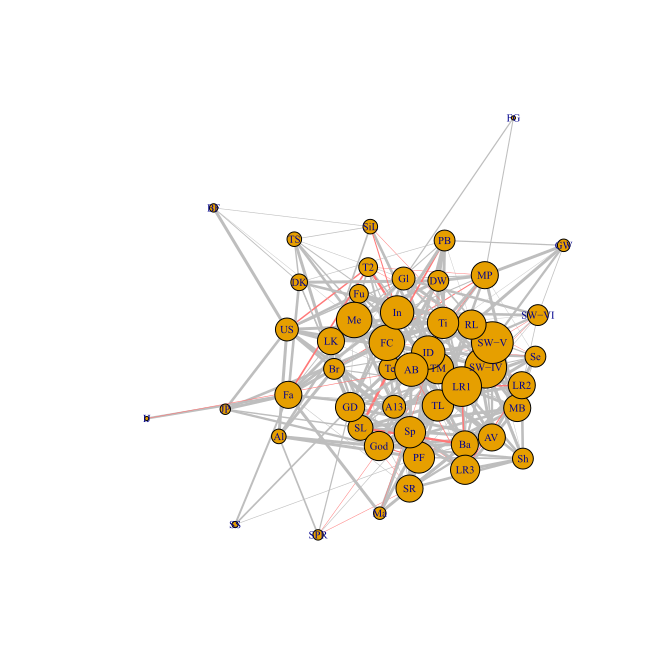}

\end{subfigure}
\begin{subfigure}[t]{0.35\textwidth}
\centering
\includegraphics[width=6cm, height = 7cm]{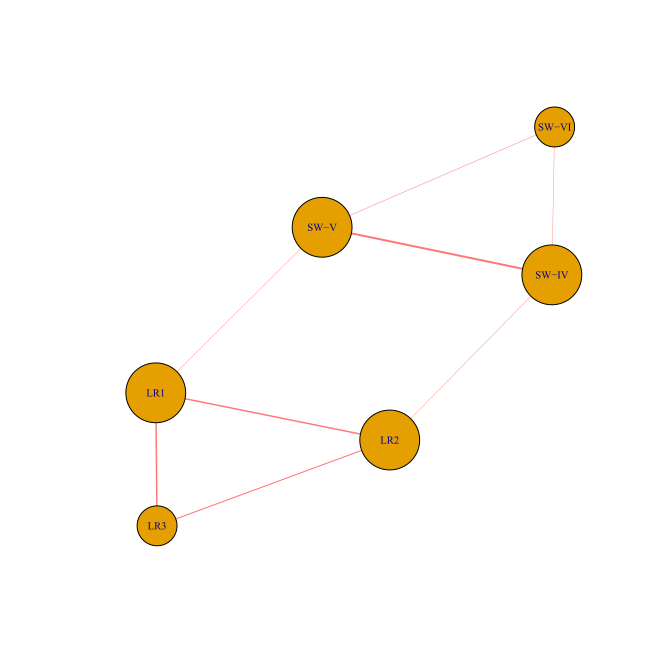}
\end{subfigure}
 \caption{(\emph{Left}). The movie ratings network from the PMLE-based Ising model. The edge widths are proportional to the stable selection probability, and the node sizes are proportional to the degree. Red and gray denote positive and negative edges. (\emph{Right}). Extracted subgraph of the movie ratings network with \textit{Star Wars} and \textit{Lord of the Rings} franchises. }
 \label{fig:Ising_movie_sw_lr}
\end{figure}


\begin{figure}[h!]
\centering
\includegraphics[width=1\textwidth]{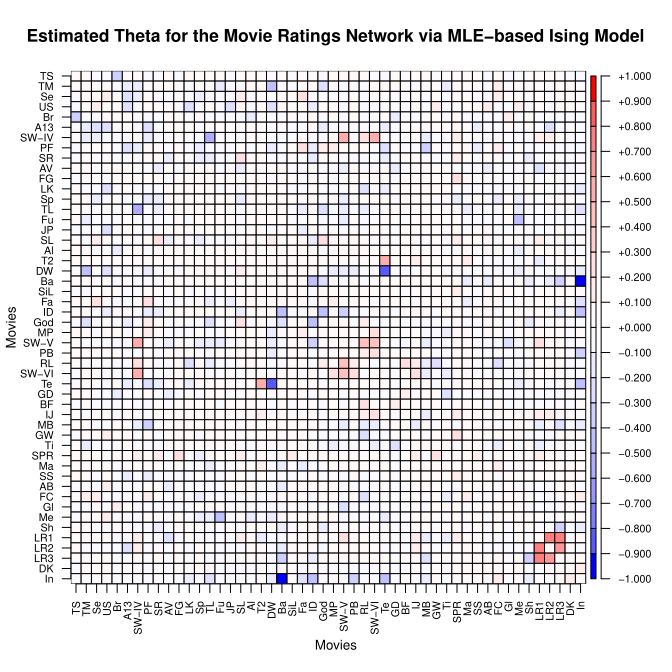}
    \caption{Parameter estimates for the movie ratings network.}
\label{fig:Ising_movie_estimated_theta}
\end{figure}





\clearpage

\subsection{Additional results for breast cancer network}

Additional results on the breast cancer miRNA networks from the full-likelihood fit and pseudo-likelihood fit of PGM discussed in Section \ref{sec:PGM_breast} are presented below. The dependency structures among miRNA expression profiles of 353 genes modeled by the aforementioned methods are comparable, as shown in Figures \ref{fig:PGM_breast_cancer_graph_MLE_neg} and~\ref{fig:PGM_breast_cancer_graph_MPLE_neg}. These two figures contain subgraphs of the resulting networks of the two methods. Common edges in from the two networks are colored blue. 


\begin{figure}[h!]
\centering
\includegraphics[width=0.9\textwidth, height = 15cm]{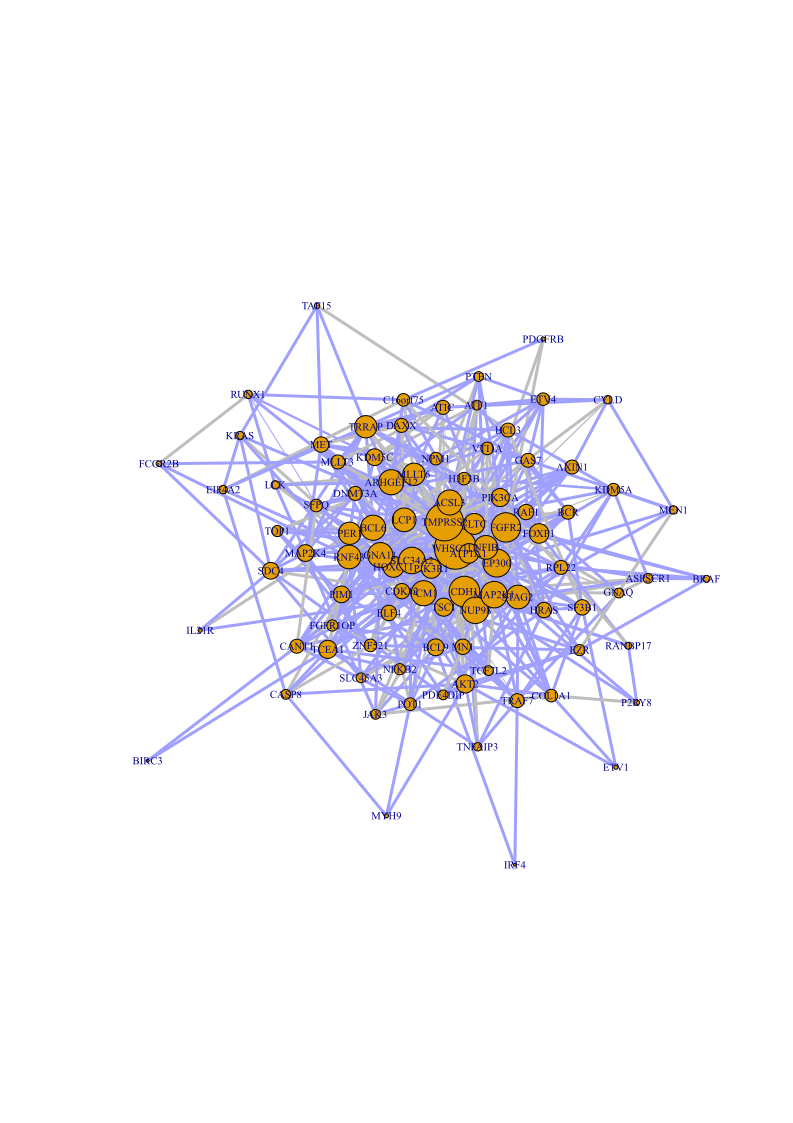}
    \caption{Subgraph of a PGM netwrok for the Breast Cancer miRNA Network obtained using the full-likelihood method. Nodes with degrees of at least $5$ in the full network are plotted here. The edge widths are proportional to the stable selection probability, and the node sizes are proportional to the degree. Blue edges appear in both full-likelihood and pseudo-likelihood networks. }
\label{fig:PGM_breast_cancer_graph_MLE_neg}
\end{figure}

\begin{figure}[h!]
\centering
\includegraphics[width=0.9\textwidth, height = 15cm]{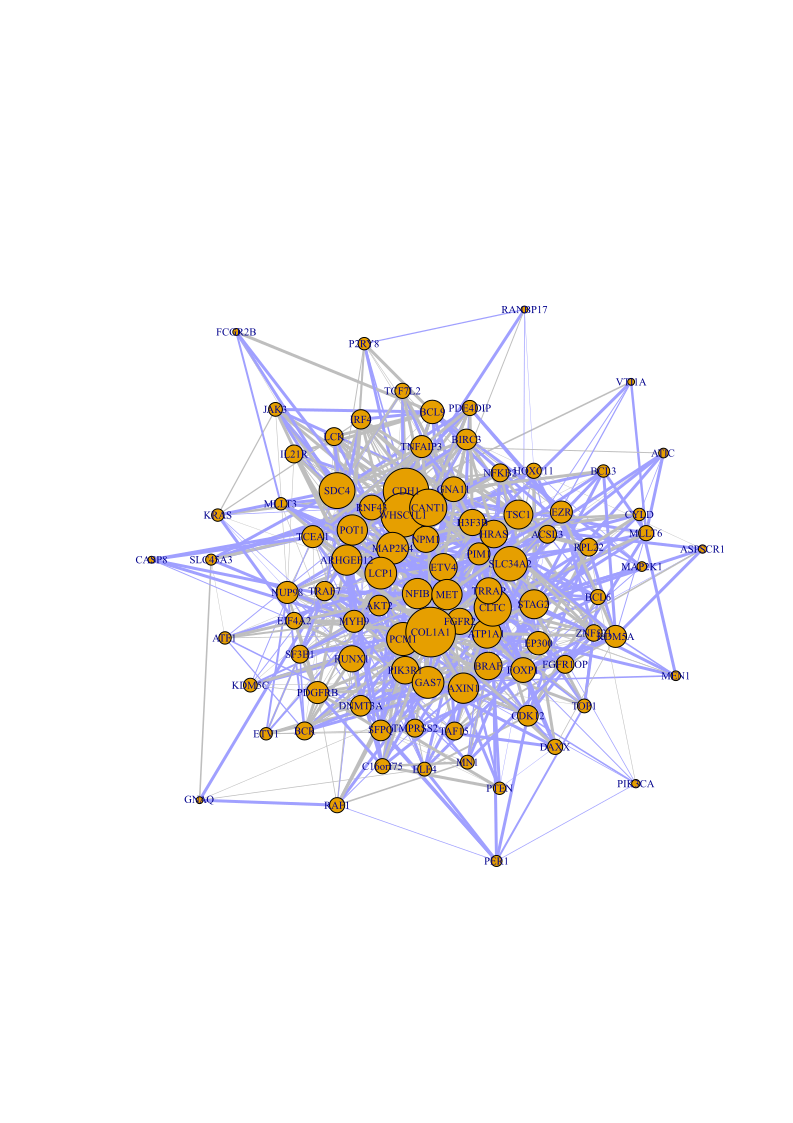}
    \caption{Subgraph of a PGM netwrok for the Breast Cancer miRNA Network obtained using the pseudo-likelihood method. Nodes with degrees of at least $5$ in the full network are plotted here. The edge widths are proportional to the stable selection probability, and the node sizes are proportional to the degree. Blue edges appear in both full-likelihood and pseudo-likelihood networks.}
\label{fig:PGM_breast_cancer_graph_MPLE_neg}
\end{figure}

\end{document}